\def\fontsettingup{2} 
\newtheorem{theorem}{Theorem}%[section]
\newtheorem*{claim*}{Claim}
\newtheorem{example}[theorem]{Example}
\newtheorem{lemma}[theorem]{Lemma}
\newtheorem{proposition}[theorem]{Proposition}
\newtheorem{corollary}[theorem]{Corollary}
\theoremstyle{definition}
\newtheorem{definition}[theorem]{Definition}
\newtheorem{construction}[theorem]{Construction}
\newtheorem{remark}[theorem]{Remark}
\newtheorem*{remark*}{Remark}
  \def\*#1{\mathbf{#1}} % Use \*A for \mathbf{A}
  \def\+#1{\mathcal{#1}} % Use \+A for \mathcal{A}
  \def\-#1{\mathrm{#1}} % Use \-A for \mathrm{A}
  \def\=#1{\mathbb{#1}} % Use \=A for \mathbb{A}
  \def\!#1{\mathfrak{#1}} % Use \!A for \mathfrak{A}
  \def\*#1{\boldsymbol{#1}} % Use \*A for \mathbf{A}
  \def\+#1{\mathcal{#1}} % Use \+A for \mathcal{A}
  \def\-#1{\mathrm{#1}} % Use \-A for \mathrm{A}
  \def\=#1{\mathbb{#1}} % Use \=A for \mathbb{A}
  \def\!#1{\mathfrak{#1}} % Use \!A for \mathfrak{A}
\def\oPr{\mathop{\mathbf{Pr}}}
\renewcommand{\Pr}[2][]{ \ifthenelse{\isempty{#1}}
  {\oPr\left[#2\right]}
  {\oPr_{#1}\left[#2\right]} } % Use \Pr[a]{b} for \mathbf{Pr}_a[b], \Pr{b} for  \mathbf{Pr}[b]
\def\oE{\mathop{\mathbb{E}}}
\newcommand{\E}[2][]{ \ifthenelse{\isempty{#1}}
  {\oE\left[#2\right]}
  {\oE_{#1}\left[#2\right]} }
\def\oVar{\mathbf{Var}}
\NewDocumentCommand{\Var}{ O{} O{} m }{
  \ifthenelse{\isempty{#1}} {
    \ifthenelse{\isempty{#2}} {
      \oVar\left[#3\right]
    } {
      \oVar^{#2}\left[#3\right]
    }
  } {
    \ifthenelse{\isempty{#2}} {
      \oVar_{#1}\left[#3\right]
    } {
      \oVar_{#1}^{#2}\left[#3\right]
    }
  }
}
\def\oEnt{\mathbf{Ent}}
\NewDocumentCommand{\Ent}{ O{} O{} m }{
  \ifthenelse{\isempty{#1}} {
    \ifthenelse{\isempty{#2}} {
      \oEnt\left[#3\right]
    } {
      \oEnt^{#2}\left[#3\right]
    }
  } {
    \ifthenelse{\isempty{#2}} {
      \oEnt_{#1}\left[#3\right]
    } {
      \oEnt_{#1}^{#2}\left[#3\right]
    }
  }
}
\newcommand{\DTV}[2]{\-D_{\mathrm{TV}}\left({#1},{#2}\right)}
\renewcommand{\epsilon}{\varepsilon}
\renewcommand{\emptyset}{\varnothing}
\newcommand{\norm}[1]{\left\Vert#1\right\Vert}
\newcommand{\set}[1]{\left\{#1\right\}}
\newcommand{\tuple}[1]{\left(#1\right)}
\newcommand{\tp}{\tuple}
\newcommand{\ol}{\overline}
\newcommand{\wh}{\hat}
\newcommand{\abs}[1]{\left\vert#1\right\vert}
\newcommand{\HSRV}{local Poincar\'e inequality}
\newcommand{\DCP}{decoupling}
\newcommand{\JS}{\mathrm{JS}}
\title{Faster Mixing of the Jerrum-Sinclair Chain}
\def\AND{\hspace{2ex}}
\author{\small Xiaoyu Chen\thanks{State Key Laboratory for Novel Software Technology, New Cornerstone Science Laboratory, Nanjing University, China. Emails: \texttt{\{chenxiaoyu233, zheju, miaotianshun, zhangxy\}@smail.nju.edu.cn, yinyt@nju.edu.cn}},
  \AND Weiming Feng\thanks{School of Computing and Data Science, The University of Hong Kong, China. Email: \texttt{wfeng@hku.hk}},
  \AND Zhe Ju\footnotemark[1],
  \AND Tianshun Miao\footnotemark[1],
  \AND Yitong Yin\footnotemark[1],
  \AND Xinyuan Zhang\footnotemark[1]}
\date{}
\begin{document}

\maketitle

\begin{abstract}
We show that the Jerrum-Sinclair Markov chain on matchings mixes in time $\widetilde{O}(\Delta^2 m)$ on any graph with $n$ vertices, $m$ edges, and maximum degree $\Delta$, for any constant edge weight $\lambda>0$.
%
%We show that for any graph $G=(V,E)$ with  $n$ vertices, $m$ edges, and maximum degree $\Delta$, the Jerrum-Sinclair Markov chain on matchings in $G$ with any constant edge weight $\lambda>0$ mixes in time $\widetilde{O}(\Delta^2 m)$.
%
For general graphs with arbitrary, potentially unbounded $\Delta$,
this provides the first improvement over the classic $\widetilde{O}(n^2 m)$ mixing time bound of Jerrum and Sinclair~(1989) and Sinclair~(1992).

To achieve this, we develop a general framework for analyzing mixing times, combining ideas from the classic canonical path method with the ``local-to-global'' approaches recently developed in high-dimensional expanders, introducing key innovations  to both techniques.

%and the more recent $\widetilde{O}(\Delta^{\Delta^2} m)$ mixing time bound for the Glauber dynamics from Chen, Liu, and Vigoda (2021).
  
% Our mixing time bound follows from an improved $\Omega(\frac{1}{\Delta m})$ lower bound  on the Poincar\'e constant and a new  $\widetilde{\Omega}(\frac{1}{\Delta^2 m})$ lower bound  on the log-Sobolev constant.  
% %
% To establish these functional inequalities for the Jerrum-Sinclair chain,
% we develop a general framework that introduces a novel ``local-to-global'' approach, lifting a \emph{local} variant of these functional inequalities to their global counterparts. 
% This local perspective offers greater flexibility, enabling a new analytical paradigm, which we call \emph{transport flow}. 
% This method bounds the local functional decay by routing between locally coupled pairs of states, overcoming the worst-case length barrier of the classical canonical path argument in the analysis of the Jerrum-Sinclair chain.
%
%   Our technique is not a direct improvement of the canonical path method or high-dimensional expander theory; instead, it offers a new framework that aligns with their underlying principles.
%   %
% Our framework presents a novel ``local-to-global'' argument that lifts \emph{local functional inequalities} to their global counterparts. This ``local'' version of functional inequalities is more flexible, allowing a new analytical paradigm we refer to as \emph{transport flow}, which involves routings between ``locally coupled'' pairs of states, thereby overcoming the worst-case length barrier of the classic canonical path method.
\end{abstract}

\thispagestyle{empty}

\newpage

\tableofcontents

\thispagestyle{empty}

\newpage 

\setcounter{page}{1}

\section{Introduction }\label{sec:introduction}

Markov chain Monte Carlo (MCMC) is a fundamental paradigm for approximate sampling and counting.
The method involves simulating a Markov chain for sufficiently many steps to generate approximate samples from its stationary distribution.
A key challenge in MCMC is analyzing the \emph{mixing time}, 
which quantifies how many steps required for the chain  to reach near-stationarity and produce reliable samples.

A pioneering contribution to MCMC theory is the seminal work of Jerrum and Sinclair~\cite{jerrum1989approximating}, 
which introduced the \emph{canonical path} method for analyzing Markov chain mixing times.
Originally developed for the Jerrum-Sinclair chain in the context of sampling matchings,
this landmark result has since become a standard topic in various textbooks on randomized algorithms and Markov chains \cite{jerrum2003counting,MR10,levin2017markov},
and laid the foundation for numerous advances in approximate counting and sampling.
Key applications include approximating the permanent \cite{jerrum1989approximating,diaconis1991geometric,jerrum2004polynomial,dyer2017switch} and the Ising model \cite{jerrum1993polynomial,martinelli1994two,goldberg2003computational,guo2017random,dyer2021polynomial,feng2023swendsen}; 
sampling and counting combinatorial objects such as matroid bases \cite{feder1992balanced}, knapsack solutions \cite{dyer1993mildly,morris2004random}, contingency tables \cite{cryan2006rapidly,bezakova2007sampling}, and $H$-colorings \cite{dyer2006systematic};
rapid mixing of local Markov chains for generating random regular graphs \cite{kannan1999simple,cooper2005sampling,tomas2006local,cooper2009flip,greenhill2011polynomial,cooper2019flip,erdos2022mixing} and Glauber dynamics on trees \cite{berger2005glauber,lucier2009glauber,goldberg2010mixing, delcourt2020glauber, dyer2021counting,eppstein2023rapid,CCFV25};
as well as many other key developments and applications \cite{diaconis1991geometric, sinclair1992improved,fontes1998spectral,volker2004expansion,hoory2005simple,benjamini2008long,cryan2008random,colin2011cover,mcquillan2013approximating,crosson2016quantum,caputo2016dynamics,huang2016canonical,yang2016computational,cai2019approximability,david2023improved,frieze2023subexponential,olesker2024geometric}.
%Today, the result has become as a standard topic in various textbooks on randomized algorithms and Markov chains \cite{MR10, jerrum2003counting,levin2017markov}. 
%profoundly shaping our understanding of randomness in computation.

The Jerrum-Sinclair chain was originally introduced for sampling matchings. %, and its weighted generalization, known as the monomer-dimer model. 
Let $G=(V,E)$ be an undirected graph, and let $\lambda > 0$ be an edge weight.
The {monomer-dimer model} on $G$  defines a distribution $\mu$, known as the \emph{monomer-dimer distribution}, over all matchings of $G$, given by
%The distribution $\mu$ for the \emph{monomer-dimer model} on $G$ with edge weight $\lambda > 0$, i.e. the distribution of the $\lambda$-weighted matchings in $G$, is defined as
\begin{align*}
\forall M \in \+M_G, \quad \mu(M) \propto \lambda^{\abs{M}},
\end{align*}
where $\+M_G$ denotes the set of matchings in the graph $G=(V,E)$.
%
% {\color{red}
% In order to sample from the monomer-dimer model, Jerrum and Sinclair~\cite{jerrum1989approximating} introduced a Metropolis-style Markov chain $P_{\-{JS}}$, which is called the Jerrum-Sinclair chain.
% }
% To describe the transition rule of the Jerrum-Sinclair chain, we first introduce some notations. 
Given a matching $M \in \+M_G$, we say a vertex $v$ is saturated if it is matched by an edge $e$ in $M$, i.e., if $v\in e \in M$.
%; and we say a vertex $v$ is saturated if it is saturated by some neighboring edge. 

The Jerrum-Sinclair chain $P_{\-{JS}}$ updates a matching $X_t$ to $X_{t+1}$ according to the following rules:
\begin{enumerate}
\item Select an edge $e = \{u,v\} \in E$ uniformly at random.
\item Propose a ``candidate'' matching $M$ for $X_{t+1}$ according to the following rules:
\begin{enumerate}
\item (\emph{down} transition) If $e\in X_t$, set $M \gets X_t \setminus \{e\}$.
\item (\emph{up} transition) If both endpoints $u$ and $v$ are not saturated in $X_t$, set $M \gets X_t \cup \{e\}$.
\item (\emph{exchange} transition) If one endpoint of $e=\{u,v\}$ is saturated  and the other is not,  say $u$ is saturated by an edge $f$ and $v$ is not saturated, set $M \gets X_t \cup \{e\} \setminus \{f\}$.
\item Otherwise (if both endpoints $u$ and $v$ are saturated but $e \not \in X_t$), set $M \gets X_t$.
\end{enumerate}
\item Accept the proposal by setting $X_{t+1} \gets M$ with probability $\min \set{1,\frac{\mu(M)}{\mu(X_t)}}$; reject the proposal and set $X_{t+1} \gets X_t$ with the remaining probability.
\end{enumerate}

Additionally, consider the $1/2$-lazy Jerrum-Sinclair chain $P_{\text{zz}}=\frac{1}{2}(P_{\mathrm{JS}}+I)$, where in each step, it performs the transition of $P_{\mathrm{JS}}$ with half probability and stays in the current state with the other half probability. 
It is well-known that $\mu$ is the unique stationary distribution of both $P_{\mathrm{JS}}$ and $P_{\text{zz}}$. 

The mixing time of a Markov chain $P$ with stationary distribution $\mu$ is defined as:
\begin{align*}
T_{\mathrm{mix}}(P,\epsilon) \triangleq \min \left\{t \geq 0 \mid \max_{x \in \Omega} \DTV{P^t(x,\cdot)}{\mu} \leq \epsilon  \right\}\quad \text{ and }\quad T_{\mathrm{mix}}(P)\triangleq T_{\mathrm{mix}}\tp{P,\frac{1}{2\mathrm{e}}},
\end{align*}
where $\DTV{\nu}{\mu} = \frac{1}{2}\sum_{x \in \Omega} \abs{\nu(x) - \mu(x)}$ denotes the total variation distance.

\filbreak
The following result on the rapid mixing of the Jerrum-Sinclair chain is well-known:
\begin{theorem}[Jerrum and Sinclair~\cite{jerrum1989approximating,sinclair1992improved}]\label{thm:Jerrum-Sinclair}
Let $G=(V,E)$ be a simple undirected graph with $n$ vertices and $m$ edges.
Let $\mu$ be the monomer-dimer distribution on $G$ with edge weight $\lambda>0$.
The mixing time of the $1/2$-lazy Jerrum-Sinclair chain $P_{\mathrm{zz}}$ for $\mu$ satisfies
\begin{align*}
    T_{\mathrm{mix}}(P_{\mathrm{zz}}) = {O}_{\lambda}\left(m n^2 \log n\right).
\end{align*}
\end{theorem}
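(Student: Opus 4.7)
The plan is to use the \emph{canonical paths} technique that Jerrum and Sinclair originally introduced precisely for this problem. For each ordered pair of matchings $\tp{I,F} \in \+M_G \times \+M_G$, I would define a canonical path $\gamma_{IF}$ from $I$ to $F$ in the state graph of $P_{\mathrm{JS}}$, bound the resulting congestion $\rho$, and then invoke the standard inequality relating canonical-path congestion to the spectral gap, namely $\mathrm{Gap}(P_{\mathrm{zz}}) \geq 1/\rho$, together with the mixing-time estimate $T_{\mathrm{mix}}(P_{\mathrm{zz}}) \leq \rho \log(1/(2\mathrm{e}\mu_{\min}))$ for the lazy reversible chain.

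The first step is to define the paths. Since $I \oplus F$ has maximum degree $2$, it decomposes into a disjoint union of simple paths and even cycles whose edges alternate between $I$ and $F$. Fix an arbitrary total order on the edges of $G$; this induces a canonical ordering on the components of $I \oplus F$ and a canonical starting edge and orientation within each component. Process components in order: for each one, first perform a single down or exchange transition that ``breaks'' the alternating structure at the starting edge, then perform a sequence of exchange transitions that slide the resulting unmatched endpoint along the alternating component, and finally an up transition if needed to place the last $F$-edge. This produces a sequence of legal Jerrum-Sinclair moves of length $O(n)$ from $I$ to $F$.

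The second and crucial step is the \emph{encoding} argument. For each transition $\tp{M,M'}$ of $P_{\mathrm{JS}}$, let $\+C(M,M') = \set{(I,F) : \gamma_{IF} \text{ uses } \tp{M,M'}}$. I would construct an injection $\eta : \+C(M,M') \to \+M_G \times \+S$ where $\+S$ is a set of ``status markers'' of size $O(1)$, roughly defined by $\eta(I,F) = (N,\sigma)$ with $N := I \oplus F \oplus M$ after a small adjustment for the component currently being processed; the marker $\sigma$ records which component is active and its processing stage. The construction ensures that $(I,F)$ is uniquely recoverable from $(M,N,\sigma)$, and moreover that $\abs{I}+\abs{F}$ equals $\abs{M}+\abs{N}$ up to $O(1)$, which yields the weight-preservation bound
\begin{align*}
\mu(I)\,\mu(F) \;\leq\; C_\lambda \cdot \mu(M)\,\mu(N)
\end{align*}
for some constant $C_\lambda$ depending only on $\lambda$. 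Summing over $\+C(M,M')$ gives $\sum_{(I,F) \in \+C(M,M')} \mu(I)\mu(F) \leq O_\lambda(\mu(M))$, and since $P_{\mathrm{JS}}(M,M') = \Omega_\lambda(1/m)$ and the paths have length $O(n)$, the congestion is $\rho = O_\lambda(mn)$.

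To conclude, since there are at most $n!$ matchings and $\mu$ assigns weight at least $\lambda^{-O(n)}$ to any matching after normalization, $\log(1/\mu_{\min}) = O_\lambda(n \log n)$. Substituting into the mixing-time bound yields $T_{\mathrm{mix}}(P_{\mathrm{zz}}) \leq \rho \cdot \log(1/(2\mathrm{e}\mu_{\min})) = O_\lambda(mn \cdot n\log n) = O_\lambda(mn^2 \log n)$. The principal obstacle is the encoding: one must verify that away from the component currently being processed, $M \cup N$ agrees with $I \cup F$ edge-for-edge, so that any ``defect'' in the weight comparison is confined to a single alternating component and contributes only a uniformly bounded multiplicative factor $C_\lambda$. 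For $\lambda = 1$ this is automatic because all matchings have equal weight; for general $\lambda$, the bounded confinement of the defect to one component is exactly what allows the $\lambda$-dependence to be absorbed into the constant hidden in $O_\lambda(\cdot)$.
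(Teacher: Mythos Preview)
Your proposal is correct and follows exactly the classical Jerrum--Sinclair canonical-path argument. The paper does not itself prove \Cref{thm:Jerrum-Sinclair}; it is stated as background and attributed to~\cite{jerrum1989approximating,sinclair1992improved}, with only the one-line remark that the bound follows from an $\Omega_{\lambda}(1/(mn))$ Poincar\'e-constant lower bound via canonical paths, the extra $\Theta(n)$ factor coming from path length. Your outline---decompose $I\oplus F$ into alternating paths and even cycles, process components via down/exchange/up moves, use the encoding $N \approx I\oplus F\oplus M$ with a constant-size marker to obtain $\mu(I)\mu(F)\le C_\lambda\,\mu(M)\mu(N)$, sum to get congestion $O_\lambda(mn)$, and finish with $\log(1/\mu_{\min})=O_\lambda(n\log n)$---is precisely that argument.

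Two very minor remarks. First, the auxiliary marker set $\+S$ is not strictly necessary in the original encoding: once the transition $(M,M')$ is given, the active component and the current position within it are determined, so the map into $\+M_G$ alone (after the one-edge adjustment) is already essentially injective; but carrying an $O(1)$-size marker does no harm. Second, the bound ``at most $n!$ matchings'' is a bit casual; the cleaner justification of $\log(1/\mu_{\min})=O_\lambda(n\log n)$ is that any matching has at most $n/2$ edges, so $Z\le n^{n}\cdot\max(1,\lambda)^{n/2}$ and $\mu_{\min}\ge \min(1,\lambda)^{n/2}/Z$.
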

This mixing time bound follows from a lower bound of $\Omega_{\lambda}\left(\frac{1}{mn}\right)$ on the Poincar\'e constant of the Jerrum-Sinclair chain, established via the canonical path method.  The extra $\Theta(n)$ factor in the  Poincar\'e bound arises from its dependence on the length of the canonical paths \cite{sinclair1992improved,diaconis1991geometric}.

Compared to other approaches for analyzing mixing times, the canonical path method offers distinct advantages in many settings where it applies.
For certain fundamental problems, it remains the only known viable approach.
A prominent example is the Jerrum-Sinclair chain for sampling matchings:
A well-known barrier result by Kumar and Ramesh \cite{anil2001coupling} shows that any Markovian coupling for this chain on perfect and near-perfect matchings requires exponential time to coalesce, 
whereas the chain is provably rapidly mixing via the canonical path method~\cite{jerrum1989approximating}.

More recently, a new class of techniques based on high-dimensional expanders (HDX) has led to major advances in analyzing mixing times \cite{anari2019logconcave, alev2020improved, anari2020spectral, chen2021optimal}.
Notably, Chen, Liu and Vigoda \cite{chen2021optimal} established a mixing time bound of ${O}_\lambda(\Delta^{\Delta^{2}} m\log n)$ for the Glauber dynamics of the monomer-dimer model.
For graphs with constant maximum degree $\Delta=O(1)$, this achieves an optimal near-linear mixing time.
However, for general graphs with potentially unbounded $\Delta$, 
the classic bound in \Cref{thm:Jerrum-Sinclair} remains the state-of-the-art.
This reflects an inherent limitation of current HDX-based approaches,  which suffer from an exponential dependence on a correlation decay factor.
For the monomer-dimer model, this decay factor is $\Theta(\sqrt{\lambda\Delta})$ \cite{bayati2007simple}.

Despite the foundational importance and methodological significance  of the Jerrum-Sinclair chain,  little progress has been made in improving its mixing time on general unbounded-degree graphs since \Cref{thm:Jerrum-Sinclair}.
Surpassing this bound appears to require significant innovations in both the canonical path method and newer HDX-based techniques.

In this work, we prove that the Jerrum-Sinclair chain for monomer-dimers mixes substantially faster than the classic bound in \Cref{thm:Jerrum-Sinclair}.
Specifically, we establish the following theorem:

\begin{theorem}\label{thm:main-simple}
Let $G=(V,E)$ be a simple undirected graph with $n$ vertices, $m$ edges, and maximum degree~$\Delta$.
Let $\mu$ be the monomer-dimer distribution on $G$ with edge weight $\lambda>0$.
The mixing time of the $1/2$-lazy Jerrum-Sinclair chain $P_{\mathrm{zz}}$ for $\mu$ satisfies
\begin{align*}
    T_{\mathrm{mix}}(P_{\mathrm{zz}}) = {O}_{\lambda}\left(\Delta m\cdot \min\left\{n,\Delta\log\Delta\cdot \log n \right\} \right)=\widetilde{O}_{\lambda}(\Delta^2 m).
\end{align*}
\end{theorem}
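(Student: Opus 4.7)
The plan is to attack the two factors of $n$ in the classical bound of \Cref{thm:Jerrum-Sinclair} separately: one $n$ comes from the length of canonical paths, and the other reflects the global-to-pointwise entropic/spectral comparison loss. I would remove the first via a local-to-global reduction in the spirit of the \HSRE{} framework from high-dimensional expander (HDX) theory, and the second by replacing global canonical paths with short local ones whose length is controlled by $\Delta$ instead of $n$.

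For the local-to-global step, I would set up a \DCP{} framework on the matching complex. A random pinning operation reveals the status of each edge (matched, blocked by a saturated endpoint, or still free) independently with some constant probability, yielding a conditional distribution $\mu_\Lambda$ on the remaining edges. Using an HDX-style amplification tailored to matchings, I aim to establish an inequality of the form
\begin{align*}
\Ent{f} \;\leq\; C_{\mathrm{glob}}\cdot \E[\Lambda]{\Ent[\mu_\Lambda]{f}},
\end{align*}
for every observable $f$, with an amplification constant $C_{\mathrm{glob}}$ depending only on $\lambda$ and not on $\Delta$ or $n$. The key innovation here is to avoid the $\exp(\Theta(\sqrt{\lambda\Delta}))$ blowup that plagues naive spectral-independence analyses of the monomer--dimer model, by leveraging matching-specific correlation-decay structure (of Heilmann--Lieb / Bayati--Gamarnik--Katz--Nair--Tetali type) inside carefully chosen \HSRV{} and \HSRE{} rather than through pure HDX machinery.

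For the local step, I would analyze the chain on each $\mu_\Lambda$ by canonical paths. After pinning, the symmetric differences between two matchings decompose into alternating components whose typical lengths are $\widetilde{O}(\Delta)$ rather than $\Theta(n)$, so the classical Jerrum--Sinclair encoding yields a congestion bound of $\widetilde{O}_\lambda(\Delta\cdot m)$ and hence a local Poincar\'e constant $\widetilde{\Omega}_\lambda(1/(\Delta m))$. Combining this with the local-to-global inequality at the entropy level, and paying a $\Delta\log\Delta$ amplification for the Poincar\'e-to-log-Sobolev conversion, gives a global log-Sobolev constant $\widetilde{\Omega}_\lambda(1/(\Delta^2\log\Delta\cdot m))$ and hence mixing in $\widetilde{O}_\lambda(\Delta^2 m\cdot \log n)$. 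Running the same argument at the pure Poincar\'e level, one trades the $\Delta\log\Delta$ cost for the extra $\log(1/\mu_{\min})=\widetilde{O}(n)$ factor in the variance-based mixing-time bound, recovering the $\Delta m\cdot n$ branch of \Cref{thm:main-simple}; the stated bound is the minimum of the two.

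I expect the decisive obstacle to be the first step: establishing the local-to-global \HSRE{} with a $\Delta$-free amplification constant $C_{\mathrm{glob}}=O_\lambda(1)$. The $\sqrt{\lambda\Delta}$ correlation-decay threshold appears intrinsic to any purely HDX-based spectral-independence argument for matchings, so eluding it will likely require a hybrid analysis in which part of the decoupling is discharged through a short canonical-path comparison rather than through spectral independence. Fusing these two classically disparate techniques while preserving dimension-free amplification is precisely the methodological innovation announced in the abstract and should form the heart of the proof.
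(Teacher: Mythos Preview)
Your high-level strategy—combine a local-to-global reduction with short canonical paths—matches the paper's, but the concrete mechanism you propose has a genuine gap in both steps.

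For the local step, you assert that after revealing the status of each edge with constant probability, ``the symmetric differences between two matchings decompose into alternating components whose typical lengths are $\widetilde{O}(\Delta)$.'' There is no reason this should hold: random pinning of a constant fraction of edges leaves a graph of maximum degree still $\Theta(\Delta)$, and alternating paths between two arbitrary matchings in the residual instance can still have length $\Theta(n)$. The paper's mechanism is quite different and does not involve random subset pinning at this stage. Instead, one fixes a \emph{single} edge $e$, couples $\mu^{e\gets 0}$ with $\mu^{e\gets 1}$ via a specific ``local flipping coupling'' so that the two samples differ on exactly one alternating path or cycle through $e$, and then shows—via the correlation-decay recursion of Bayati et al.—that this single discrepancy component has expected squared length $O(\lambda\Delta)$. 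Canonical paths are then run only between such coupled pairs, not between independent samples; this is what buys the $\Delta$-versus-$n$ savings. The resulting bound is packaged as a \emph{local} functional inequality of the form $\alpha\sum_{e}\Var{\E{f(\*X)\mid \*X_e}}\le \+E_Q(f,f)$ with $\alpha=\Omega_\lambda(1/(\Delta m))$, not as a Poincar\'e inequality for a pinned subinstance.

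For the global step, you aim for an HDX-style entropy factorization $\Ent{f}\le C_{\mathrm{glob}}\E[\Lambda]{\Ent[\mu_\Lambda]{f}}$ with $C_{\mathrm{glob}}=O_\lambda(1)$ independent of $\Delta$. As you yourself note, any purely spectral-independence route here runs into the $\exp(\Theta(\sqrt{\lambda\Delta}))$ barrier, and you do not give a concrete way around it. The paper sidesteps this entirely: its local-to-global theorem is not a multiplicative HDX contraction but an \emph{additive} one-step comparison based on the law of total variance/entropy together with concavity of the Dirichlet form under pinning. Iterating gives $\gamma(Q)^{-1}\le\sum_{k=1}^{m}\frac{1}{k\alpha_k}$, so if the local inequality holds with $\alpha_k=\Omega_\lambda(1/(\Delta k))$ for every pinning of $m-k$ edges, one gets $\gamma(Q)=\Omega_\lambda(1/(\Delta m))$ with no exponential loss. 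The ``dimension-free amplification'' you are looking for is thus not a single constant $C_{\mathrm{glob}}$ but this additive accumulation, and it is available precisely because the local inequality already contains the Dirichlet form of the \emph{full} chain on its right-hand side.
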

This mixing time bound follows from an improved $\Omega_\lambda(\frac{1}{\Delta m})$ lower bound  on the Poincar\'e constant and a new  ${\Omega}_\lambda(\frac{1}{\Delta^2 m \log\Delta })$ lower bound  on the log-Sobolev constant of the Jerrum-Sinclair chain.  
These bounds are formally restated as \Cref{thm:main} (in \Cref{sec:app-JS-chain}).

By a comparison argument, this also implies a $\widetilde{O}_\lambda(\Delta^3 m)$ mixing time bound for the Glauber dynamics of the monomer-dimer model,  formally stated as \Cref{thm:main-Glauber} (in \Cref{sec:faster-mixing-of-monomer-dimer-glauber-dynamics}).

Our key innovations are twofold, and the techniques are outlined in \Cref{sec:technical-results}:
\begin{itemize}
    \item First, we construct \emph{transport flows} between ``locally coupled'' pairs of states, enabling us to bypass the $\Theta(n)$ factor introduced by the length of canonical paths.
This construction gives rise to certain local variants of the the Poincar\'e and log-Sobolev inequalities.
\item Second, we establish a ``local-to-global'' theorem that lifts these \emph{local functional inequalities} to the  global ones. This local-to-global theorem features an ``{additive}'' accumulation of decay, thus allowing us to bypass the exponential dependence on the decay factor as in HDX.
\end{itemize}

\section{Technical Results and Proof Outline}\label{sec:technical-results}
In this section, we present general results on the mixing times of reversible Markov chains over high-dimensional cubes. 
Let $E$ be a finite ground set and let $[q] = \{0,1,\ldots,q-1\}$ be a finite domain.
Consider a distribution $\mu$ over $[q]^E$, and let $\Omega = \Omega(\mu) \subseteq [q]^E$ denote its support.
Let $Q$ be a Markov chain that is reversible with respect to the stationary distribution $\mu$, meaning that for any $x, y \in \Omega$, the detailed balance equation holds:
$\mu(x)Q(x,y) = \mu(y)Q(y,x)$.
To bound the mixing time of $Q$, we analyze the Poincar\'e and log-Sobolev constants associated with the chain $Q$. 

%\subsection{Classical functional inequalities for mixing times}
%First, we introduce some necessary notation.

For any function $f:\Omega \to \mathbb{R}_{\geq 0}$, 
define the variance of $f$ under distribution $\mu$, denoted by $\Var[\mu]{f}$, as the variance of the random variable $F = f(X)$ where $X \sim \mu$, i.e.,
%we use $\Var[\mu]{f} \triangleq \Var{F}$ to denote the variance of the random variable $F = f(X)$ for $X \sim \mu$, i.e.,
\begin{align}\label{eq:def-variance}
\Var[\mu]{f} \triangleq \Var{F} \triangleq \E{F^2} - \E{F}^2.
\end{align}
Similarly, define the entropy of $f$ under distribution $\mu$, denoted by $\Ent[\mu]{f}$, as:
%Define $\Ent[\mu]{f} \triangleq \Ent{F}$ as the entropy-like quantity for the random variable $F = f(X)$ for $X \sim \mu$, 
\begin{align}\label{eq:def-entropy}
\Ent[\mu]{f}\triangleq\Ent{F} \triangleq  \E{F \log F} - \E{F} \log \E{F}.
\end{align}
For any two functions $f,g:\Omega \to \mathbb{R}$, the Dirichlet form associated with the chain $Q$ is given by:
\begin{align*}
  \+E_Q(f,g) \triangleq \frac{1}{2}\sum_{x \in \Omega, y \in \Omega} \mu(x) Q(x,y) \left( f(x) - f(y)\right) \left( g(x) - g(y)\right).
\end{align*}
The Poincar\'e constant $\gamma(Q)$ of $Q$ is defined as:
\begin{align*}
\gamma(Q) = \inf \left\{\frac{\+E_Q(f,f)}{\Var[\mu]{f}} \mid f :\Omega \to \mathbb{R} \land \Var[\mu]{f} > 0\right\}.
\end{align*}
The log-Sobolev constant $\rho(Q)$ of $Q$ is defined as:
\begin{align*}
\rho(Q) = \inf \left\{\frac{\+E_Q(f,f)}{\Ent[\mu]{f^2}} \mid f :\Omega \to \mathbb{R} \land \Ent[\mu]{f^2} > 0\right\}.
\end{align*}
The mixing time of $Q$ can be bounded using the Poincar\'e constant as follows:
\begin{align*}
T_{\mathrm{mix}}\tp{\frac{Q+I}{2},\epsilon} = O\tp{\frac{1}{\gamma(Q)}\tp{\log\frac{1}{\mu_{\min}} + \log\frac{1}{\epsilon}}},
\end{align*}
where $\mu_{\min} = \min_{x \in \Omega} \mu(x)$ and $\frac{Q+I}{2}$ corresponds to the $\frac{1}{2}$-lazied version of the chain $Q$. 
If $Q$ is positive semidefinite, this mixing time bound also holds for $Q$.

The log-Sobolev constant $\rho(Q)$ can provide a sharper bound on the mixing time, specifically: 
\begin{align*}
  T_{\mathrm{mix}}\tp{\frac{Q+I}{2}, \epsilon} = O\tp{\frac{1}{\rho(Q)}\tp{\log\log \frac{1}{\mu_{\min}} + \log\frac{1}{\epsilon}}}.
\end{align*}
This bound on the mixing time, derived from the log-Sobolev constant, also holds for $Q$.

\subsection{Rapid mixing via local functional inequalities}
Our results establish both the Poincar\'e and log-Sobolev inequalities by first introducing new local versions of these inequalities. We then develop a ``local-to-global'' argument to lift these functional inequalities from the local to the global setting.
%To present these results, we first define the notion of conditional distributions.
%We say $\tau \in [q]^{S}$ is an \emph{feasible pinning} if $S$ is a (possibly empty) subset of $E$ and $\mu_S(\tau) > 0$, where $\mu_S$ is the marginal distribution on $S$ projected from $\mu$.

Next, we introduce the new \emph{local} versions of functional inequalities for mixing times.
\begin{definition}[local functional inequalities] \label{def:HS-route}
  Let $\mu$ be a distribution supported over $\Omega \subseteq [q]^E$. 
  Let $Q$ be a reversible chain on $\Omega$ with stationary distribution $\mu$.
  \begin{itemize}
  \item (local Poincar\'e inequality) We say that ${Q}$ satisfies the $\alpha$-\emph{local Poincar\'e inequality} if 
  \begin{align} \label{eq:def-HS-route-V}
    \alpha \cdot \sum_{e\in E} \Var{\E{\*F \mid \*X_e}} \leq  \+E_{Q}(f,f),\quad \forall f: \Omega(\mu) \to \mathbb{R},
  \end{align}
  where $\*X \sim \mu$ and $\*F \triangleq  f(\*X)$.

  %Moreover, if for every feasible pinning $\tau$, $Q^\tau$ satisfies \HSRV{} with constant $\alpha$, then we say \emph{$Q$ satisfies \HSRV{} with rate $\alpha$ for all pinning}.
  \item   (local log-Sobolev inequality) We say $Q$ satisfies the \emph{$\alpha$-local log-Sobolev inequality} if 
  \begin{align} \label{eq:def-HS-route-E}
    \alpha \cdot \sum_{e\in E} \Ent{\E{\*F^2 \mid \*X_e}} \leq  \+E_{Q}(f,f),\quad \forall f: \Omega(\mu) \to \mathbb{R}.
  \end{align}
  where $\*X \sim \mu$ and $\*F \triangleq  f(\*X)$.
  \end{itemize}
\end{definition}
Note that in~\eqref{eq:def-HS-route-V} and~\eqref{eq:def-HS-route-E}, $\E{\*F \mid \*X_e}$ and $\E{\*F^2 \mid \*X_e}$ are both real-valued random variables. The operators $\Var{\cdot}$ and $\Ent{\cdot}$ are as defined in~\eqref{eq:def-variance} and~\eqref{eq:def-entropy}, respectively.

% {\color{red}
% \begin{remark}
% TODO: Explain in what sense these inequalities are the local counterparts of the Poincar\'e and log-Sobolev inequalities.
% \end{remark}
% }

%\todo{YY: Maybe add a remark to discuss the meanings of these local functional inequalities, and in what sense they are some notions of local mixing properties.}

% For convenience of exposition, we focus on establishing local functional inequalities for each individual Markov chain in the family $\mathfrak{Q}$.
% %Let $Q$ be a reversible Markov chain on with stationary distribution $\mu$.
% Specifically, let $Q$ be a reversible Markov chain with stationary distribution $\mu$ over $[q]^E$.
% We say that $Q$ satisfies the $\alpha$-local Poincar\'e inequality if 
%   $$\alpha \cdot \sum_{e\in E} \Var{\E{\*F \mid \*X_e}} \leq  \+E_{Q}(f,f),\quad \forall f: \Omega(\mu) \to \mathbb{R},$$
% where $\*X \sim \mu$ and $\*F=  f(\*X)$.
% Similarly, $Q$ satisfies the $\alpha$-local log-Sobolev inequality if   
%   $$\alpha \cdot \sum_{e\in E} \Ent{\E{\*F^2 \mid \*X_e}} \leq  \+E_{Q}(f,f),\quad \forall f: \Omega(\mu) \to \mathbb{R}.$$ 
% These definitions are consistent with \Cref{def:HS-route}, except that we now consider a single Markov chain rather than a family of chains.

Similar to other mixing properties, such as strong spatial mixing or spectral independence, local functional inequalities are particularly useful when they hold under arbitrary pinnings.
Let $\Lambda \subseteq E$ be a fixed subset. 
A pinning $\tau \in [q]^{E \setminus \Lambda}$ is called feasible if $\tau \in \Omega(\mu_{E \setminus \Lambda})$, where $\mu_{E \setminus \Lambda}$ is the marginal distribution on $E \setminus \Lambda$ induced from $\mu$, and $\Omega(\mu_{E \setminus \Lambda})$ represents its support. 
%and for any distribution $\pi$, we use $\Omega(\pi)$ to denote its support.
Next, define the conditional distribution $\mu^\tau$, which is the distribution of $X \sim \mu$ conditioned on $X_{E \setminus \Lambda} = \tau$.
This distribution is over $[q]^E$, where the values of the variables in $E \setminus \Lambda$ are fixed to $\tau$. 

%Instead of focusing on a single Markov chain $Q$ for the distribution $\mu$, 
We consider a family of reversible Markov chains, each corresponding to a conditional distribution $\mu^\tau$ induced by $\mu$ given a feasible pinning $\tau$. Specifically, suppose we have a family of chains:
\begin{align*}
  \mathfrak{Q}=\mathfrak{Q}(\mu) &= \left\{Q^\tau \mid \tau \text{ is a feasible pinning}\right\},
\end{align*}
where each $Q^\tau$ corresponds to a reversible Markov chain with stationary distribution $\mu^\tau$, 
and the family $\mathfrak{Q}$ encompasses all subsets $\Lambda \subseteq E$ and all feasible pinnings $\tau \in [q]^{E \setminus \Lambda}$.

The following defines local functional inequalities under arbitrary pinning.
\begin{definition}[local functional inequalities under pinnings] 
\label{def:HS-route-family}
%  Let $\mu$ be a distribution supported over $\Omega \subseteq [q]^E$. 
Let $\mathfrak{Q}$ be a family of reversible chains corresponding to the conditional distributions induced by a distribution $\mu$ supported on $\Omega \subseteq [q]^E$.

  We say that $\mathfrak{Q}$ satisfies $(\alpha_1,\alpha_2,\ldots,\alpha_{\abs{E}})$-\emph{local Poincar\'e inequalities} 
  if for any non-empty $\Lambda \subseteq E$ and any feasible pinning $\tau \in [q]^{E \setminus \Lambda}$, 
  the chain $Q^\tau\in\mathfrak{Q}$ satisfies the $\alpha_{|\Lambda|}$-local Poincar\'e inequality.

  Similarly, $\mathfrak{Q}$ satisfies $(\alpha_1,\alpha_2,\ldots,\alpha_{\abs{E}})$-\emph{local log-Sobolev inequalities} 
  if for any non-empty $\Lambda \subseteq E$ and any feasible pinning $\tau \in [q]^{E \setminus \Lambda}$,
  the chain $Q^\tau\in\mathfrak{Q}$ satisfies the $\alpha_{|\Lambda|}$-local log-Sobolev inequality.
  % \begin{itemize}
  % \item (local Poincar\'e inequalities) We say that $\mathfrak{Q}$ satisfies $(\alpha_1,\alpha_2,\ldots,\alpha_{\abs{E}})$-\emph{local Poincar\'e inequalities} 
  % if for any non-empty $\Lambda \subseteq E$, any feasible pinning $\tau \in [q]^{E \setminus \Lambda}$ and any $f:\Omega(\mu^{\tau}) \to \mathbb{R}$, 
  % \begin{align} \label{eq:def-HS-route-V}
  %   \alpha_{\abs{\Lambda}} \cdot \sum_{e\in\Lambda} \Var{\E{\*F \mid \*X_e}} \leq  \+E_{Q^\tau}(f,f),
  % \end{align}
  % where $\*X \sim \mu^\tau$ and $\*F \triangleq  f(\*X)$.

  % %Moreover, if for every feasible pinning $\tau$, $Q^\tau$ satisfies \HSRV{} with constant $\alpha$, then we say \emph{$Q$ satisfies \HSRV{} with rate $\alpha$ for all pinning}.
  % \item   (local log-Sobolev inequalities) We say $\mathfrak{Q}$ satisfies $(\alpha_1,\alpha_2,\ldots,\alpha_{\abs{E}})$-\emph{local log-Sobolev inequalities} if for any non-empty $\Lambda \subseteq E$, any feasible pinning $\tau \in [q]^{E \setminus \Lambda}$, and any $f:\Omega(\mu^{\tau}) \to \mathbb{R}$,
  % \begin{align} \label{eq:def-HS-route-E}
  %   \alpha_{\abs{\Lambda}} \cdot \sum_{e\in\Lambda} \Ent{\E{\*F^2 \mid \*X_e}} \leq  \+E_{Q^\tau}(f,f),
  % \end{align}
  % where $\*X \sim \mu^\tau$ and $\*F \triangleq  f(\*X)$.
  % \end{itemize}
\end{definition}

%The Poincar\'e and log-Sobolev constants can be derived from these local inequalities.
%We need the notation of the conditional distribution to give the result.
%For any subset $\Lambda \subseteq E$, we say $\tau \in [q]^\Lambda$ is a feasible pinning if $\mu_\Lambda(\tau) > 0$, where $\mu_\Lambda$ is the marginal distribution on $\Lambda$ projected from $\mu$.
%Let $\mu^\tau$ be the distribution of $X \sim \mu$ conditioned on $X_\Lambda = \tau$.
%We remark that $\mu^\tau$ is a distribution over $[q]^E$ such that the values in $\Lambda$ is fixed to $\tau$. 

Intuitively, we require all chains in the family $\mathfrak{Q}$ are, in some sense, ``the same'' Markov chain. 
This is captured by the following significantly relaxed notion of concavity for Dirichlet forms.
%The local functional inequalities can imply their global counterparts if the Dirichlet forms satisfy a certain concavity condition, stated in the following definition.

\begin{definition}[concave Dirichlet forms]\label{def:concave-Dirichlet-forms}
  Let $\mu$ be a distribution over $[q]^E$ with support $\Omega$.
  Let $\mathfrak{Q}$ be a family of reversible chains corresponding to the conditional distributions induced by $\mu$.
  We say that $\mathfrak{Q}$ has \emph{concave Dirichlet forms} if, 
  for any $Q^\tau \in \mathfrak{Q}$, where $\tau \in [q]^{E \setminus\Lambda}$, and any $f:\Omega(\mu^\tau) \to \mathbb{R}$,
    \begin{align*}
   \frac{1}{|\Lambda|} \sum_{e \in \Lambda} \E[c \sim \mu^\tau_e]{\+E_{Q^{\tau \land (e \gets c)}}(f,f)} \leq \+E_{Q^\tau}(f,f),
  \end{align*}
  where $\tau \land (e \gets c)$ denotes the pinning obtained from extending $\tau$ by fixing the value of $e$ to $c$.
  %We say a family of Markov chains $\mathfrak{Q}$ admits concave Dirichlet forms if for all $Q^\tau \in \mathfrak{Q}$, $Q^\tau$ admits concave Dirichlet form.
\end{definition}

% \begin{theorem}\label{thm:general-results}
%   Let $\alpha > 0$.
%   Let $\mu$ be a distribution over $[q]^E$ with the support $\Omega$.
%   Let $\mathfrak{Q}$ be a family of Markov chains for all conditional distributions induced by $\mu$.
%   Suppose $\mathfrak{Q}$ has concave Dirichlet forms.
%   \begin{itemize}
%   \item If $\mathfrak{Q}$ admits \HSRV{} , then for any $Q^\tau \in \mathfrak{Q}$ with $\tau \in [q]^\Lambda$, the Poincar\'e constant $\gamma(Q^\tau)$ is at least $\frac{\alpha}{H(m - |\Lambda|)}$, where $m = \abs{E}$ and $H(k) = \sum_{i=1}^k \frac{1}{i}$.
%   \item If $\mathfrak{Q}$ admits \HSRE{} with parameter $\alpha$, then for any $Q^\tau \in \mathfrak{Q}$ with $\tau \in [q]^\Lambda$, the log-Sobolev constant $\rho(Q^\tau)$ is at least $\frac{\alpha}{H(m - |\Lambda|)}$.
%   \end{itemize}
% \end{theorem}
Our main technical result is the following ``local-to-global'' theorem for functional inequalities.

\begin{theorem}\label{thm:general-results}
  Let $\mu$ be a distribution over $[q]^E$ with support $\Omega$, where $m=|E|$.
  Let $\mathfrak{Q}$ be a family of reversible Markov chains corresponding to the conditional distributions induced by $\mu$, and let $Q \in \mathfrak{Q}$ be the chain with stationary distribution $\mu$.
  Suppose that $\mathfrak{Q}$ has concave Dirichlet forms.
  \begin{itemize}
  \item If $\mathfrak{Q}$ satisfies $(\alpha_1,\alpha_2,\ldots,\alpha_{m})$-local Poincar\'e inequalities with $\alpha_i > 0$ for all $i$, then the Poincar\'e constant $\gamma(Q)$ of $Q$ satisfies 
  $$\gamma(Q)\ge \tp{\sum_{k=1}^{m} \frac{1}{k \alpha_{k}}}^{-1}.$$
  \item If $\mathfrak{Q}$ satisfies $(\alpha_1,\alpha_2,\ldots,\alpha_{m})$-local log-Sobolev inequalities with $\alpha_i > 0$ for all $i$, then the log-Sobolev constant $\rho(Q)$ of $Q$ satisfies 
  $$\rho(Q)\ge \tp{\sum_{k=1}^{m} \frac{1}{k  \alpha_{k}}}^{-1}.$$
  \end{itemize}
\end{theorem}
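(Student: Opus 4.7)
The plan is to prove both bounds by the same telescoping scheme, in which we reveal the coordinates of $\*X \sim \mu$ one at a time in a uniformly random order and track how the variance (resp.\ entropy) decreases. For the Poincar\'e bound, I would define, for each $k \in \{0,1,\ldots,m\}$,
\[
V_k \triangleq \E[(\Lambda,\tau)]{\Var[\mu^\tau]{\*F}}, \qquad D_k \triangleq \E[(\Lambda,\tau)]{\+E_{Q^\tau}(f,f)},
\]
where the outer expectation is taken over a uniformly random subset $\Lambda \subseteq E$ with $|\Lambda|=k$ together with a pinning $\tau \sim \mu_{E \setminus \Lambda}$. The boundary values are immediate: $V_0 = 0$ (everything is pinned, so $\mu^\tau$ is a point mass), $V_m = \Var[\mu]{\*F}$, and $D_m = \+E_Q(f,f)$.

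The key identity is the law of total variance, applied to each pinning $\tau$ with $|\Lambda|=k$ and each $e \in \Lambda$:
\[
\Var[\mu^\tau]{\*F} = \E[c \sim \mu^\tau_e]{\Var[\mu^{\tau \land (e \gets c)}]{\*F}} + \Var[\mu^\tau]{\E{\*F \mid \*X_e}}.
\]
Averaging this over $e \in \Lambda$ and then over $(\Lambda,\tau)$, the first term collapses to $V_{k-1}$, because the composite operation of drawing $\Lambda$ of size $k$, then $\tau \sim \mu_{E \setminus \Lambda}$, and finally pinning one more uniformly chosen coordinate $e \in \Lambda$ has the same law as directly drawing a size-$(k-1)$ random pinning. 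Applying the $\alpha_k$-local Poincar\'e inequality to $Q^\tau$ bounds the second term by $\frac{1}{\alpha_k}\+E_{Q^\tau}(f,f)$, yielding the recursion
\[
V_k - V_{k-1} \leq \frac{1}{k\alpha_k}\, D_k.
\]

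Next I would invoke the concavity of Dirichlet forms, which, after the same averaging over $(\Lambda,\tau)$, implies $D_{k-1} \leq D_k$ and hence $D_k \leq D_m = \+E_Q(f,f)$ for every $k \geq 1$. Telescoping from $k=1$ to $m$ then gives $\Var[\mu]{\*F} \leq \+E_Q(f,f) \cdot \sum_{k=1}^{m} \frac{1}{k\alpha_k}$, which is the claimed lower bound on $\gamma(Q)$. The log-Sobolev case follows from exactly the same scheme, with $V_k$ redefined as $\E[(\Lambda,\tau)]{\Ent[\mu^\tau]{\*F^2}}$ and the law of total variance replaced by the entropy chain rule $\Ent[\mu^\tau]{\*F^2} = \E[c \sim \mu^\tau_e]{\Ent[\mu^{\tau \land (e \gets c)}]{\*F^2}} + \Ent[\mu^\tau]{\E{\*F^2 \mid \*X_e}}$; no structural change is required. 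The main bookkeeping hurdle I anticipate is verifying the symmetry identity that collapses the first term of the total variance decomposition to $V_{k-1}$, which hinges on the $\mu$-weighted uniformity of the random pinning process; the concavity hypothesis then enters at a single, essential step, serving to dominate every $D_k$ by the single global Dirichlet form $\+E_Q(f,f)$.
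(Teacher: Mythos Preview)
Your proof is correct and uses the same core ingredients as the paper---the averaged law of total variance (resp.\ entropy chain rule), the local Poincar\'e/log-Sobolev inequality, and the concavity of Dirichlet forms---but organizes them differently. The paper proceeds by a pointwise induction on $|\Lambda|$, proving the stronger statement that $\gamma(Q^\tau) \ge \bigl(\sum_{i=1}^{|\Lambda|} \tfrac{1}{i\alpha_i}\bigr)^{-1}$ holds for \emph{every} feasible pinning $\tau$ (not just the empty one); at each inductive step it invokes concavity once to compare the averaged $\+E_{Q^{\tau\land(e\gets c)}}(f,f)$ back to $\+E_{Q^\tau}(f,f)$. Your telescoping argument instead averages over a uniformly random pinning and uses concavity as the monotonicity statement $D_{k-1}\le D_k$, collapsing every intermediate Dirichlet form to the single global one $\+E_Q(f,f)$. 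Your route is arguably cleaner for the theorem as stated, and is exactly the coordinate-by-coordinate instance of the localization-scheme generalization the paper gives in its appendix; the paper's inductive route buys the additional pointwise bound on every $Q^\tau$, which your averaging argument does not recover.
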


% {\color{red}
% \begin{remark}
% TODO: Explain in what sens this is a ``local-to-global'' theorem. It is not like the ``local-to-global'' theorem for HDX, which translaste the mixing of local random walks to that of a global random walk.
% \end{remark}
% }

\begin{remark}\label{rem:general-results-remark}
Compared to other existing local-to-global theorems established for high-dimensional expanders \cite{anari2019logconcave, cryan2021modified,alev2020improved,anari2020spectral}, 
which typically exhibit a multiplicative accumulation of decay, \Cref{thm:general-results} achieves an additive accumulation of decay.
%This is critical for a good dependence on $\Delta$ in our bound.
In particular, if $\mathfrak{Q}$ satisfies an $(\alpha_1,\alpha_2,\ldots,\alpha_{m})$-local functional inequality with $\alpha_i = \Omega\tp{\frac{1}{a\cdot i}}$ for some $a>0$, then by \Cref{thm:general-results}, the corresponding Poincar\'e or log-Sobolev constant of the chain $Q$ is at least $\Omega(\frac{1}{a\cdot m})$.
\end{remark}

\Cref{thm:general-results}  is proved in \Cref{sec:local-to-global}.
\Cref{thm:general-results} applies to a broad class of Markov chains. 
For instance, it can be applied to the family of Metropolis-Hastings dynamics,  
where each $Q^\tau$ corresponds to the Metropolis-Hastings dynamics for the conditional distribution $\mu^\tau$, 
as well as for the family of Glauber dynamics, 
where each $Q^\tau$ corresponds to the Glauber dynamics for $\mu^\tau$. 
For these families of chains, the concavity of the Dirichlet forms can be verified straightforwardly.

   In \Cref{sec:localization-scheme}, we further present a generalization of \Cref{thm:general-results} to localization schemes.

%Next, we introduce a new approach for establishing these local functional inequalities.

\subsection{Local functional inequalities via transport flow}

%Let $Q$ be a Metropolis-Hasting dynamics for distribution $\mu$ with the operation set $\+O$. 
We develop a constructive approach for establishing the local functional inequalities in \Cref{def:HS-route} using a technique we call \emph{transport flow}, which extends the canonical path method.

Given a sequence of states $\gamma = (x_0,x_1,\ldots,x_{\ell})$  from the state space $\Omega$,
we denote its starting point as $s(\gamma) = x_0$ and its endpoint as $t(\gamma) = x_\ell$.
We call $\gamma$ a path in the transition graph of $Q$ if $Q(x_i,x_{i+1}) > 0$ for all $i < \ell$.
We denote the length (number of edges) of this path $\gamma$ by $\ell(\gamma)=\ell$.
For two states $x, y \in \Omega(\mu)$, if $Q(x,y) > 0$, we say there is a transition from $x$ to $y$, denoted as $(x \mapsto y)$.
Given a transition $(x \mapsto y)$, we write $(x \mapsto y) \in \gamma$  to indicate that there exists some $i < \ell$ such that $x = x_i$ and $y = x_{i+1}$.

\begin{definition}[transport flow]\label{def:transport-flow}
    Let $\nu$ and $\pi$ be two distributions over the state space $\Omega$, 
    and let $Q$ be a Markov chain on $\Omega$. 
    The \emph{transport flow} from $\nu$ to $\pi$ is a distribution $\Gamma$ over paths on the transition graph of $Q$ that satisfies:
    \begin{itemize}
    \item The starting point $s(\gamma)$ of a path $\gamma \sim \Gamma$ follows the distribution $\nu$;
    \item The endpoint $t(\gamma)$ of a path $\gamma \sim \Gamma$ follows the distribution $\pi$.
    \end{itemize}
\end{definition}

The random paths in $\Gamma$ transport the distribution $\nu$ to $\pi$ using the transitions of the Markov chain $Q$.
Given a random path $\gamma \sim \Gamma$, its starting point and endpoint $(s(\gamma),t(\gamma))$ naturally form a coupling of the two distributions $\nu$ and $\pi$.
%Suppose we first sample $X \sim \nu$ and then sample a random path $\gamma \sim \Gamma$ conditional on the starting point $X$. The endpoint of $\gamma$ follows the distribution $\pi$ and every step of $\gamma$ is a local move defined by the Markov chain $Q$. If we project $\Gamma$ to the distribution of two endpoints 

\begin{remark}[Canonical path and multicommodity flow as transport flow]\label{rem:canonical-path-as-transport-flow}
  The classical canonical path method, and more generally, multicommodity flow~\cite{diaconis1991geometric,sinclair1992improved}, can be interpreted as constructing a transport flow from $\mu$ to $\mu$ itself. 
  Specifically, these methods define a collection of paths $\+P$, where each path $\gamma \in \+P$ is assigned with a weight $w(\gamma)$,
  such that for any $x, y \in \Omega$, the total weight $w(\gamma)$ for all paths from $x$ to $y$ is precisely $\mu(x)\mu(y)$. 
  This corresponds to a transport flow $\Gamma$ from $\mu$ to $\mu$, where each path $\gamma \in \+P$ is sampled with probability $w(\gamma)$. 
  Thus, a random path $\gamma \sim \Gamma$ induces an \emph{independent coupling} with itself through its endpoints $(s(\gamma), t(\gamma))$.
\end{remark}

%Let $\Omega(\mu_e) \triangleq \set{c \in [q] \mid \mu_e(c) > 0}$ be the support of $\mu_e$. 

The following theorem establishes local functional inequalities based on the existence of a family of transport flows with bounded average congestion and average squared length.

For any $e \in E$ and $c \in [q]$, let $\mu^{e \gets c}$ denote the distribution of $\*X \sim \mu$ conditioned on $\*X_e = c$.

\begin{theorem}[\HSRV{} via transport flow] \label{thm:HS-route-via-congestion}
  Let $\kappa,L > 0$. 
  Let $\mu$ be a distribution over $[q]^E$ with support $\Omega$.
  Let $Q$ be a reversible Markov chain on $\Omega$ with stationary distribution $\mu$.
  Suppose there exists a family of transport flows $\{\Gamma_e^{a\to b}\}$ from $\mu^{e \gets a}$ to $\mu^{e \gets b}$ for all $e \in E$ and $a,b \in \Omega(\mu_e)$ with $a < b$, satisfying the following conditions:
  \begin{itemize}
  \item ($\kappa$-expected congestion) For any transition $(x \mapsto y)$, and any $a,b \in [q]$ with $a < b$,
    \begin{align} \label{eq:def-congestion}
    \sum_{e\in E:a,b\in \Omega(\mu_e)} \mu_e(a)\mu_e(b) \E[\gamma \sim \Gamma_e^{a\to b}]{\frac{\*1[(x \mapsto y) \in \gamma]}{\ell(\gamma)}} \leq \kappa \cdot \mu(x)Q(x,y).
    \end{align}
%    where $\ell(\gamma)$ denotes the length (number of edges) of the path $\gamma$.
    %where $d_e \triangleq \sum_{c,c'\in [q]:c \neq c'}\mu_e(c)\mu_e(c')$;
  \item ($L$-expected squared length)  For any $e \in E$, and any $a,b \in \Omega(\mu_e)$ with $a < b$,
  \begin{align*}
  \E[\gamma \sim \Gamma_e^{a\to b}]{\ell(\gamma)^2} \leq L.
  \end{align*}
  \end{itemize}
  Then, the Markov chain $Q$ satisfies the $\frac{1}{2q^2\kappa L}$-local Poincar\'e inequality.
\end{theorem}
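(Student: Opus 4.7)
The plan is to establish the bound $\sum_{e \in E} \Var{\E{\*F \mid \*X_e}} \leq 2 q^2 \kappa L \cdot \+E_Q(f,f)$ by a pointwise comparison of each local variance against the Dirichlet form. Fixing $e \in E$ and writing $g(c) \triangleq \E[\*X \sim \mu^{e \gets c}]{f(\*X)}$ for $c \in \Omega(\mu_e)$, the standard variance identity gives $\Var{\E{\*F \mid \*X_e}} = \sum_{a<b,\,a,b\in\Omega(\mu_e)} \mu_e(a)\mu_e(b)(g(a)-g(b))^2$, so it suffices to control each squared difference $(g(a) - g(b))^2$. Since $\Gamma_e^{a \to b}$ transports $\mu^{e \gets a}$ to $\mu^{e \gets b}$ through its endpoints, I have the representation $g(a) - g(b) = \E[\gamma \sim \Gamma_e^{a \to b}]{f(s(\gamma)) - f(t(\gamma))}$.

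The heart of the argument is a two-stage Cauchy--Schwarz chosen to match the two hypotheses. First, applying Cauchy--Schwarz to the $\gamma$-expectation with the splitting $\ell(\gamma) \cdot \ell(\gamma)^{-1}(f(s(\gamma)) - f(t(\gamma)))$ and invoking $\E[\gamma]{\ell(\gamma)^2} \leq L$ yields
\[
(g(a) - g(b))^2 \leq \E[\gamma \sim \Gamma_e^{a \to b}]{\ell(\gamma)^2} \cdot \E[\gamma \sim \Gamma_e^{a \to b}]{\frac{(f(s(\gamma)) - f(t(\gamma)))^2}{\ell(\gamma)^2}} \leq L \cdot \E[\gamma \sim \Gamma_e^{a \to b}]{\frac{(f(s(\gamma)) - f(t(\gamma)))^2}{\ell(\gamma)^2}}.
\]
Second, applying Cauchy--Schwarz along each individual path to the telescoping identity $f(s(\gamma)) - f(t(\gamma)) = \sum_{(x \mapsto y) \in \gamma}(f(x) - f(y))$ gives $(f(s(\gamma)) - f(t(\gamma)))^2 \leq \ell(\gamma) \sum_{(x \mapsto y) \in \gamma}(f(x) - f(y))^2$, and dividing by $\ell(\gamma)^2$ leaves exactly the factor $1/\ell(\gamma)$ that appears in the congestion condition. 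The two stages combine into
\[
(g(a) - g(b))^2 \leq L \sum_{(x \mapsto y)} (f(x) - f(y))^2 \cdot \E[\gamma \sim \Gamma_e^{a \to b}]{\frac{\*1[(x \mapsto y) \in \gamma]}{\ell(\gamma)}}.
\]

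To finish, I will sum this inequality over $e$ and over pairs $a < b$ weighted by $\mu_e(a) \mu_e(b)$, swap the order of summation so that the sum over transitions $(x \mapsto y)$ comes out front, and invoke the $\kappa$-expected congestion condition separately for each fixed pair $(a,b)$. Since there are at most $\binom{q}{2} \leq q^2/2$ such pairs and $\sum_{(x \mapsto y)} \mu(x) Q(x,y)(f(x) - f(y))^2 = 2 \+E_Q(f,f)$, the combined estimate yields $\sum_{e \in E} \Var{\E{\*F \mid \*X_e}} \leq q^2 \kappa L \cdot \+E_Q(f,f)$, which in particular implies the claimed $\frac{1}{2 q^2 \kappa L}$-local Poincar\'e inequality.

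The subtle step is the first Cauchy--Schwarz: the asymmetric splitting $\ell(\gamma) \cdot \ell(\gamma)^{-1}$ is precisely what allows the second-moment hypothesis $\E[\gamma]{\ell(\gamma)^2} \leq L$ to be traded directly against the $1/\ell(\gamma)$ factor in the congestion condition, without wasting either hypothesis. Once this alignment is in place, the remaining reorganization of sums and invocation of the two hypotheses are routine.
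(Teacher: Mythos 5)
Your proposal is correct and follows essentially the same route as the paper's proof: the same variance decomposition over pairs $a<b$, the same telescoping representation along transport-flow paths, and the same $\ell(\gamma)\cdot\ell(\gamma)^{-1}$ Cauchy--Schwarz weighting that trades the second-moment hypothesis against the $1/\ell(\gamma)$ factor in the congestion condition. The only cosmetic difference is that you apply Cauchy--Schwarz in two stages (over the path distribution, then along each path) where the paper uses a single joint Cauchy--Schwarz over transitions and paths, but both yield the identical intermediate bound and the same conclusion.
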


\begin{theorem}[local log-Sobolev inequality via transport flow] \label{thm:log-Sob-HS-route-via-congestion}
  Let $\kappa > 0$. 
  Let $\mu$ be a distribution over $[q]^E$ with support $\Omega$.
  Let $Q$ be a reversible Markov chain on $\Omega$ with stationary distribution $\mu$.
  Suppose there exists a family of transport flows $\{\Gamma_e^{a\to b}\}$ from $\mu^{e \gets a}$ to $\mu^{e \gets b}$ for all $e \in E$ and $a,b \in \Omega(\mu_e)$ with $a < b$, satisfying the following condition:
  \begin{itemize}
  \item (strong $\kappa$-expected congestion) For any transition $(x \mapsto y)$, and any $a,b \in [q]$ with $a < b$,
    \begin{align} \label{eq:def-strong-congestion}
     \sum_{e\in E: a,b \in \Omega(\mu_e) } \mu_e(a) \mu_e(b) \E[\gamma \sim \Gamma^{a\to b}_e]{\ell(\gamma) \*1[(x \mapsto y) \in \gamma]} \leq \kappa \cdot \mu(x)Q(x,y).
    \end{align}
  %\item ($L$-expected length) for all $e \in E$, it holds that $\E[X,Y\sim \+C_e]{\ell(X,Y)} \leq L$.
  \end{itemize}
  Then, the Markov chain $Q$ satisfies the $\alpha$-local log-Sobolev inequality with
  \begin{align*}
  \alpha = \frac{1}{2 q^2 \kappa} \cdot \frac{{1-2 \phi}}{\log(\frac{1}{\phi} - 1)},
  \end{align*}
  where $\phi \le \min \set{\mu_e(c) \mid e \in E, c\in [q], \mu_e(c) \neq 0}$ is the marginal lower bound.
\end{theorem}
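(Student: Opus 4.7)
The plan is to lift the Poincar\'e-style proof of \Cref{thm:HS-route-via-congestion} to entropy using two additional ingredients: a one-coordinate log-Sobolev reduction that turns $\Ent[\mu_e]{\cdot}$ into a sum of square-root differences, and a Minkowski-type triangle inequality in $L^2$ to replace the Poincar\'e identity for the variance. Throughout, fix $f : \Omega \to \mathbb{R}$, let $\*F = f(\*X)$ with $\*X \sim \mu$, and set $g_e(c) = \E{\*F^2 \mid \*X_e = c}$ for each $e \in E$ and $c \in \Omega(\mu_e)$.

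The first step reduces the local entropy at $e$ to pairwise comparisons:
$$\Ent[\mu_e]{g_e} \;\le\; \frac{\log(1/\phi - 1)}{1-2\phi}\sum_{a < b \in \Omega(\mu_e)} \mu_e(a)\mu_e(b)\bigl(\sqrt{g_e(a)} - \sqrt{g_e(b)}\bigr)^2.$$
This is the log-Sobolev inequality for the ``resample'' chain on $[q]$ with stationary $\mu_e$, applied to the non-negative function $g_e$; among all $\mu_e$ with marginal lower bound $\phi$, the worst case is a two-point Bernoulli with smaller mass $\phi$, which produces the prefactor $\frac{\log(1/\phi-1)}{1-2\phi}$.

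The second step bounds each square-root difference via the transport flow $\Gamma_e^{a \to b}$. Observing that $\sqrt{g_e(a)} = \norm{f}_{L^2(\mu^{e \gets a})}$ and that $(f(s(\gamma)), f(t(\gamma)))$ with $\gamma \sim \Gamma_e^{a\to b}$ is a coupling of the two $L^2$ laws, the triangle inequality in $L^2$ gives
$$\bigl(\sqrt{g_e(a)} - \sqrt{g_e(b)}\bigr)^2 \;\le\; \E[\gamma\sim\Gamma_e^{a\to b}]{\bigl(f(s(\gamma)) - f(t(\gamma))\bigr)^2}.$$
Telescoping along $\gamma = (x_0, \ldots, x_{\ell(\gamma)})$ and applying Cauchy--Schwarz yields $\bigl(f(t(\gamma))-f(s(\gamma))\bigr)^2 \le \ell(\gamma)\sum_{(x \mapsto y) \in \gamma}(f(y) - f(x))^2$. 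This is precisely where the \emph{strong} congestion hypothesis---weighting transitions by $\ell(\gamma)$ rather than by $1/\ell(\gamma)$ as in \Cref{thm:HS-route-via-congestion}---absorbs the length factor produced by Cauchy--Schwarz.

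The third step aggregates over $(e, a, b)$ and swaps the order of summation, converting the sum over paths into a sum over individual transitions of $Q$. Applying \eqref{eq:def-strong-congestion} to each of the $\binom{q}{2}$ pairs $(a,b)$ bounds the inner double sum for each transition by $\binom{q}{2}\kappa\,\mu(x)Q(x,y)$, and then the identity $\sum_{(x\mapsto y)} \mu(x)Q(x,y)(f(x)-f(y))^2 = 2\+E_Q(f,f)$ collapses the outer sum. Combining with the prefactor from step~1 yields
$$\sum_e \Ent[\mu_e]{g_e} \;\le\; \frac{\log(1/\phi-1)}{1-2\phi}\cdot 2q^2\kappa\cdot\+E_Q(f,f),$$
which is the claimed $\alpha$-local log-Sobolev inequality with $\alpha = \frac{1}{2q^2\kappa}\cdot\frac{1-2\phi}{\log(1/\phi-1)}$.

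\textbf{Main obstacle.} The delicate ingredient is step~1: producing a one-coordinate log-Sobolev inequality with \emph{square-root} differences on the right-hand side and \emph{exactly} the two-point Bernoulli prefactor $\frac{\log(1/\phi-1)}{1-2\phi}$. A naive approach that bounds $\Ent$ by $\Var$, or decomposes $\mu_e$ coarsely into indicators, tends to introduce extra polynomial factors in $q$ or $1/\phi$. The clean route is to establish that the complete-graph resample chain on $[q]$ with stationary $\mu_e$ has log-Sobolev constant at least $\frac{1-2\phi}{\log(1/\phi-1)}$ whenever the minimum marginal is $\ge \phi$, with the two-point Bernoulli$(\phi,1-\phi)$ as the extremal distribution; this is classical and follows by reducing to pairs and invoking the two-point log-Sobolev inequality. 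Once this reduction is in place, the rest of the proof is an $L^2$-transport adaptation of the argument for \Cref{thm:HS-route-via-congestion}, using $\ell(\gamma)$-weighted congestion in place of the $1/\ell(\gamma)$-weighted version.
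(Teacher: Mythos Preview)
Your proposal is correct and follows essentially the same route as the paper: reduce $\Ent[\mu_e]{g_e}$ to $\Var[\mu_e]{\sqrt{g_e}}$ via the log-Sobolev constant of the one-step resample chain on $[q]$ (the paper cites Diaconis--Saloff-Coste directly for the constant $\frac{1-2\phi}{\log(1/\phi-1)}$, so your ``main obstacle'' is in fact just a citation), then bound each $(\sqrt{g_e(a)}-\sqrt{g_e(b)})^2$ by $\E[\gamma]{(f(s(\gamma))-f(t(\gamma)))^2}$, apply Cauchy--Schwarz along the path, and sum using strong congestion. The only cosmetic difference is that the paper phrases the second step as Jensen's inequality for the convex function $h(x,y)=(\sqrt{x}-\sqrt{y})^2$, whereas you phrase it as the reverse triangle inequality in $L^2$; these are the same inequality.
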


\Cref{thm:HS-route-via-congestion,thm:log-Sob-HS-route-via-congestion} are proved in \Cref{sec:transport-flow}.
Compared to the classical canonical path and multicommodity flow approaches, 
these theorems establish \emph{local} functional inequalities for mixing times based on the \emph{average} congestion and length of paths between \emph{coupled} pairs. 
%we construct transport flow between two conditional distributions, $\mu^{e \gets a}$ and $\mu^{e \gets b}$, rather than from $\mu$ to itself. 
%More importantly, our definition of transport flow allows the two endpoints of a path $\gamma \sim \Gamma$ to form a \emph{non-trivial coupling} of $\mu^{e \gets a}$ and $\mu^{e \gets b}$, providing significantly greater flexibility for optimizing congestion and path length than the \emph{independent coupling} inherent in canonical paths, as discussed in \Cref{rem:canonical-path-as-transport-flow}. 
In our applications, we carefully exploit this coupling induced by transport flow  to ensure that the congestion and length remain well bounded.

A similar coupling-based enhancement of the canonical path argument was recently used in \cite{CCFV25} to analyze the Glauber dynamics for edge colorings on trees. 
Our definition of transport flow is more general.
More importantly, we show that transport flow can be utilized to establish local functional inequalities that guarantee rapid mixing in general  Markov chains.

\subsection{Application to the Jerrum-Sinclair chain}\label{sec:app-JS-chain}
We then apply our general approach to the Jerrum-Sinclair chain for the monomer-dimer model and establish the following result, which is a formal restatement of \Cref{thm:main-simple}.
\begin{theorem}\label{thm:main}
Let $G=(V,E)$ be a simple undirected graph with $n$ vertices, $m$ edges, and maximum degree $\Delta$.
Let $\mu$ be the monomer-dimer distribution on $G$ with edge weight $\lambda>0$, and define $\overline{\lambda} = \max\set{1,\lambda}$.
The Jerrum-Sinclair chain $P_{\mathrm{JS}}$ for $\mu$ has Poincar\'e constant $ \gamma(P_{\mathrm{JS}})$ and log-Sobolev constant $\rho(P_{\mathrm{JS}})$ satisfying:
\begin{align*}
    \gamma(P_{\mathrm{JS}}) = \Omega\tp{\frac{1}{\overline{\lambda}^{3}\log^2(1+\ol{\lambda}) \cdot  \Delta \cdot m}} \quad \text{and} \quad \rho(P_{\mathrm{JS}}) = \Omega\tp{\frac{1}{\overline{\lambda}^{4} \cdot \Delta^2 \log(\ol{\lambda} \Delta) \cdot m}}.
\end{align*}
%where $n = \abs{V}$, $m = \abs{E}$ and $\Delta$ is the maximum degree of the graph $G$.
As a consequence, the mixing time of the $1/2$-lazy Jerrum-Sinclair chain $P_{\mathrm{zz}}=\frac{1}{2}(P_{\mathrm{JS}}+I)$ satisfies
\begin{align*}
%    T_{\mathrm{mix}}(P_{\mathrm{lz\text{-}JS}}) &=\min\left\{O\left(\overline{\lambda}^{3}\log^2(1+\ol{\lambda}) \cdot  \Delta \cdot mn\right),O\left(\overline{\lambda}^{4} \cdot \Delta^2 \log \Delta \cdot m\log n\right)\right\}.\\
    T_{\mathrm{mix}}(P_{\mathrm{zz}}) &= O\left(\overline{\lambda}^{3} \Delta m\cdot  \min \left\{n\log^2(1+\ol{\lambda}),\,\, \overline{\lambda} \Delta \log(\ol{\lambda} \Delta)\cdot \log n \right\}\right).
\end{align*}
\end{theorem}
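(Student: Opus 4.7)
The plan is to derive \Cref{thm:main} by applying the local-to-global framework developed earlier to the Jerrum--Sinclair chain. Encode matchings as vectors in $\{0,1\}^E$, so $q=2$. For any $\Lambda\subseteq E$ and feasible pinning $\tau$, the conditional $\mu^\tau$ is the monomer-dimer distribution on the reduced graph obtained by deleting the edges pinned to $0$ and removing the endpoints of the edges pinned to $1$; let $P_{\mathrm{JS}}^\tau$ denote the Jerrum--Sinclair chain for $\mu^\tau$ and take $\mathfrak{Q}=\{P_{\mathrm{JS}}^\tau\}$. Concavity of the Dirichlet forms (\Cref{def:concave-Dirichlet-forms}) is a short calculation: for a free edge $e'\neq e$, the per-edge contribution to $\mathcal{E}_{P_{\mathrm{JS}}^\tau}(f,f)$ equals the $\mu^\tau_e$-expectation of the corresponding contribution to $\mathcal{E}_{P_{\mathrm{JS}}^{\tau\land(e\gets c)}}(f,f)$, and the $1/|\Lambda|$ normalization from uniform edge selection makes the averaging identity in \Cref{def:concave-Dirichlet-forms} hold (with equality).

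The technical core is to construct, for every free edge $e=\{u,v\}$ under any pinning $\tau$, a transport flow $\Gamma_e^{0\to 1}$ from $(\mu^\tau)^{e\gets 0}$ to $(\mu^\tau)^{e\gets 1}$ whose paths have length at most $2$ and whose endpoints agree off $\delta(\{u,v\})$. One natural construction first samples a ``core'' matching $M'$ on the graph with $u,v$ removed, sets $t(\gamma)=M'\cup\{e\}$, and then samples the configuration of $s(\gamma)$ on the edges of $\delta(\{u,v\})\setminus\{e\}$ jointly with $M'$ so that the marginal of $s(\gamma)$ is exactly $(\mu^\tau)^{e\gets 0}$. The path $\gamma$ itself removes the (at most two) edges of $s(\gamma)$ incident to $\{u,v\}$ using one or two down/exchange moves and inserts $e$ via an up or exchange move, giving $\ell(\gamma)\leq 2$.

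For the congestion analysis, fix a transition $(x\mapsto y)$ of $P_{\mathrm{JS}}^\tau$ modifying some edge $f$. Locality forces $e$ to share a vertex with $f$, so only $O(\Delta)$ choices of $e$ are admissible; given $e$, the pair $(s(\gamma),t(\gamma))$ is determined by $(x,y)$ up to $O(1)$ auxiliary bits, because all intermediate states coincide with $x$ outside $\delta(\{u,v\})$. Combined with $\mathrm{poly}(\ol{\lambda})$-bounded marginal ratios (a consequence of $\phi=\Omega(1/(\ol{\lambda}\Delta))$), this yields an expected congestion $\kappa=O(\mathrm{poly}(\ol{\lambda})\,\Delta\,|\Lambda|)$ with expected squared length $L=O(1)$, together with a matching bound on the strong congestion. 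Plugging into \Cref{thm:HS-route-via-congestion,thm:log-Sob-HS-route-via-congestion} yields $\alpha_{|\Lambda|}$-local Poincar\'e and log-Sobolev inequalities of the required order, and \Cref{thm:general-results} lifts them to the stated global constants; the mixing-time bound follows from the standard Poincar\'e- and log-Sobolev-based conversions, taking the better of the two.

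The main obstacle is the joint design of the coupling and the encoding. Naive independent sampling gives correct marginals but symmetric differences of size up to $n$, whereas naive local rewriting gives short paths but biased marginals; the coupling must interpolate between these while preserving the monomer-dimer weights. The subsequent encoding argument is what exploits locality to replace the $\Theta(m)$ edge-choice factor of the classical canonical path method with a $\Theta(\Delta)$ factor, thereby producing the saving from $n^2 m$ to $\Delta^2 m$ in the global mixing-time estimate.
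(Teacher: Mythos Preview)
Your high-level architecture matches the paper's: use concavity of the Dirichlet forms for the Jerrum--Sinclair family together with local Poincar\'e/log-Sobolev inequalities under every pinning, and invoke the local-to-global theorem. The gap is in the transport flow itself.

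The coupling you propose cannot exist. You want $t(\gamma)=M'\cup\{e\}\sim\mu^{e\gets 1}$ and $s(\gamma)\sim\mu^{e\gets 0}$ with $s(\gamma)=t(\gamma)$ on $E\setminus\delta(\{u,v\})$. For the first requirement, $M'$ must follow the monomer-dimer law on $G-u-v$; for the second, $\mu^{e\gets 0}$ restricted to $E\setminus\delta(\{u,v\})$ would have to equal that same law. It does not: under $\mu^{e\gets 0}$ the vertices $u,v$ can still be matched through $\delta(\{u,v\})\setminus\{e\}$, and this alters the distribution at distance two and beyond. For a concrete witness, take the path $v_1\!-\!v_2\!-\!v_3\!-\!v_4\!-\!v_5$ with $e=\{v_2,v_3\}$: under $\mu^{e\gets 1}$ the edge $\{v_4,v_5\}$ is present with probability $\lambda/(1+\lambda)$, while under $\mu^{e\gets 0}$ it is present with probability $\lambda/(1+2\lambda)$. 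You in fact flag this tension in your last paragraph (``naive local rewriting gives short paths but biased marginals''), but the construction you describe is exactly that naive rewrite, so the marginals of $s(\gamma)$ are wrong and $\Gamma_e^{0\to1}$ is not a transport flow.

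What the paper actually does is accept that the disagreement between $X\sim\mu^{e\gets 0}$ and $Y\sim\mu^{e\gets 1}$ cannot be confined to a bounded neighbourhood of $e$. Its \emph{local flipping coupling} samples $Y$ and an independent $Z\sim\mu^{e\gets 0}$, takes the single alternating path or even cycle $B$ through $e$ in $Y\oplus Z$, and flips $Y$ along $B$ to obtain $X$; the canonical path then rewinds $B$. The expected squared length of $B$ is $O(\ol\lambda\Delta\log^2(1+\ol\lambda))$, proved via a two-step correlation-decay estimate showing $|B|$ is stochastically dominated by twice a geometric with parameter $\Theta(1/\sqrt{1+\lambda\Delta})$. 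The congestion bound is $O(\ol\lambda^2|\Lambda|)$ (no $\Delta$ here) and comes from a decoupling identity across edges plus a measure-preserving encoding into a modified graph, not from a local ``$e$ shares a vertex with $f$'' argument; the latter would only be valid for paths that are truly confined to $\delta(\{u,v\})$, which these are not.
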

In comparison, the original bound by Jerrum and Sinclair \cite{jerrum1989approximating,sinclair1992improved,jerrum2003counting} was $O(\ol{\lambda}mn^2\log n)$, which remained the best known bound for general graphs with arbitrary maximum degree.

\paragraph{The Jerrum-Sinclair chain on monomer-dimers.}
%Let $\mu$ denote  the monomer-dimer distribution on $G=(V,E)$ with edge weight $\lambda > 0$. 
We can   interpret the monomer-dimer distribution $\mu$ on graph $G=(V,E)$, defined over $2^E$, equivalently as a distribution over $\{0,1\}^E$, where each subset $M \subseteq E$  corresponds to a configuration $\sigma_M \in \{0,1\}^E$, with $\sigma_M(e)$ indicating whether the edge $e$ is occupied in $M$.
% We will frequently use the two views interchangeably depending on which is more convenient.

Let $\mathfrak{Q}_\JS=\mathfrak{Q}_\JS(\mu)$ denote the family of Jerrum-Sinclair chains for the monomer-dimer distributions $\mu^{\tau}$ induced by $\mu$.
%
%To apply the general framework to Jerrum-Sinclair chain, we first introduce the family $\mathfrak{Q}$ of Markov chains for all conditional distributions. 
% Recall that $\mu$ is the distribution of the monomer-dimer model on $G=(V,E)$ with fugacity $\lambda > 0$.
% Note that $\mu$ is a distribution over $2^E$. 
% We can also view $\mu$ as a distribution over $\{0,1\}^E$ such that every subset $M \subseteq E$ corresponds the vector $\sigma_M \in \{0,1\}^E$ with $\sigma_M(e)=1$ if and only if $e \in M$.
% We will frequently use the two views interchangeably depending on which is more convenient.
For each feasible pinning $\tau \in \{0,1\}^{E \setminus \Lambda}$, where $\Lambda \subseteq E$ is a subset of edges, the Markov chain $Q^\tau\in \mathfrak{Q}_\JS$ is a non-lazy Jerrum-Sinclair chain for the conditional distribution $\mu^\tau$. 

Specifically, given the current state $X_t \in \{0,1\}^{E}$, the chain $Q^\tau$ transitions to $X_{t+1}$ as follows:
\begin{enumerate}
\item Select an edge $e = \{u,v\} \in \Lambda$ uniformly at random;
\item Propose a candidate matching $M$ for $X_{t+1}$ 
using the \emph{down}, \emph{up}, and \emph{exchange} transitions as described in \Cref{sec:introduction};
%according to the following rule:
% \begin{enumerate}
% \item (down transition) If $e\in X_t$, set $M \gets X_t \setminus \{e\}$;
% \item (up transition) If both two endpoints $u$ and $v$ are not saturated in $X_t$, set $M \gets X_t \cup \{e\}$;
% \item (exchange transition) If one of the end point of $e$ is saturated and the other is not, say $u$ is saturated by an edge $f$ and $v$ is not saturated, set $M \gets X_t \cup \{e\} \setminus \{f\}$.
% \item If both two endpoints $u$ and $v$ are saturated but $e \not \in X_t$, set $M \gets X_t$.
% \end{enumerate}
\item Accept $M$ as $X_{t+1}$ with probability $\min\set{1,\frac{\mu^\tau(M)}{\mu^\tau(X_t)}}$; otherwise, set $X_{t+1}=X_{t}$.
\end{enumerate}

%We show that the Dirichlet forms of $\mathfrak{Q}$ satisfy \Cref{def:concave-Dirichlet-forms}.

The Jerrum-Sinclair chains satisfy the concavity condition stated in \Cref{def:concave-Dirichlet-forms}.
\begin{proposition}\label{lem:concave-JS}
The family $\mathfrak{Q}_{\JS}$ of Jerrum-Sinclair chains has concave Dirichlet forms.
\end{proposition}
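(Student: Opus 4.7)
The plan is to unfold the concavity inequality and exploit the fact that every Jerrum--Sinclair transition changes at most two coordinates. Expanding the LHS and using the chain rule $\mu^\tau(x)=\mu^\tau_e(x_e)\,\mu^{\tau\land(e\gets x_e)}(x)$ to absorb the marginal factor, the LHS becomes
\begin{align*}
\frac{1}{|\Lambda|}\sum_{e\in\Lambda}\frac{1}{2}\sum_{x,y:\,x_e=y_e}\mu^\tau(x)\,Q^{\tau\land(e\gets x_e)}(x,y)\,(f(x)-f(y))^2.
\end{align*}

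The crucial identity to establish is: for every $e\in\Lambda$ and every $x\ne y$ in $\Omega(\mu^\tau)$ with $x_e=y_e$,
\begin{align*}
Q^{\tau\land(e\gets x_e)}(x,y)=\tfrac{|\Lambda|}{|\Lambda|-1}\,Q^\tau(x,y).
\end{align*}
I would prove this by decomposing both chains according to the uniformly chosen edge $e'$ and analyzing the proposal-then-Metropolis step. Two observations suffice: (a) if $e'=e$ in $Q^\tau$, any non-trivial down/up/exchange move using $e$ flips the value at $e$, so such a step cannot contribute to $Q^\tau(x,y)$ when $x_e=y_e$ and $x\ne y$; (b) for $e'\ne e$ the proposal rule is identical in both chains, and the Metropolis ratios $\mu^\tau(y)/\mu^\tau(x)$ and $\mu^{\tau\land(e\gets x_e)}(y)/\mu^{\tau\land(e\gets x_e)}(x)$ coincide since the common marginal factor $\mu^\tau_e(x_e)$ cancels. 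The only delicate sub-case is an exchange move via $e'\ne e$ that attempts to swap out $e$: in the pinned chain the candidate has the wrong value at $e$ and is rejected, while in $Q^\tau$ the resulting $y$ has $y_e\ne x_e$ and lies outside our sum anyway.

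Plugging the identity back in and swapping the order of summation, the LHS becomes
\begin{align*}
\frac{1}{|\Lambda|-1}\cdot\frac{1}{2}\sum_{x\ne y}\bigl|\{e\in\Lambda:x_e=y_e\}\bigr|\cdot\mu^\tau(x)\,Q^\tau(x,y)\,(f(x)-f(y))^2.
\end{align*}
The final step invokes the defining locality of the JS chain: any transition $(x\mapsto y)$ within $\Omega(\mu^\tau)$ with $x\ne y$ changes either one coordinate (down/up) or two coordinates (exchange), and all changed coordinates lie in $\Lambda$. Hence $|\{e\in\Lambda:x_e=y_e\}|\le|\Lambda|-1$, making the coefficient at most $1$ and yielding LHS~$\le\+E_{Q^\tau}(f,f)$. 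The main obstacle is the bookkeeping for the identity above: one must verify that the ``missing'' transitions in the pinned chain (compared to $Q^\tau$) are exactly those either selecting the pinned edge $e$ or violating the pinning, and that the former are precisely what the $|\Lambda|/(|\Lambda|-1)$ normalization compensates for. Once this alignment is in place, the remainder is a clean coordinate-counting argument tailored to the one-or-two coordinate locality of JS moves.
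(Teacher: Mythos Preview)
Your proposal is correct and follows essentially the same approach as the paper. Both arguments hinge on the identity $Q^{\tau\land(e\gets x_e)}(x,y)=\tfrac{|\Lambda|}{|\Lambda|-1}\,Q^\tau(x,y)$ for $x\ne y$ with $x_e=y_e$, and then use that a JS transition changes at most two coordinates; the paper simply presents this via an explicit case split on whether $|\alpha\oplus\beta|=1$ or $2$, whereas you package the same count uniformly as $|\{e\in\Lambda:x_e=y_e\}|\le|\Lambda|-1$.
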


The proof of \Cref{lem:concave-JS} follows from a straightforward calculation, provided in \Cref{sec:concave-Dirichlet-forms}.

Our main objective is to establish the local functional inequalities in \Cref{def:HS-route-family} for $\mathfrak{Q}$. 
\begin{lemma}\label{lem:local-functional-inequalities}
The family $\mathfrak{Q}_\JS$ of Jerrum-Sinclair chains satisfies the following local functional inequalities:
\begin{enumerate}
\item $(\alpha_1,\alpha_2,\ldots,\alpha_{m})$-local Poincar\'e inequalities with $\alpha_k = \Omega\tp{\frac{1}{\ol{\lambda}^3\log^2(1+\ol{\lambda})\cdot \Delta  \cdot k}}$;
\item $(\alpha_1,\alpha_2,\ldots,\alpha_m)$-local log-Sobolev inequalities with $\alpha_k = \Omega\tp{\frac{1}{\ol{\lambda}^4\cdot \Delta^2 \log(\ol{\lambda} \Delta) \cdot k}}$.
\end{enumerate}
\end{lemma}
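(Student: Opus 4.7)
The overarching plan is to invoke \Cref{thm:HS-route-via-congestion,thm:log-Sob-HS-route-via-congestion} on each chain $Q^\tau \in \mathfrak{Q}_{\JS}$ separately. Fix a feasible pinning $\tau \in \{0,1\}^{E \setminus \Lambda}$ and write $k = |\Lambda|$. Since $q=2$, for each edge $e \in \Lambda$ with both $\mu_e^\tau(0)$ and $\mu_e^\tau(1)$ strictly positive, it suffices to construct a single transport flow $\Gamma_e^{0 \to 1}$ from $\mu^{\tau \land (e \gets 0)}$ to $\mu^{\tau \land (e \gets 1)}$; the reverse flow is obtained by time reversal. With congestion $\kappa = O(\ol{\lambda}\, k)$ and expected squared length $L = \widetilde{O}(\ol{\lambda}^2 \Delta)$, \Cref{thm:HS-route-via-congestion} will deliver the desired $\alpha_k = \Omega(1/(\ol{\lambda}^3 \log^2(1+\ol{\lambda})\, \Delta\, k))$; for the log-Sobolev bound, I plan to use the marginal lower bound $\phi = \Omega(1/(1+\ol{\lambda}))$ together with a strong-congestion estimate of order $\kappa' = O(\ol{\lambda}^3 \Delta^2\, k)$ in \Cref{thm:log-Sob-HS-route-via-congestion}.

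The transport flow $\Gamma_e^{0 \to 1}$ will be assembled in two layers. First, a \emph{local coupling} of $(M_0, M_1) \sim \mu^{\tau \land (e\gets 0)} \times \mu^{\tau \land (e \gets 1)}$ whose symmetric difference is a single alternating walk $C(M_0, M_1)$ through $e$. I will build this by a disagreement-growth procedure that simultaneously exposes $M_0$ and $M_1$ starting from the two endpoints of $e$: at each current boundary vertex, the matching edge (or absence thereof) is sampled from the appropriate conditional monomer-dimer distribution on the already revealed subgraph; the boundary is extended whenever the two coupled matchings disagree on a newly exposed edge, and the cluster terminates either when both open ends reach free vertices or when it closes as an alternating cycle. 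Second, given $(M_0, M_1)$, the canonical path will be the Jerrum-Sinclair unwinding of $C(M_0, M_1)$, flipping its edges sequentially by down, up, and exchange moves restricted to $\Lambda$, so that $\ell(\gamma) = \Theta(|C(M_0, M_1)|)$.

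The two quantitative estimates demanded by the theorems then reduce to: (i) a second-moment bound on $|C|$, which I plan to derive from the self-avoiding walk representation underlying the Bayati-Gamarnik-Katz-Nair-Tetali correlation decay for monomer-dimer, yielding $\E{|C|^2} = \widetilde{O}(\ol{\lambda}^2 \Delta)$ with the correct polynomial and logarithmic dependence and, via a tail-moment argument, the higher-moment estimate required for the log-Sobolev case; and (ii) an injective-encoding congestion bound in the spirit of classical Jerrum-Sinclair, where, given a transition $(x \mapsto y)$ of $Q^\tau$ together with the triple $(e, 0, 1)$, the coupled pair $(M_0, M_1)$ and its position along the canonical path are reconstructible from $(x, y)$ plus a constant amount of local data (namely the open endpoints of the partially processed alternating walk), with a $\mu$-weight distortion of at most $O(\ol{\lambda})$. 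The factor $k$ in $\kappa$ is exactly the $Q^\tau(x, y) = \Theta(1/k)$ on the right-hand side of the congestion hypothesis.

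The principal obstacle will be carrying both estimates through an arbitrary pinning $\tau$, which can pre-saturate or forbid essentially arbitrary edges. The saving observation is that $\mu^\tau$ is itself a monomer-dimer distribution on the subgraph $(V, \Lambda)$ with some vertices pre-matched by $\tau$, so the self-avoiding walk bound transfers verbatim and the effective decay rate remains $\Theta(\sqrt{\ol{\lambda}\Delta})$ independent of $\tau$. Propagating this robustness uniformly over $\tau$, and in particular absorbing the stronger weight $\ell(\gamma) \cdot \mathbf{1}[(x\mapsto y) \in \gamma]$ appearing in the log-Sobolev hypothesis of \Cref{thm:log-Sob-HS-route-via-congestion} -- which is what causes the log-Sobolev bound to lose an additional factor of $\Delta$ relative to the Poincar\'e bound -- is the main technical task.
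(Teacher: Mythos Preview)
Your plan is essentially the paper's: build the transport flow as a local coupling plus the Jerrum--Sinclair canonical path, bound the squared length via monomer-dimer correlation decay, bound congestion via an injective encoding, and handle pinnings by self-reduction to the free subgraph. One concrete error: the marginal lower bound $\phi = \Omega(1/(1+\ol{\lambda}))$ is false. While $\mu^\tau_e(0) \ge 1/(1+\lambda)$ always holds, $\mu^\tau_e(1)$ can be as small as $\Theta\bigl(\lambda/(1+\lambda\Delta)^2\bigr)$ (take $e$ with both endpoints of degree $\Delta$), so $\log(1/\phi)$ is $\Theta(\log(\ol{\lambda}\Delta))$, not $\Theta(\log(1+\ol{\lambda}))$. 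This is exactly the origin of the $\log(\ol{\lambda}\Delta)$ factor in the stated log-Sobolev bound, and your quoted parameters $(\kappa',\phi)$ would yield a strictly stronger $\alpha_k$ than the lemma claims.

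A subtler point you have not addressed: the congestion hypothesis \eqref{eq:def-congestion} sums over \emph{all} $e \in \Lambda$, and a given coupled pair $(M_0,M_1)$ with disagreement component $B$ arises from the coupling $\+C_e$ for \emph{every} $e \in B$. The paper handles this via a decoupling identity (\Cref{lem:pre-decoupling}) that collapses the sum over $e$ to a single edge $h$ with $\alpha_h\neq\beta_h$, at the cost of an extra factor $|B|$; this identity relies on an explicit product formula for the coupling (\Cref{lem:explicit-formula}) that is immediate from the independent-sample-then-take-component construction but not obviously available from a sequential disagreement-growth description. The subsequent encoding is also not quite the classical one: rather than mapping to a single complementary matching, the paper maps the coupled pair measure-preservingly (up to $O(\ol{\lambda}^2)$) to a coupled pair on a \emph{modified} graph in which the two edges at the exchange transition are merged into one (\Cref{lem:enc}). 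These two steps together are why the $\ol{\lambda}$-powers in $\kappa$ and $\kappa'$ come out one higher than you state.
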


\Cref{thm:main} then follows from \Cref{lem:concave-JS}, \Cref{lem:local-functional-inequalities}, and \Cref{thm:general-results}.

\paragraph{The construction of transport flow.}
\Cref{lem:local-functional-inequalities} is proved by constructing a family of transport flows (\Cref{def:transport-flow}) and applying \Cref{thm:HS-route-via-congestion,thm:log-Sob-HS-route-via-congestion}. 
% Specifically, we need to construct the transport flow from $\mu^{\tau \land (e \gets 0)}$ to $\mu^{\tau \land (e \gets 1)}$ for any feasible pinning $\tau \in \{0,1\}^{E \setminus \Lambda}$ and every edge $e \in E \setminus \Lambda$ such that $|\Omega(\mu^\tau_e)| = 2$. Then, we can verify the local functional inequalities for any $Q^\tau \in \mathfrak{Q}$.
Specifically, for any feasible pinning $\tau \in \{0,1\}^{E \setminus \Lambda}$ and
any edge $e \in \Lambda$ with $|\Omega(\mu^\tau_e)| = 2$, we construct a transport flow from $\mu^{\tau \wedge (e \gets 0)}$ to $\mu^{\tau \wedge (e \gets 1)}$ 
that satisfies the congestion and path-length conditions required by \Cref{thm:HS-route-via-congestion,thm:log-Sob-HS-route-via-congestion}.
% To start with, we first consider the case without pinning, i.e., $\tau = \emptyset$. Fix an edge $e \in E$. We show how to construct the transport flow $\Gamma$ from $\mu^{e \gets 0}$ to $\mu^{e \gets 1}$.

For convenience, we present the construction under no pinning, i.e., while $\tau=\emptyset$. The general case follows by self-reducibility.
%\todo{WF: point to somewhere in the proof}
%The construction is inspired by the recursive coupling process for Holant problems~\cite{ChenZ23,ChenG24} and the classical canonical path~\cite{jerrum1989approximating}. 
% When constructing the flow, it is more convenient to view every sample from $\mu^{e \gets 0}$ and $\mu^{e \gets 1}$ as a subset of $E$.
The transport flow from $\mu^{e \gets 0}$ to $\mu^{e \gets 1}$  is constructed as follows:
\begin{itemize}
\item \textbf{Coupling step}: Sample a pair of configurations $(X,Y)$ from a coupling of $\mu^{e \gets 0}$ and $\mu^{e \gets 1}$;
\item \textbf{Canonical path step}: Construct a deterministic canonical path $\gamma$ from $X$ to $Y$ using the transitions of $Q$, following the construction in~\cite{jerrum1989approximating}.
\end{itemize}
The distribution $\Gamma_e=\Gamma_{e}^{0\to 1}$ of the resulting path $\gamma$ defines the transport flow from $\mu^{e \gets 0}$ to $\mu^{e \gets 1}$. 
Unlike the classical construction of canonical paths~\cite{jerrum1989approximating},
our approach constructs a path only between the configurations $(X,Y)$ generated by the coupling, rather than for all possible pairs.

To achieve this, we introduce the \emph{local flipping coupling} process for monomer-dimers.
%Our coupling guarantees that $U$ and $W$ have low expected discrepancy, so that the expected length of congestion of $\Gamma$ can be controlled. %The formal definition of the transport flow is given as follows.
\begin{construction}[local flipping coupling]\label{def:new-coupling-2}
  Let $G=(V,E)$ and $e \in E$.
  Let $\mu$ be  the monomer-dimer  distribution on $G$.  
  The local flipping coupling $\+C_e$ of $\mu^{e \gets 0}$ and $\mu^{e \gets 1}$ is constructed as follows: 
  \begin{enumerate} 
  \item Sample $Y \sim \mu^{e \gets 1}$ and $Z \sim \mu^{e \gets 0}$ independently.
  \item Decompose the differences between $Y$ and $Z$ into paths and cycles. 
  Let $B$ be the connected component (either a path or a cycle of even length) that contains the edge $e$.
  \item Construct $X = Z_B \uplus Y_{E \setminus B}$, the configuration obtained by concatenating $Z_B$ and $Y_{E \setminus B}$. 
  The outcome of the coupling is the pair $(X,Y)$.
  \end{enumerate}
\end{construction}

The matching $X$ is obtained from $Y$ by flipping a single alternating path or cycle $B$, which corresponds to the disagreeing component between $Y$ and $Z$ that contains the edge $e$.    
We will verify (in \Cref{lem:explicit-formula}) that the procedure in \Cref{def:new-coupling-2} indeed defines a valid coupling between $\mu^{e \gets 0}$ and $\mu^{e \gets 1}$. 
The pair $(X,Y)$ generated by this coupling differs only on $B$. 

To complete the construction of the transport flow, 
we construct a canonical path from $X$ to $Y$, following the classic strategy in~\cite{jerrum1989approximating}. 
We also assume a total ordering of the vertices in $G$.

%The coupling step is implemented by the following lemma.
%\begin{lemma}
%There exits a coupling $\+C$ between $\mu^{e \gets 0}$ and $\mu^{e \gets 1}$ such that for $(X,Y) \sim \+C$, where $X,Y \in 2^E$ (we view $\mu^{e \gets 0}$ and $\mu^{e \gets 1}$ as distributions over $2^E$), the following properties hold.
%\begin{itemize}
%\item (Shape of difference) With probability 1, the symmetric difference $X \triangle Y = (X \setminus Y) \cup (Y \setminus X)$ forms a path or a cycle in graph $G$;
%\item (One side coupling independence) For any integer $p \ge 1$ and subset $Z \subseteq E$ in the support of $\mu^{e \gets 1}$,
%\begin{align*}
%    \E[(X,Y) \sim \+C]{\, |X \triangle Y|^p \mid Y = Z\,} = O(\ol{\lambda}^p \Delta^p),
%\end{align*}
%where $\ol{\lambda} = \max\set{1,\lambda}$.
%\end{itemize}
%\end{lemma}

%\todo{WF: edit end}

\begin{construction}[transport flow for monomer-dimer distributions] \label{def:monomer-dimer-trans-path}
Let $G=(V,E)$ and $e \in E$.
The transport flow $\Gamma_e$ from $\mu^{e \gets 0}$ to $\mu^{e \gets 1}$ is constructed as follows: 
Sample $(X,Y)\sim\+C_e$ according to the local flipping coupling $\+C_e$ defined in \Cref{def:new-coupling-2}, 
and let $B =X\oplus Y\triangleq  \{f \in E \mid X_f \neq Y_f\}$ be the set of edges where $X$ and $Y$ disagree. 
\begin{enumerate}
\item If $B$ forms a path, let $w^*$ be the larger endpoint of the path $B$, and let $e_1 \in B$ be the unique edge incident to $w^*$. Number the remaining edges along the path as $e_2,e_3,\ldots,e_{|B|}$. 
\begin{enumerate}
\item If $X_{e_1} = 1$ (i.e., $e_1$ is in the matching $X$), 
first apply a down transition to remove $e_1$, 
then perform exchange transitions on each pair $(e_{2i},e_{2i+1})$ along the path for $1\leq i \leq \lfloor (|B|-1)/2 \rfloor$, 
and finally, apply an up transition on $e_{|B|}$ if $|B|$ is even. 
\item If $X_{e_1} = 0$ (i.e., $e_1$ is not in $X$), apply exchange transitions on each pair $(e_{2i-1},e_{2i})$ along the path for $1\leq i \leq \lfloor |B|/2 \rfloor$, and conclude with an up transition on $e_{|B|}$ if $|B|$ is odd.
\end{enumerate}

\item If $B$ forms an even-length cycle, 
let $w^*$ be the largest vertex in $B$, 
and let $e_1 = \{w,w^*\} \in X$ be the unique edge incident to $w^*$ in the matching $X$. 
Number the edges along the cycle in the direction $w^* \to w$ as $e_2,e_3,\ldots,e_{|B|}$, where $e_{|B|}$ is incident to $w^*$. 
Apply a down transition to remove $e_1$, 
perform exchange transitions on each pair $(e_{2i},e_{2i+1})$ along the cycle for $1\leq i \leq \frac{|B|}{2} - 1$,
and conclude with an up transition on $e_{|B|}$.
\end{enumerate}
\end{construction}

The following theorem shows the effectiveness of the transport flow in \Cref{def:monomer-dimer-trans-path}.
\begin{theorem} \label{thm:effective-trans-flow}
    Let $G=(V,E)$ be a simple undirected graph with $m=|E|$ edges and maximum degree $\Delta$, 
    and let $\mu$ be the monomer-dimer distribution on graph $G$ with edge weight $\lambda > 0$. Define $\ol{\lambda}=\max\{1, \lambda\}$.
    For every edge $e\in E$, the transport flow $\Gamma_e$ in~\Cref{def:monomer-dimer-trans-path} satisfies the following conditions:
    \begin{enumerate}
    \item\label{thm:trans-flow-length} $O(\ol{\lambda}\Delta  \log^2(1+\ol{\lambda}))$-expected squared length;
    \item\label{thm:trans-flow-congestion} $O(\ol{\lambda}^2  m)$-expected congestion;
    \item\label{thm:trans-flow-strong-congestion} strong $O(\ol{\lambda}^4 \Delta^2 m)$-expected congestion;
    % \item $O(\log(\Delta(\lambda+\frac{1}{\lambda})))$-marginal coefficient $\frac{\log(\frac{1}{\phi} - 1)}{1-2\phi}$ (defined in~\Cref{thm:log-Sob-HS-route-via-congestion});
    \end{enumerate}
    as required in \Cref{thm:HS-route-via-congestion,thm:log-Sob-HS-route-via-congestion}.
    %where $\ol{\lambda}=\max\{1, \lambda\}$. 
    %We use $\log(1+\ol{\lambda})$ instead of $\log(\ol{\lambda})$ to ensure the bound remains nonzero when $\lambda\le 1$.
\end{theorem}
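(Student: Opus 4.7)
The plan is to reduce all three parts of \Cref{thm:effective-trans-flow} to probabilistic estimates on the alternating component $B = X \oplus Y$ produced by the local flipping coupling $\+C_e$ of \Cref{def:new-coupling-2}, and then establish a correlation-decay tail bound on $|B|$ together with a Jerrum--Sinclair-style encoding argument adapted to the coupling. Using the equivalent two-sample formulation, first sample $Y \sim \mu^{e\gets 1}$ and $Z \sim \mu^{e\gets 0}$ independently and let $B$ be the component of $Y \oplus Z$ containing $e$. Since $X = Z_B \uplus Y_{E\setminus B}$ differs from $Y$ only on edges of $B$, the canonical path in \Cref{def:monomer-dimer-trans-path} uses exactly one JS-chain transition per edge of $B$, so $\ell(\gamma) = |B| \pm O(1)$. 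Part~\ref{thm:trans-flow-length} therefore reduces to the second-moment bound $\E{|B|^2} = O(\ol\lambda \Delta \log^2(1+\ol\lambda))$.

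For the length bound, I would explore $B$ as an alternating trail emanating from $e$ in both directions. At each growth step from a frontier vertex $v$, the continuation probability conditional on the already revealed portion decomposes as a sum over at most $\Delta - 1$ candidate extensions. Using Heilmann--Lieb style deletion/contraction monotonicity, each candidate's marginal is bounded by a quantity of order $\ol\lambda/(1+\ol\lambda)$, yielding a per-vertex continuation probability governed by $\ol\lambda \Delta/(1+\ol\lambda)$. A two-regime tail analysis, separating lengths $|B| = O(\ol\lambda \log(1+\ol\lambda))$ from longer lengths, then delivers $\E{|B|^2} = O(\ol\lambda \Delta \log^2(1+\ol\lambda))$, where the $\log^2$ factor arises from the logarithmic crossover between the two regimes.

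For the congestion bounds, I would adapt the classical Jerrum--Sinclair encoding to coupled pairs. Fix a transition $(x \mapsto y)$ and an edge $e$. Using the deterministic rules of \Cref{def:monomer-dimer-trans-path} (largest endpoint $w^\ast$, canonical edge ordering), one recovers any coupled pair $(X, Y)$ whose path passes through $(x \mapsto y)$ from $(x, y)$ together with a residual partial matching $\eta$ supported off the already-visited segment of $B$, plus $O(1)$ bits of side information. A weight comparison of the form
\begin{align*}
\mu_e(0)\mu_e(1)\cdot \mu^{e\gets 0}(X)\,\mu^{e\gets 1}(Y) \le O(\ol\lambda^2) \cdot \mu(x)\,\mu(\eta)
\end{align*}
follows by matching up occupied edges on both sides and absorbing the $O(1)$ discrepancy into the $\ol\lambda^2$ factor. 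Summing $\eta$ over its feasible values contributes total mass at most $1$, while the $1/\ell(\gamma)$ factor in the expected-congestion formula cancels the sum over $e \in B$. After dividing by $\mu(x)Q(x,y)$ and absorbing the factor $m$ from the uniform edge selection of $Q$, this yields the $O(\ol\lambda^2 m)$ bound of Part~\ref{thm:trans-flow-congestion}. For Part~\ref{thm:trans-flow-strong-congestion}, the same encoding applies with $1/\ell(\gamma)$ replaced by $\ell(\gamma)$; the extra $\ell(\gamma)^2$ factor is controlled in expectation via the length tail bound, upgrading $\ol\lambda^2 m$ to $O(\ol\lambda^4 \Delta^2 m)$.

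The main obstacle is the tail estimate on $|B|$. Unlike the classical canonical-path bound $|B| \le n$, here I need an $n$-independent tail that scales as $\ol\lambda \Delta$ up to polylogs in $\ol\lambda$; the subtlety is that the Heilmann--Lieb marginal bounds must be applied conditionally along an exploration whose growth probabilities depend on the already-revealed portion of the two independent samples, and carefully decoupling these dependencies -- likely via a monotone revealment scheme -- is the main technical hurdle. A secondary issue is verifying that \Cref{def:new-coupling-2} genuinely produces a valid coupling of $\mu^{e\gets 0}$ and $\mu^{e\gets 1}$, which is handled separately by the claimed \Cref{lem:explicit-formula}.
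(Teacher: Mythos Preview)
Your high-level structure is right, but there are two concrete gaps.

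\textbf{Length bound.} The continuation probability you write, ``governed by $\ol\lambda\Delta/(1+\ol\lambda)$'', is not bounded below $1$ once $\Delta\ge 2$, so it cannot drive a geometric tail. The paper's analysis (\Cref{lem:path-length-geo-bound}) does not bound a one-step continuation probability at all; it shows that a \emph{two-step} termination occurs with probability at least $q=\tfrac{2}{1+\sqrt{1+\lambda\Delta}}$, via the Bayati--Gamarnik optimisation over the frontier marginals $p_w$. This is exactly what yields the $O(\sqrt{\lambda\Delta})$ expected length and hence the $O(\lambda\Delta)$ second moment you need. The $\log^2(1+\ol\lambda)$ factor then appears only in the cycle case, through a cycle-to-path comparison (equation~\eqref{eq:prob-cycle-path}) truncated at threshold $k\asymp q^{-1}\log\ol\lambda$; it is not a generic crossover artefact.

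\textbf{Congestion bound.} Your weight comparison
\[
\mu_e(0)\mu_e(1)\cdot \mu^{e\gets 0}(X)\,\mu^{e\gets 1}(Y)\ \le\ O(\ol\lambda^2)\,\mu(x)\,\mu(\eta)
\]
uses the product $\mu^{e\gets 0}(X)\mu^{e\gets 1}(Y)$, but $\+C_e$ is \emph{not} the product coupling: \Cref{lem:explicit-formula} gives $\mu_e(0)\mu_e(1)\Pr_{\+C_e}[(X,Y)]=\mu_{\ol\partial B}(X_{\ol\partial B})\,\mu(Y)$. This identity is the actual reason the sum over $e$ collapses (it is constant in $e\in B$ up to the $X\leftrightarrow Y$ symmetry), and the paper packages it as the decoupling \Cref{lem:pre-decoupling}. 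After decoupling, the paper does not encode to a single residual matching $\eta$; it encodes $(X,Y)$ to a pair in a coupling $\wh{\+C}_{h^\star}$ on a modified graph $\wh G$ (\Cref{lem:enc}), so that the image always satisfies $\wh Y=\mathsf{proj}(\alpha\cup\beta)$. The point of landing back in a coupling is that the leftover factor $|B|^p$ for strong congestion is then bounded by the \emph{one-sided} moment $\E_{\wh{\+C}_{h^\star}}[|\wh X\oplus\wh Y|^p\mid \wh Y]=O_p(\ol\lambda^p\Delta^p)$ (\Cref{lem:one-sided-CI}), not by the unconditional tail of $|B|$ as you propose. Conditioning on $(\alpha\mapsto\beta)\in\gamma$ alters the law of $|B|$, so the unconditional second moment from Part~\ref{thm:trans-flow-length} is not directly usable here; this is precisely why the paper incurs $\ol\lambda^2\Delta^2$ rather than $\ol\lambda\Delta\log^2(1+\ol\lambda)$ in the strong-congestion upgrade.
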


Then, \Cref{lem:local-functional-inequalities} follows almost immediately from a combination of \Cref{thm:HS-route-via-congestion,thm:log-Sob-HS-route-via-congestion,thm:effective-trans-flow}, utilizing the  marginal lower bound $\phi=\Omega\left(\frac{\lambda}{(1+\lambda\Delta)^2}\right)$ known for monomer-dimer distributions. 
The only remaining technical step is to extend the bounds in \Cref{thm:effective-trans-flow} to transport flows under arbitrary pinning. This, along with a formal proof of \Cref{thm:effective-trans-flow}, is the main focus of \Cref{sec:app-Jerrume-Sinclair}.

\section{General Approach to Mixing Times}
In this section, we introduce a systematic framework for analyzing the mixing times of general reversible Markov chains on $\Omega\subseteq[q]^E$.
Our approach consists of two key components:
In \Cref{sec:local-to-global}, 
    we establish a ``local-to-global''  argument that lifts local Poincar\'e and log-Sobolev inequalities to their global counterparts that imply mixing time bounds.
In \Cref{sec:transport-flow}, we show how these local functional inequalities  can be derived through the construction of transport flows.

\subsection{Rapid mixing via local functional inequalities}\label{sec:local-to-global}

%We have the following relationship between $\+E_{Q}(f,f)$ and $\+E_{Q^{\tau}}(f,f)$:
%\begin{proposition} \label{prop:Dirichlet-form}
%    For every set $\Lambda \subseteq E$ and every function $f:\Omega\to\^R$, it holds that
%    \begin{align*}
%    \E[\tau \sim \mu_\Lambda]{\+E_{Q^{\tau}}(f, f)} \leq \+E_{Q}(f, f).
%    \end{align*}
%\end{proposition}

%\begin{proof}[Proof of~\Cref{thm:half-inequalities}]
We prove \Cref{thm:general-results}, a ``local-to-global'' theorem for the Poincar\'e and log-Sobolev inequalities. 

A key step in the proof is a one-step comparison lemma.
Let $\mu$ be a distribution supported on $\Omega \subseteq [q]^E$. %and let $\mu^\tau$ denote the conditional distribution given a feasible pinning $\tau \in [q]^{E \setminus \Lambda}$.
Suppose $\mathfrak{Q}$ is a family of reversible Markov chains for the conditional distributions induced by $\mu$. 
Consider the chain $Q^\tau \in\mathfrak{Q}$ for $\mu^\tau$ with feasible pinning $\tau \in [q]^{E \setminus \Lambda}$.
The following lemma compares the chain $Q^\tau$ to the chains $Q^{\tau \land (e \gets c)}$ for all $\mu^{\tau \land (e \gets c)}$ with  $e \in  \Lambda$ and  $c \in \Omega(\mu_e^\tau)$.

\begin{lemma}[one-step comparison lemma]\label{lem:one-step-comparison}
Suppose the Dirichlet forms satisfy
\begin{align}\label{eq:lemma-Dirichlet}
 \frac{1}{|\Lambda|} \sum_{e \in \Lambda} \E[c \sim \mu_e^\tau ]{\+E_{Q^{\tau \land (e \gets c)}}(f,g)} \leq \+E_{Q^\tau}(f,g), \quad \forall f,g: \Omega \to \mathbb{R}.
\end{align}
Then, the following one-step comparison results hold:
\begin{itemize}
  \item (Poincar\'e inequality) Suppose the Poincaré constants satisfy   $\gamma(Q^{\tau \land (e \gets c)})\ge\gamma_0$ for all chains $Q^{\tau \land (e \gets c)}$, 
  and the chain $Q^\tau$ satisfies the $\alpha$-local Poincar\'e inequality.
  % i.e., for all $f: \Omega(\mu^\tau) \to \mathbb{R}$, 
  % $$\alpha \cdot \sum_{e\in \Lambda} \Var{\E{\*F \mid \*X_e}} \leq  \+E_{Q^\tau}(f,f),$$ 
  % where $X \sim \mu^\tau$ and $\*F =  f(\*X)$, 
  Then, the Poincar\'e constant of $Q^\tau$ satisfies 
  \[\frac{1}{\gamma(Q^\tau)}\leq \frac{1}{\gamma_0} + \frac{1}{|\Lambda|\alpha}.\]
  \item (log-Sobolev inequality) Suppose the log-Sobolev constants satisfy $\rho(Q^{\tau \land (e \gets c)})\ge\rho_0$ for all $Q^{\tau \land (e \gets c)}$, 
  and $Q^\tau$ satisfies the $\alpha$-local log-Sobolev inequality.
  % i.e., for all $f: \Omega(\mu^\tau) \to \mathbb{R}$, 
  % $$\alpha \cdot \sum_{e\in \Lambda} \Ent{\E{\*F^2 \mid \*X_e}} \leq  \+E_{Q^\tau}(f,f),$$ 
  Then, the log-Sobolev constant of $Q^\tau$ satisfies
 \[\frac{1}{\rho(Q^\tau)}\leq \frac{1}{\rho_0} + \frac{1}{|\Lambda|\alpha}.\]
\end{itemize}
\end{lemma}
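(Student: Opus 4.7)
The plan is to combine a standard one-coordinate decomposition (the law of total variance in the Poincar\'e case, the law of total entropy in the log-Sobolev case) with the Dirichlet-form concavity hypothesis~\eqref{eq:lemma-Dirichlet}, the assumed Poincar\'e/log-Sobolev bound on chains with one additional pinning, and the local functional inequality on $Q^\tau$ itself. The two cases are structurally identical, so I sketch the Poincar\'e case in detail and note the adaptation for log-Sobolev afterwards.

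For each $e \in \Lambda$, the law of total variance (conditioning on $\*X_e = c$ under $\mu^\tau$) gives the identity
\begin{align*}
\Var[\mu^\tau]{f} \;=\; \E[c \sim \mu_e^\tau]{\Var[\mu^{\tau \land (e \gets c)}]{f}} \;+\; \Var[\mu_e^\tau]{\E[\mu^{\tau \land (e \gets c)}]{f}},
\end{align*}
and averaging this identity over $e \in \Lambda$ preserves it. For the first (``within''-variance) term, the assumed inequality $\Var[\mu^{\tau \land (e \gets c)}]{f} \le \gamma_0^{-1}\,\+E_{Q^{\tau \land (e \gets c)}}(f,f)$ for every $c$, followed by the Dirichlet-form concavity~\eqref{eq:lemma-Dirichlet} with $g=f$, yields
\begin{align*}
\frac{1}{|\Lambda|}\sum_{e \in \Lambda}\E[c \sim \mu_e^\tau]{\Var[\mu^{\tau \land (e \gets c)}]{f}}
\;\le\; \frac{1}{\gamma_0}\cdot \frac{1}{|\Lambda|}\sum_{e \in \Lambda}\E[c \sim \mu_e^\tau]{\+E_{Q^{\tau \land (e \gets c)}}(f,f)}
\;\le\; \frac{1}{\gamma_0}\,\+E_{Q^\tau}(f,f).
\end{align*}
For the second (``between''-variance) term, the $\alpha$-local Poincar\'e inequality for $Q^\tau$ reads $\alpha\sum_{e \in E}\Var[\mu^\tau]{\E[\mu^\tau]{\*F\mid\*X_e}}\le\+E_{Q^\tau}(f,f)$. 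The summands with $e \in E\setminus\Lambda$ vanish because $\*X_e$ is a.s.\ constant under $\mu^\tau$, while for $e \in \Lambda$ the summand equals $\Var[\mu_e^\tau]{\E[\mu^{\tau\land(e\gets c)}]{f}}$; hence the average over $e \in \Lambda$ is at most $\frac{1}{|\Lambda|\alpha}\+E_{Q^\tau}(f,f)$. Summing the two estimates gives $\Var[\mu^\tau]{f}\le\left(\frac{1}{\gamma_0}+\frac{1}{|\Lambda|\alpha}\right)\+E_{Q^\tau}(f,f)$, which is the asserted bound on $1/\gamma(Q^\tau)$.

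The log-Sobolev case proceeds by the same three-step template, with variance replaced everywhere by the entropy of $f^2$. The required law of total entropy,
\begin{align*}
\Ent[\mu^\tau]{f^2} \;=\; \E[c\sim\mu_e^\tau]{\Ent[\mu^{\tau\land(e\gets c)}]{f^2}} \;+\; \Ent[\mu_e^\tau]{\E[\mu^{\tau\land(e\gets c)}]{f^2}},
\end{align*}
follows by expanding $\Ent[\mu^\tau]{f^2}=\E{f^2\log f^2}-\E{f^2}\log\E{f^2}$ and conditioning on $\*X_e$; its second summand matches exactly the $\Ent{\E{\*F^2\mid\*X_e}}$ appearing in the local log-Sobolev definition, and the pinned-coordinate contributions again vanish trivially. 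I do not anticipate a serious obstacle: the nontrivial combinatorial content has been absorbed into hypothesis~\eqref{eq:lemma-Dirichlet} (later supplied by, e.g., \Cref{lem:concave-JS}), and the one-step comparison itself is a clean functional-analytic manipulation.
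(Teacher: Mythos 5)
Your proposal is correct and follows essentially the same route as the paper's proof: the per-coordinate law of total variance (resp.\ entropy) averaged over $e\in\Lambda$, with the conditional term bounded via the pinned chains' Poincar\'e/log-Sobolev constants plus the concavity hypothesis~\eqref{eq:lemma-Dirichlet}, and the projected term bounded by the local functional inequality for $Q^\tau$. Your explicit remark that the coordinates in $E\setminus\Lambda$ contribute zero to the local inequality is a small clarification the paper leaves implicit, but the argument is otherwise identical.
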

\begin{proof}
  Let $\*X \sim \mu^\tau$ be a random configuration, and define $\*F \triangleq  f(\*X)$, where $f:\Omega \to \mathbb{R}$ is an arbitrary function.
  Since $\tau$ fixes the values on $E \setminus \Lambda$, 
  we apply the law of total variance for each free variable $e \in \Lambda$ and take the average, yielding
\begin{align}\label{eq:law-tot-var}
    \Var{\*F} 
    &= \frac{1}{|\Lambda|} \sum_{e\in \Lambda} \Var{\E{\*F \mid \*X_e}} + \frac{1}{|\Lambda|} \sum_{e \in \Lambda} \E{\Var{\*F \mid \*X_e}}.
\end{align}
Thus, to bound $\Var[\mu^\tau]{f} = \Var{\*F}$, it suffices to bound each term on the right-hand side separately.
Since $Q^\tau$ satisfies the $\alpha$-local Poincar\'e inequality, we have
\begin{align} \label{eq:IH-first-term}
    \sum_{e\in  \Lambda} \Var{\E{\*F \mid \*X_e}} \leq  \frac{1}{\alpha} \cdot \+E_{Q^\tau}(f, f).
\end{align}
For the second term, by the assumption on the Poincar\'e constants for all $Q^{\tau \land (e \gets c)}$ and~\eqref{eq:lemma-Dirichlet}, we get
\begin{align} \label{eq:IH-second-term}
    \sum_{e \in \Lambda}\E{\Var{\*F \mid \*X_e}} &= \sum_{e \in \Lambda} \sum_{c \in \Omega(\mu_e)} \mu^\tau_e(c) \cdot \Var[\*X \sim \mu^{\tau \land (e \gets c)}]{f(\*X)}\notag\\
    &\leq  \frac{1}{\gamma_0} \sum_{e \in \Lambda} \E[c \sim \mu^\tau_e]{\+E_{Q^{\tau \land (e \gets c)}}(f,f)}\notag\\
    &\leq \frac{|\Lambda|}{\gamma_0}\cdot \+E_{Q^\tau}(f, f).
\end{align}
Combining \eqref{eq:law-tot-var}, \eqref{eq:IH-first-term}, and \eqref{eq:IH-second-term}, we have
\begin{align*}
    \Var{\*F}
    &\leq \frac{1}{|\Lambda|} \cdot \frac{1}{\alpha} \cdot \+E_{Q^\tau}(f, f) + \frac{1}{\gamma_0} \cdot \+E_{Q^\tau}(f, f)
    = \tp{\frac{1}{\gamma_0} + \frac{1}{|\Lambda| \alpha}}\cdot \+E_{Q^\tau}(f, f).
\end{align*}
This establishes the bound on the Poincar\'e constant.

For the log-Sobolev constant, we apply the law of total entropy:
\begin{align*}
  \Ent{\*F^2} 
  &= \frac{1}{\abs{\Lambda}} \sum_{e\in \Lambda} \Ent{\E{\*F^2 \mid \*X_e}} + \frac{1}{\abs{\Lambda}} \sum_{e \in \Lambda} \E{\Ent{\*F^2 \mid \*X_e}}.
\end{align*}
The remaining steps follow analogously to the proof of Poincar\'e constant.
\end{proof}

\begin{proof}[Proof of~\Cref{thm:general-results}]
We begin with the first part of \Cref{thm:general-results}, concerning the Poincar\'e inequality.
Let $m = |E|$.
For any feasible pinning $\tau\in[q]^{E \setminus \Lambda}$, where $\Lambda \subseteq E$ is arbitrary, we aim to establish:
\begin{align} \label{eq:IH-half-ineq}
    \Var[\mu^\tau]{f} &\leq \sum_{i=1}^{|\Lambda|}\frac{1}{i\alpha_{i}} \cdot \+E_{Q^\tau}(f, f),
\end{align}
where $\alpha_i$ for $1 \leq i \leq m$ are the parameters of local Poincar\'e inequalities assumed in \Cref{thm:general-results}.

We proceed by induction on $|\Lambda|$.
For the base case $\Lambda = \emptyset$, $\mu^\tau$ is a Dirac measure, implying $\Var[\mu^\tau]{f} = 0$, so \eqref{eq:IH-half-ineq} holds trivially.

Now assume \eqref{eq:IH-half-ineq} holds for every $\Lambda' \subseteq E$ with $\abs{\Lambda'} < k$ and every $\tau' \in \Omega(\mu_{ E \setminus \Lambda'})$. 
We show that it also holds for every $\Lambda$ with $|\Lambda|=k$ and every $\tau \in \Omega(\mu_{E \setminus \Lambda})$. 
Applying the induction hypothesis, \Cref{def:HS-route-family}, \Cref{def:concave-Dirichlet-forms,lem:one-step-comparison} (with parameter $\alpha=\alpha_{k}$), 
we obtain
\begin{align*}
  \frac{1}{\gamma(Q^\tau)} \leq \sum_{i=1}^{k-1}\frac{1}{i\alpha_{i}} + \frac{1}{\alpha_k k} = \sum_{i=1}^{k} \frac{1}{i\alpha_{i}}.
\end{align*}
This completes the induction.

Setting $\Lambda = E$ and $\tau = \emptyset$, the inequality \eqref{eq:IH-half-ineq} implies that the Poincar\'e constant of $Q$ is at least $\tp{\sum_{i=1}^{m}\frac{1}{i\alpha_{i}}}^{-1}$. 
This proves the first part of \Cref{thm:general-results}.
%For the result considered in \Cref{rem:general-results-remark}, when $\alpha_i=\Omega(\frac{1}{i})$, it follows that $\gamma(Q)=\Omega(\frac{1}{m})$.

The second part of \Cref{thm:general-results}, concerning the log-Sobolev constant, follows similarly.
\end{proof}

\subsection{Local functional inequalities via transport flow}\label{sec:transport-flow}
We now prove \Cref{thm:HS-route-via-congestion} and \Cref{thm:log-Sob-HS-route-via-congestion}, establishing the local Poincar\'e and local log-Sobolev inequalities, respectively, by constructing transport flows with short paths and low congestion.

First, we establish the \HSRV{}.
Under the assumptions of bounded expected squared length and expected congestion, the \HSRV{} follows from an application of the Cauchy-Schwarz inequality, which also played a crucial role in the classic canonical path analysis by Jerrum and Sinclair \cite{jerrum2003counting}. 
However, unlike the canonical path method, which relies on the worst-case congestion and path length across all state pairs, our approach derives local functional inequalities based on the average congestion and path length between coupled pairs. 
This refinement allows for a significant improvement in the (local) Poincar\'e constant.

%With the bounded expected length and bounded expected congestion assumption, the \HSRV{} is established via a Cauchy-Schwarz inequality, which is similar to the canonical path method. However, compared to the canonical path method, the reliance on the expected length rather than maximum length of configuration path allows us to shave an factor of $n$ from the (local) Poincar\'e constant.

\begin{proof}[Proof of~\Cref{thm:HS-route-via-congestion}]
  Recall that $\mu$ is a distribution with support $\Omega \subseteq [q]^E$.
Let $f:\Omega \to \mathbb{R}$ be a function, let $X$ be drawn as $\*X \sim \mu$, and define $\*F = f(\*X)$.
  Fix an edge $e \in E$. Then, %variance $\Var{\E{\*F \mid \*X_e}}$ can be expressed as: 
\begin{align*}
  \Var{\E{\*F \mid \*X_e}}
  &= \sum_{a,b \in [q]: a < b} \mu_e(a)\mu_e(b) \tp{\E[]{\*F \mid \*X_e = a} - \E[]{\*F \mid \*X_e = b}}^2.
\end{align*}
We can restrict the summation to those $a,b \in \Omega(\mu_e)\subseteq [q]$, where $\Omega(\mu_e)$ is the support of $\mu_e$, since others contribute zero to the sum.
Recall that $\Gamma_{e}^{a\to b}$ is the transport flow from $\mu^{e \gets a}$ to $\mu^{e \gets b}$,
and let $\Omega(\Gamma_e^{a\to b})$ denote its support. 
Define $\textsf{CP}_e^{a\to b}(x, y)\triangleq \{\gamma \in \Omega(\Gamma_e^{a\to b}):(x \mapsto y) \in \gamma\}$.

For a random path $\gamma = (\gamma_0, \gamma_1, \cdots, \gamma_{\ell(\gamma)}) \sim \Gamma_e^{a\to b}$, 
we have $\gamma_0\sim\mu^{e \gets a}$ and  $\gamma_{\ell(\gamma)}\sim\mu^{e \gets b}$.
Fixing $a, b \in \Omega(\mu_e)$ with $a < b$, by the linearity of expectation, we have
\begin{align*}
  \E[]{\*F \mid \*X_e = a} - \E[]{\*F \mid \*X_e = b}
  =&\E[\gamma \sim \Gamma_e^{a\to b}]{f(s(\gamma)) - f(t(\gamma))}
  =\E[\gamma \sim \Gamma_e^{a\to b}]{\sum_{0 \leq i < \ell(\gamma)} f(\gamma_{i}) - f(\gamma_{i+1})}\\
  =&\sum_{x,y \in \Omega:Q(x,y)>0}\sum_{\gamma \in \textsf{CP}_e^{a\to b}(x, y)}\Gamma_e^{a\to b}(\gamma)(f(x) - f(y))
%  =&\sum_{x,y \in \Omega:Q(x,y)>0}\sum_{\gamma \in \textsf{CP}_e^{a\to b}(x, y)}\sqrt{\Gamma_e^{a\to b}(\gamma) \cdot \ell(\gamma)} \cdot \sqrt{\frac{\Gamma_e^{a\to b}(\gamma)}{\ell(\gamma)}}(f(x)-f(y))
\end{align*}
Applying the Cauchy-Schwarz inequality gives
\begin{align*}
  &\tp{\E[]{\*F \mid \*X_e = a} - \E[]{\*F \mid \*X_e = b}}^2\\
  \le\,&\tp{\sum_{x,y \in \Omega:Q(x,y)>0}\sum_{\gamma \in \textsf{CP}_e^{a\to b}(x, y)}\Gamma_e^{a\to b}(\gamma) \cdot \ell(\gamma)} \cdot \tp{\sum_{x,y \in \Omega:Q(x,y)>0}\sum_{\gamma \in \textsf{CP}_e^{a\to b}(x, y)}\frac{\Gamma_e^{a\to b}(\gamma)}{\ell(\gamma)}(f(x)-f(y))^2}\\
  =\,&\E[\gamma \sim \Gamma_e^{a\to b}]{\ell(\gamma)^2} \cdot \tp{\sum_{x,y \in \Omega:Q(x,y)>0}\E[\gamma \sim \Gamma_e^{a\to b}]{\frac{\*1[(x \mapsto y) \in \gamma]}{\ell(\gamma)}}(f(x)-f(y))^2}.
\end{align*}
By the assumption of bounded expected squared path length, we have 
$\E[\gamma \sim \Gamma_e^{a\to b}]{\ell(\gamma)^2} \le L$. 
Summing over all $a, b \in \Omega(\mu_e)$ with $a < b$ and over all $e\in E$ gives
\begin{align*}
  &\sum_{e \in E}\Var{\E{\*F \mid \*X_e}}\\
  \le\,&L\cdot\sum_{x,y \in \Omega:Q(x,y)>0}(f(x)-f(y))^2 \sum_{a,b \in [q]:a<b} \sum_{e \in E: a,b \in \Omega(\mu_e)} \mu_e(a)\mu_e(b) \cdot \E[\gamma \sim \Gamma_e^{a\to b}]{\frac{\*1[(x \mapsto y) \in \gamma]}{\ell(\gamma)}}
\end{align*}
By the assumption of bounded congestion, we have
\begin{align*}
  \sum_{e \in E}\Var{\E{\*F \mid \*X_e}} \leq q^2 \kappa L \cdot \sum_{x,y \in \Omega:Q(x,y)>0} \mu(x) Q(x,y) (f(x)-f(y))^2 = 2 q^2 \kappa L \cdot \+E_Q(f,f).
\end{align*}
This establishes the local Poincar\'e inequality.
\end{proof}

We also prove the local log-Sobolev inequality based on the existence of good transport flows.

\begin{proof}[Proof of \Cref{thm:log-Sob-HS-route-via-congestion}]
We continue using the notation from the proof of \Cref{thm:HS-route-via-congestion}, only this time, we analyze $\Ent[]{\E{\*F^2 \mid \*X_e}}$ instead of $\Var[]{\E{\*F \mid \*X_e}}$ and aim to establish the inequality in~\eqref{eq:def-HS-route-E}.

%Let $\Omega(\mu_e) \subseteq [q]$ denote the support of $\mu_e$. 
% We may assume $|\Omega(\mu_e)| > 1$. Otherwise, $\Ent{\E{\*F^2 \mid \*X_e}}$ in~\eqref{eq:def-HS-route-E} is 0 and we do not need to consider this term.
Fix an edge $e \in E$ with $|\Omega(\mu_e)| > 1$ (if no such edge $e$ exists, then \eqref{eq:def-HS-route-E} holds trivially).
Define $g_e: \Omega(\mu_e) \to \mathbb{R}$ by $g_e(c) = \E[]{\*F^2 \mid \*X_e = c}$. 
Consider the trivial  Markov chain $P_e$ that mixes to $\mu_e$ in one step,
with transition matrix $P_e(c,c') = \mu_e(c')$ for all $c,c' \in \Omega(\mu_e)$. 
The log-Sobolev constant of $P_e$ is at least $\frac{1-2\mu_e^*}{\log(1/\mu_e^* - 1)}$~\cite[Theorem A.1]{Diaconis1996logarithmic}, where $\mu_e^* = \min\{\mu_e(c)\mid c \in \Omega(\mu_e)\}$.
Since the function $y = \frac{1-2x}{\log(1/x - 1)}$ is increasing on $(0,1/2)$, we conclude that the log-Sobolev constant of $P_e$ is at least $\frac{1-2\phi}{\log(1/\phi - 1)}$, where $0 < \phi \leq \frac{1}{2}$ because $|\Omega(\mu_e)| > 1$.
Applying the log-Sobolev inequality, 
\begin{align}\label{eq:ent-sqrt-g-e}
  \Ent[]{\E[]{\*F^2 \mid \*X_e}} = \Ent[\mu_e]{g_e} &\leq \tp{\frac{\log(\frac{1}{\phi} - 1)}{1-2\phi}}\+E_{P_e}(\sqrt{g_e},\sqrt{g_e}).
 % = \tp{\frac{\log(\frac{1}{\phi} - 1)}{1-2\phi}}\Var[\mu_e]{\sqrt{g_e}}.
\end{align}
Note that $\+E_{P_e}(\sqrt{g_e},\sqrt{g_e})=\Var[\mu_e]{\sqrt{g_e}}$.
%where the variance  $\Var[\mu_e]{\sqrt{g_e}}$ is given by
By definition of the variance,
\begin{align*}
  %\+E_{P_e}(\sqrt{g_e},\sqrt{g_e})=
  \Var[\mu_e]{\sqrt{g_e}} = \sum_{a,b \in \Omega(\mu_e): a < b} \mu_e(a)\mu_e(b) \tp{\sqrt{\E[]{\*F^2 \mid \*X_e = a}} - \sqrt{\E[]{\*F^2 \mid \*X_e = b}}}^2.
\end{align*}
Let $\gamma = (\gamma_0, \gamma_1, \cdots, \gamma_{\ell(\gamma)})\sim \Gamma_{e}^{a\to b}$ be a path generated according to the transport flow $\Gamma_{e}^{a\to b}$ from $\mu_e^{e \gets a}$ to $\mu_e^{e \gets b}$.
Applying Jensen's inequality to the convex function $h(x,y) = (\sqrt{x} - \sqrt{y})^2$, 
%. The starting point and end point $(s(\gamma), t(\gamma))$ of a random path $\gamma \sim \Gamma_{e}^{a\to b}$ forms a coupling of $\mu_e^{e \gets a}$ and $\mu_e^{e \gets b}$. Recall $\*F = f(\*X)$ where $\*X \sim \mu$. We have
\begin{align*}
  &\tp{\sqrt{\E{\*F^2 \mid \*X_e = a}} - \sqrt{\E{\*F^2 \mid \*X_e = b}}}^2 = \tp{\sqrt{\E[\gamma \sim \Gamma_{e}^{a\to b}]{f^2(s(\gamma))}} - \sqrt{\E[\gamma \sim \Gamma_{e}^{a\to b}]{f^2(t(\gamma))}}}^2\\
  &(\text{by convexity}) \quad \leq 
  %\E[\gamma \sim \Gamma_{e}^{a\to b}]{\tp{\sqrt{f^2(s(\gamma))} - \sqrt{f^2(t(\gamma))}}^2} = 
  \E[\gamma \sim \Gamma_{e}^{a\to b}]{\tp{f(s(\gamma)) - f(t(\gamma))}^2}
  = \E[\gamma \sim \Gamma_{e}^{a\to b}]{\tp{\sum_{0\leq i < \ell(\gamma)}\tp{f(\gamma_{i}) - f(\gamma_{i+1})}}^2}.
\end{align*}
% Summing over all transport flows $\Gamma_{e}^{a\to b}$ for $a,b \in \Omega(\mu_e)$ with $a<b$, we have
% %Then, the variance $\Var[\mu_e]{\sqrt{g_e}}$ is the bounded as
% \begin{align}\label{eq:var-sqrt-g-e}
%   \Var[\mu_e]{\sqrt{g_e}} \leq \sum_{a,b \in \Omega(\mu_e): a < b} \mu_e(a)\mu_e(b)\E[\gamma \sim \Gamma_{e}^{a\to b}]{\tp{\sum_{0\leq i < \ell(\gamma)}f(\gamma_{i}) - f(\gamma_{i+1})}^2}.
% \end{align}
Applying the Cauchy-Schwarz inequality gives
\begin{align*}
  &\E[\gamma \sim \Gamma_{e}^{a\to b}]{\tp{\sum_{0\leq i < \ell(\gamma)}\tp{f(\gamma_{i}) - f(\gamma_{i+1})}}^2} \leq  \E[\gamma \sim \Gamma_{e}^{a\to b}]{\ell(\gamma) \sum_{0\leq i < \ell(\gamma)} (f(\gamma_i)-f(\gamma_{i+1}))^2 }\\
  =\,& \E[\gamma \sim \Gamma_{e}^{a\to b}]{\ell(\gamma) \sum_{x,y \in \Omega: Q(x,y)>0} (f(x)-f(y))^2 \mathbf{1}[(x \mapsto y) \in \gamma] }\\
  =\,& \sum_{x,y \in \Omega: Q(x,y)>0} (f(x)-f(y))^2 \cdot \E[\gamma \sim \Gamma_{e}^{a\to b}]{\ell(\gamma) \mathbf{1}[(x \mapsto y) \in \gamma] }.
\end{align*}
Summing over all $a,b \in \Omega(\mu_e)$ with $a<b$ and over all $e \in E$, we obtain 
\begin{align*}
  \sum_{e \in E}\Var[\mu_e]{\sqrt{g_e}}
  \le\,& \sum_{x,y \in \Omega:\atop Q(x,y)>0}(f(x)-f(y))^2 \sum_{a,b \in [q]:\atop a<b}\sum_{e \in E:\atop a,b \in \Omega(\mu_e)}\mu_e(a)\mu_e(b) \E[\gamma \sim \Gamma_{e}^{a\to b}]{\ell(\gamma) \mathbf{1}[(x \mapsto y) \in \gamma]}\\
  \le\,& q^2\kappa\sum_{x,y \in \Omega: Q(x,y)>0}\mu(x)Q(x,y)(f(x)-f(y))^2 
  = 2q^2\kappa \cdot \+E_Q(f,f),
\end{align*}
where the last inequality follows by strong $\kappa$-expected congestion.
Combining with \eqref{eq:ent-sqrt-g-e}, we have
\[
  \sum_{e \in E}\Ent[]{\E[]{\*F^2 \mid \*X_e}}
  \le \tp{\frac{\log(\frac{1}{\phi} - 1)}{1-2\phi}}\sum_{e \in E}\Var[\mu_e]{\sqrt{g_e}}
  \le 2q^2\kappa \cdot \frac{\log(\frac{1}{\phi} - 1)}{1-2\phi} \cdot \+E_Q(f,f). \qedhere
\]
%This establishes the \HSRE.
%Note that $\sqrt{x}$ is a concave function, by Jensen's inequality, $\E[]{\sqrt{\*F^2 \mid \*X_e=c}} \geq \E[]{\*F \mid \*X_e=c}$. Therefore, we have
%\begin{align*}
%  \Ent[]{\E[]{\*F^2 \mid \*X_e}} \leq \tp{\frac{\log(\frac{1}{\phi} - 1)}{1-2\phi}}\tp{\E[]{\*F^2} - \E[]{\*F \mid \*X_e}^2} = \tp{\frac{\log(\frac{1}{\phi} - 1)}{1-2\phi}} \Var[]{f}.
%\end{align*}
%This completes the proof.
\end{proof}

\section{Application to the Jerrum-Sinclair Chain}\label{sec:app-Jerrume-Sinclair}
In this section, we apply the general framework from the previous section to the Jerrum-Sinclair chain on monomer-dimers,
utilizing the transport flow defined in \Cref{def:monomer-dimer-trans-path}.
Specifically:
\begin{itemize}
    \item In \Cref{sec:local-flipping-coupling},  we confirm the validity of the transport flow defined in \Cref{def:monomer-dimer-trans-path}.
    \item In \Cref{sec:length-overview} and \Cref{sec:expect-conges}, we bound the path length and congestion of this transport flow, respectively,  proving the corresponding bounds stated in \Cref{thm:effective-trans-flow}.
 %(\ref{thm:trans-flow-length}).
%    \item In \Cref{sec:expect-conges}, we bound the congestion of this transport flow, proving \Cref{thm:effective-trans-flow}(\ref{thm:trans-flow-congestion})-(\ref{thm:trans-flow-strong-congestion}).
    Next, in \Cref{sec:length-and-congestion-analysis-for-conditional-distributions}, we extend these bounds to chains under arbitrary pinning, thereby implying \Cref{lem:local-functional-inequalities}.
    \item Finally, in \Cref{sec:concave-Dirichlet-forms}, we verify the concavity of the Dirichlet forms of the Jerrum-Sinclair chain and  prove the marginal bounds for the monomer-dimer distributions.
\end{itemize}
Throughout this section, we adopt the setup from \Cref{sec:app-JS-chain}, particularly that of \Cref{thm:effective-trans-flow}.

\newcommand{\ptplus}{{\partial}\mspace{-3mu}}

Let $G=(V,E)$ be a simple, undirected graph with $n$ vertices, $m$ edges, and maximum degree $\Delta$.
Consider the monomer-dimer distribution $\mu$   on $G$ with edge weight $\lambda$, and define $\overline{\lambda} = \max\set{1,\lambda}$.
For each edge $e\in E$, let $\+C_e$ denote the local flipping coupling  of $\mu^{e \gets 0}$ and $\mu^{e \gets 1}$ in \Cref{def:new-coupling-2}; 
and let $\Gamma_e$ denote the transport flow from $\mu^{e \gets 0}$ to $\mu^{e \gets 1}$ in~\Cref{def:monomer-dimer-trans-path}. 

For any pair of configurations $x,y\in\Omega(\mu)$, where $\Omega(\cdot)$ denotes the support of a distribution, let $\gamma^{x,y}$ be the Jerrum-Sinclair canonical path from $x$ to $y$, as specified in \Cref{def:monomer-dimer-trans-path}. 
Then, for a random pair $(X,Y)\sim\+C_e$ generated according to the local flipping coupling $\+C_e$, the canonical path $\gamma^{X,Y}$ from $X$ to $Y$ follows the distribution of the transport flow $\Gamma_e$.

Additionally, we introduce some notation.
For any subset $S \subseteq E$  of edges,
we define the \emph{inclusive boundary} $\ol{\partial}S$ and the \emph{boundary} $\partial S$  as follows:
\begin{align*}
  \ol{\partial} S \triangleq \set{e \in E \mid \exists f \in S, e \cap f \neq \emptyset } \quad \text{and} \quad \partial S \triangleq \ol{\partial} S \setminus S.
\end{align*}
For any configurations $x,y\in\{0,1\}^E$, we denote by $x\oplus y$  the set of edges where $x$ and $y$ differ:
\[
x\oplus y\triangleq \{f\in E\mid x_f\neq y_f\}.
\]
As discussed in \Cref{sec:app-JS-chain}, we interpret each configuration $x\in\{0,1\}^E$ equivalently as the set of edges it indicates, i.e., $\{e\in E\mid x(e)=1\}$.
Thus, set operations, including union~($\cup$), intersection~($\cap$), set difference~($\setminus$), symmetric difference~($\triangle$), and cardinality~($|\cdot|$), can be applied to configurations in $\{0,1\}^E$ without ambiguity.

\subsection{Validity of the local flipping coupling}\label{sec:local-flipping-coupling}
The well-definedness of the transport flow follows from the fact that the local flipping coupling, as defined in \Cref{def:new-coupling-2}, provides a valid coupling of $\mu^{e \gets 0}$ and $\mu^{e \gets 1}$.

%Define the boundary of $B$ as 
%\begin{align*}
%  \partial B = \{e \in E \mid e \notin B \land  \exists f \in B, e \cap f \neq \emptyset \},
%\end{align*}
%and the inclusive boundary of $B$ as 
%\begin{align*}
%  \overline{\partial} B = B \cup \partial B.
%\end{align*}

\begin{proposition}\label{lem:explicit-formula}
  Fix an edge $e\in E$. The local flipping coupling $\+C_e$ defined in \Cref{def:new-coupling-2} is a valid coupling of $\mu^{e \gets 0}$ and $\mu^{e \gets 1}$. Furthermore, for $x,y \in \Omega(\mu)$, let $B=x\oplus y$ be the set of disagreement edges. 
  %Recall $\Omega(\mu)$ is the supporting set.
  \begin{enumerate}
        \item\label{case:path-cycle} If $B$ forms a path or a cycle of even length and $x_e=0,y_e=1$, the probability of $(x,y)$ in $\+C_e$ is 
        \begin{align}\label{eq:explicit-formula}
            \Pr[(X,Y) \sim \+C_e]{(X,Y) = (x,y)} = \mu_{{\overline{\partial}} B}^{e \gets 0}(x_{{\overline{\partial}} B}) \cdot\mu^{e \gets 1}(y).
            % = \mu_{{\overline{\partial}} B}^{e \gets 0}(x_{{\overline{\partial}} B}) \mu_{{\overline{\partial}} B}^{e \gets 1}(y_{{\overline{\partial}} B})\mu^{{\*0}_{{\overline{\partial}} B}}_{E \setminus {\overline{\partial}} B}(y_{E \setminus {\overline{\partial}} B}).
        \end{align}
        %where ${\overline{\partial}} B =\set{e \in E \mid \exists f \in B, e \cap f \neq \emptyset}$ denotes the inclusive boundary of $B$.
        % , and $\ptplus B ={\overline{\partial}} B\setminus B$ denotes the exclusive boundary of $B$.
        \item Otherwise, $\Pr[(X,Y) \sim \+C_e]{(X,Y) = (x,y)} = 0$.
    \end{enumerate}
\end{proposition}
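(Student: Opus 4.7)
The plan is to first dispatch the ``otherwise'' case by structural observations, then derive the explicit formula~\eqref{eq:explicit-formula} by a direct analysis of which draws of the auxiliary matching $Z$ produce the outcome $(X,Y)=(x,y)$, and finally verify the correct marginals of $\+C_e$ via a measure-preserving involution. The ``otherwise'' case is immediate: by the construction of $\+C_e$ in~\Cref{def:new-coupling-2}, every output satisfies $X_e=Z_e=0$, $Y_e=1$, and $X\oplus Y$ equals the single alternating component of $Y\oplus Z$ containing $e$, which is necessarily a path or an even cycle. Any pair $(x,y)$ violating these structural constraints therefore occurs with probability~$0$.

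For case~\ref{case:path-cycle}, I would condition on $Y=y$, which contributes the factor $\mu^{e\gets 1}(y)$ by the independent sampling in \Cref{def:new-coupling-2}. Given $Y=y$, the event $X=x$ is equivalent to two conditions on $Z$: (i) $Z_B=x_B$, so that $X_B=Z_B$ reproduces $x$ on $B$; and (ii) the component of $y\oplus Z$ containing $e$ is exactly $B$. Since $x$ and $y$ agree on $E\setminus B$ and hence on $\partial B\subseteq E\setminus B$, condition (ii) unfolds into $Z_f=y_f=x_f$ for every $f\in\partial B$: the path--cycle structure of the symmetric difference of two matchings ensures that the component through $e$ extends beyond $V(B)$ \emph{if and only if} some incident non-$B$ edge lies in $y\oplus Z$. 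Combining (i) and (ii), and using $\ol{\partial}B=B\cup\partial B$, reduces the event to the single pinning $\{Z_{\ol{\partial}B}=x_{\ol{\partial}B}\}$, whose probability is $\mu_{\ol{\partial}B}^{e\gets 0}(x_{\ol{\partial}B})$. This yields~\eqref{eq:explicit-formula}.

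To verify validity, the marginal of $Y$ is $\mu^{e\gets 1}$ by construction, so only $\Pr[X=x]=\mu^{e\gets 0}(x)$ remains. I would introduce the involution $\phi$ on $\Omega(\mu^{e\gets 1})\times\Omega(\mu^{e\gets 0})$ sending $(Y,Z)$ to $(U,V)\triangleq(Z_{E\setminus B}\cup Y_B,\;Y_{E\setminus B}\cup Z_B)$, where $B$ is the component of $Y\oplus Z$ through $e$. Direct computation gives $U\oplus V=Y\oplus Z$ (so $\phi^2=\mathrm{id}$) and $|U|+|V|=|Y|+|Z|$ (so $\phi$ preserves the product measure $\mu^{e\gets 1}\otimes\mu^{e\gets 0}$). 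Since $X=V$ in the coupling, a change of variables along $\phi$ yields $\Pr[X=x]=\sum_{(U,V)}\mu^{e\gets 1}(U)\mu^{e\gets 0}(V)\cdot\*1[V=x]=\mu^{e\gets 0}(x)$, as desired.

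The main technical obstacle is the iff-equivalence behind step (ii): conditional on $Z_B=x_B$, the component of $y\oplus Z$ through $e$ coincides with $B$ \emph{iff} $Z$ and $y$ agree on every edge of $\partial B$. The ``only if'' direction is immediate, since any boundary disagreement would extend the alternating component past $V(B)$. The ``if'' direction relies on the matching-theoretic fact that $y$ and $Z$ each contribute at most one edge per vertex to $y\oplus Z$, so every connected component of $y\oplus Z$ is a path or an even cycle and cannot ``branch'' into $V(B)$ without using a $\partial B$ edge. Once this equivalence is cleanly stated, everything else is set-theoretic bookkeeping, with the identity $x_{\partial B}=y_{\partial B}$ being what allows (i) and (ii) to merge into a single marginal pinning event on $\ol{\partial}B$.
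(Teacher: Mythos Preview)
Your proposal is correct and follows essentially the same approach as the paper. The derivation of the explicit formula is identical: both reduce the event $\{(X,Y)=(x,y)\}$ to $\{Y=y\}\cap\{Z_{\ol{\partial}B}=x_{\ol{\partial}B}\}$ via the equivalence you spell out in step~(ii). For the $X$-marginal, the paper rewrites~\eqref{eq:explicit-formula} as $\mu^{e\gets 0}(x)\cdot\mu^{e\gets 1}_{\ol{\partial}B}(y_{\ol{\partial}B})$ using conditional independence (since $x_{\partial B}=y_{\partial B}=\*0$ and $x_{E\setminus\ol{\partial}B}=y_{E\setminus\ol{\partial}B}$), thereby identifying $\+C_e$ with the symmetric process that first samples $X\sim\mu^{e\gets 0}$; your involution $\phi$ is the same symmetry, verified combinatorially via $|U|+|V|=|Y|+|Z|$ rather than via the Markov property---a stylistic rather than substantive difference.
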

\begin{proof}
  We first verify \eqref{eq:explicit-formula} for Case~1, noting that Case 2 is obvious.
  Let $X,Y,Z$  be the random configurations  generated in \Cref{def:new-coupling-2} and let $\boldsymbol{B}=X\oplus Y$.
  It holds that $X_f=Z_f \neq Y_f$ for any $f \in \boldsymbol{B}$, and $X_f = Y_f = Z_f = 0$ for any $f \in \partial \boldsymbol{B}$.
  Then, for any $x \in \Omega(\mu^{e \gets 0})$ and $y \in \Omega(\mu^{e \gets 1})$, the event $(X,Y)=(x,y)$ is equivalent to that $Y = y$ and $Z_{{\overline{\partial}} B} = x_{{\overline{\partial}} B}$, where $B=x\oplus y$. Therefore,
  \begin{align*}
      \Pr[(X,Y) \sim \+C_e]{(X,Y) = (x,y)} = \Pr[Z \sim \mu^{e \gets 0}]{Z_{{\overline{\partial}} B} = x_{{\overline{\partial}} B}} \cdot \Pr[Y \sim \mu^{e \gets 1}]{Y=y}  = \mu^{e \gets 0}_{{\overline{\partial}} B}(x_{{\overline{\partial}} B}) \cdot\mu^{e \gets 1}(y).
  \end{align*}
  Consider a sample $(X,Y) \sim \+C_e$ from the coupling $\+C_e$. The marginal distribution of $Y$ is clearly $\mu^{e \gets 1}$. To find the marginal distribution of $X$, we note 
  \begin{align*}
      \mu^{e \gets 0}_{{\overline{\partial}} B}(x_{{\overline{\partial}} B}) \cdot\mu^{e \gets 1}(y) &= \mu^{e \gets 0}_{{\overline{\partial}} B}(x_{{\overline{\partial}} B}) \cdot\mu^{e \gets 1}_{{\overline{\partial}} B}(y_{{\overline{\partial}} B}) \cdot\mu^{y_{{\overline{\partial}} B}}_{E \setminus {\overline{\partial}} B}(y_{E \setminus {\overline{\partial}} B})\\
      &\overset{(\star)}{=} \mu^{e \gets 0}_{{\overline{\partial}} B}(x_{{\overline{\partial}} B}) \cdot \mu^{e \gets 1}_{{\overline{\partial}} B}(y_{{\overline{\partial}} B}) \cdot \mu^{x_{{\overline{\partial}} B}}_{E \setminus {\overline{\partial}} B}(x_{E \setminus {\overline{\partial}} B}) = \mu^{e \gets 0}(x) \cdot \mu^{e \gets 1}_{{\overline{\partial}} B}(y_{{\overline{\partial}} B}),
  \end{align*}
  where the equality $(\star)$ follows from the conditional independence property of $\mu$, as $x_{\ptplus B} = y_{\ptplus B} = \*0_{\ptplus B}$ and $x_{E \setminus \ol{\partial} B} = y_{E \setminus \ol{\partial} B}$. 
  Therefore, we can consider a symmetric process $\+C_e'$ for generating $(X,Y)$, which first samples $X \sim \mu^{e \gets 0}$ and $Z \sim \mu^{e \gets 1}$,  identifies the path or cycle $B\subseteq X\oplus Z$ containing $e$, and sets $Y = Z_B \uplus X_{E \setminus B}$. The above calculation shows that $\+C_e'$ is identically distributed as $\+C_e$. Finally, it is easy to verify from the construction of $\+C_e'$ that $X$ follows the law of $\mu^{e \gets 0}$, as required.
\end{proof}
Note that \eqref{eq:explicit-formula} can be re-expressed as: 
$$\mu_e(0)\cdot\mu_e(1)\cdot\Pr[(X,Y) \sim \+C_e]{(X,Y) = (x,y)} = \mu_{{\overline{\partial}} B}(x_{{\overline{\partial}} B})\cdot \mu(y).$$ 
Observe that the quantity $\mu_{{\overline{\partial}} B}(x_{{\overline{\partial}} B})\cdot \mu(y)$ on the right-hand-side is independent of the fixed edge~$e$. This symmetry plays a vital role in the decoupling process (see~\Cref{lem:pre-decoupling}), which will be used later in the congestion analysis of the transport flow.

%In this section, we demonstrate expected squared length bound in~\Cref{{thm:effective-trans-flow}} by the following lemma. Notice that in this section, we will interchangeably use notation $A$ for set and notation $\*1_A\in \set{0,1}^E$ for indicating vector.

\subsection{Moment bounds on the coupling discrepancy}\label{sec:length-overview}
The following lemma bounds the 2nd moment of the discrepancy of the local flipping coupling $\+C_e$,
implying the expected squared path length of the transport flow, as stated in \Cref{thm:effective-trans-flow}(\ref{thm:trans-flow-length}).
\begin{lemma} \label{cor:expected-squared-length}
Fix an edge $e=\{u,v\} \in E$.
%  Let $G=(V,E)$ be a simple undirected graph with maximum degree $\Delta$, and let $e=\{u,v\} \in E$ be an edge. 
  The transport flow $\Gamma_e$ defined in~\Cref{def:monomer-dimer-trans-path} satisfies 
  \begin{align*}
    \E[\gamma\sim \Gamma_e]{\ell(\gamma)^2}\leq \E[(X,Y)\sim \+C_e]{|X\oplus Y|^2} = O(\ol{\lambda}\Delta  \log^2(1+\ol{\lambda})).
    %= O(\ol{\lambda}^{1.5} \Delta),
  \end{align*}
%  where $X \oplus Y\triangleq \{f\in E\mid X_f\neq Y_f\}$ represents the set of edges on which $X$ and $Y$ disagree. 
\end{lemma}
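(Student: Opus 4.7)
The statement is a double inequality and I would treat each part separately. The first inequality, $\E{\ell(\gamma)^2}\leq \E{|X\oplus Y|^2}$, is a purely combinatorial fact about \Cref{def:monomer-dimer-trans-path}: a short case analysis on whether $B=X\oplus Y$ is an even cycle or a path, and on the value of $X_{e_1}$ in the path case, shows that each transition along $\gamma^{X,Y}$ modifies at least one edge of $B$ and that every edge of $B$ is modified exactly once. Counting shows $\ell(\gamma)\in\{\lceil |B|/2\rceil,\lceil |B|/2\rceil+1\}$, so in particular $\ell(\gamma)\leq |B|$ holds deterministically, and the first inequality follows by squaring and taking expectations.

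For the main bound $\E{|B|^2}=O(\ol{\lambda}\Delta\log^2(1+\ol{\lambda}))$, the plan is to use \Cref{lem:explicit-formula} to identify $B=X\oplus Y$ with the connected component of $e$ in the symmetric difference $Y\oplus Z$ of two independent samples $Y\sim\mu^{e\gets 1}$ and $Z\sim\mu^{e\gets 0}$. Since $Y$ and $Z$ are matchings, this component is necessarily an alternating path or even alternating cycle through $e$, and it can be reconstructed by a deterministic exploration that walks outward from the endpoints of $e$, alternately asking whether the next incident edge is present in $Z$ (resp.\ in $Y$). I would then bound the tail $\Pr{|B|\geq k}$ by summing, over all alternating walks of length $k$ rooted at $e$, the probability that the walk lies in $Y\oplus Z$, and integrate via $\E{|B|^2}=\sum_{k\geq 1}(2k-1)\Pr{|B|\geq k}$.

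To control the individual contributions, the probability that a given alternating walk $P$ appears in $Y\oplus Z$ can be written as a ratio of monomer-dimer partition functions. By the Heilmann--Lieb theorem and the reduction to the self-avoiding-walk tree, each such ratio is dominated by a product of local marginals of order $\lambda/(1+c\lambda)$; combined with the $(\Delta-1)^{k-1}$ enumeration of alternating walks rooted at $e$, this gives geometric decay of $\Pr{|B|\geq k}$. The main obstacle, and where I expect the bulk of the effort to lie, is that in the regime $\lambda\Delta\gg 1$ the per-step decay rate is not small, so extracting exactly $\log^2(1+\ol{\lambda})$ pre-factors requires separating the analysis into short and long scales and using the sharper spectral bound on matching polynomials (with root radius $\leq 2\sqrt{\lambda\Delta}$) in the long regime. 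The two logarithms should arise from (i) the length of the short regime, which is $\Theta(\log(1+\ol{\lambda}))$, and (ii) a telescoping sum over marginal scales in the long regime, where one exploits the sum of marginals at each candidate extension vertex rather than a uniform upper bound. Assembling the two regimes and bounding $\sum_k(2k-1)\Pr{|B|\geq k}$ then yields the target bound.
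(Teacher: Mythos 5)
The first inequality is fine: your case analysis gives $\ell(\gamma)\le |X\oplus Y|$ deterministically, which is all the paper uses. The genuine gap is in the second-moment bound. Your plan is to bound $\Pr{|B|\ge k}$ by a union bound over alternating walks of length $k$ through $e$, paying $(\Delta-1)^{k-1}$ for the enumeration and a factor of order $\lambda/(1+c\lambda)$ per edge. Since $Y$ and $Z$ are independent and (conditional) edge marginals are only bounded by $\lambda/(1+\lambda)$, the per-step factor of this union bound is about $(\Delta-1)\cdot\frac{\lambda}{1+\lambda}$, which exceeds $1$ as soon as $\lambda\gtrsim 1/\Delta$; so in the main regime $\lambda\Delta\gg 1$ the bound diverges and gives no tail decay at all, as you partly acknowledge. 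The patches you propose (Heilmann--Lieb root bounds, a ``telescoping over marginal scales'') are not worked out and do not address the actual source of the loss: a union bound over walks replaces the sum of the true conditional continuation probabilities over the $\le\Delta$ candidate neighbors---which are probabilities of \emph{disjoint} events and therefore sum to something strictly below $1$---by (number of branches)$\times$(worst-case marginal product). The paper's proof instead runs a sequential revealing/self-avoiding-walk argument (\Cref{lem:path-length-geo-bound}): conditioning on one arm $P_v$, it shows that within any two steps the process dies with probability at least $q=\frac{2}{1+\sqrt{1+\lambda\Delta}}$, by bounding $\sum_w \frac{\lambda p_w(1-p_w)}{1+\lambda S}$ and optimizing over the marginals, so $|B|$ (in the path case) is dominated by $2\,\mathrm{Geo}(q)$ and $\E{|B|^2\mid\text{path}}=O(q^{-2})=O(1+\lambda\Delta)$. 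Recovering this $\sqrt{1+\lambda\Delta}$-rate decay is exactly the step your proposal is missing.

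A second, smaller but real, discrepancy: your attribution of the $\log^2(1+\ol{\lambda})$ factor to a short/long-scale split of the path tail does not match any workable argument; the path case alone gives clean geometric decay with no logs. In the paper the logs come exclusively from the cycle case: using \Cref{lem:explicit-formula} one shows $\Pr{B=C}=\lambda\cdot\Pr{B=P}$ for the path $P$ obtained by deleting one edge of the cycle $C$, so each cycle is $\lambda$ times more likely than its companion path, and one must truncate at length $\Theta(q^{-1}\log\ol{\lambda})$ before the geometric decay beats this extra factor, producing the $\log^2\ol{\lambda}$ in the second moment. Your sketch never treats cycles separately (in a union-bound framework they are just more walks, inheriting the same fatal per-step factor), so this part of the statement is also unaccounted for.
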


%The bound for the expected squared length can be improved to $O(\ol{\lambda} \Delta \log^2 \ol{\lambda})$ via a more sophisticated analysis. The proof for the improved bound is deferred to~\Cref{sec:improved-bound}. 

\begin{proof}
% We now present a proof sketch of a weaker version of \Cref{cor:expected-squared-length}
% \begin{align}\label{eq:weaker-bound}
%   \E[(X,Y)\sim \+C_e]{\norm{X\oplus Y}_1^2} = O(\overline{\lambda}^{1.5}\Delta).
% \end{align}
% Compared to \Cref{cor:expected-squared-length}, the above bound is weaker because $\overline{\lambda}$ have a larger exponent.
% However, the proof sketch highlights the main ideas of the proof without discussing too much technical details.
% The formal proof of the full version of \Cref{cor:expected-squared-length} is given in \Cref{sec:coupling-analysis-formal}.

Let $(X,Y) \sim \+C_e$ be generated by the local flipping coupling $\+C_e$ according to \Cref{def:new-coupling-2},
let $B = X\oplus Y$, and let $\gamma\sim\Gamma_e$ be generated as in \Cref{def:monomer-dimer-trans-path}. 
It holds $\ell(\gamma)\le  |X\oplus Y|=|B|$. %where $\gamma\sim\Gamma_e$.

% {\color{blue}
% When $B$ forms a path, removing the edge $e=\set{u,v}$ partitions $B$ into two disjoint paths $P_v$ and $P_u$, where $P_v$(resp.~$P_u$) is the path connected to $v$(resp.~$u$).  Together, it satisfies $B = P_v \cup \{e\} \cup P_u$. Note that either $P_v$ or $P_u$ may be empty (i.e., contain no edges).
% When $B$ forms a cycle, define $P_u = P_v = B \setminus \{e\}$
% }

The proof proceeds in two cases: when $B$ is a path or a cycle of even length.

\paragraph{Path case.} 
When $B$ forms a path, removing the edge $e=\set{u,v}$ partitions $B$ into two disjoint paths $P_v$ and $P_u$, where $P_v$(resp.~$P_u$) is the path connected to $v$(resp.~$u$).  Together, it satisfies $B = P_v \cup \{e\} \cup P_u$. Note that either $P_v$ or $P_u$ may be empty (i.e., contain no edges). Now, consider a fixed (possibly empty) path $P$ that starts from $v$ and does not include $u$. We claim that, conditional on $P_v = P$, the length of $P_u$ is stochastically dominated by $2L$, where $L$ is a geometric random variable with success probability $q=\frac{2}{1+\sqrt{1+\lambda\Delta}}$. 
Formally, let  $P_v, P_u$ be constructed from $(X,Y)\sim \+C_e$ as described above, given that $B$ is a path.
The following lemma holds: %whose proof is deferred to the end of the section.
  \begin{lemma}\label{lem:path-length-geo-bound}
%    Consider the probability space of the coupling $\+C_e$ conditional on $B$ is a path. 
    For any edge $e=\set{u,v}\in E$ and  any path $P$ with $v$ as one of its endpoints (or $P=\emptyset$), if %$\Pr{P_v = P \mid B\text{ is a path}} > 0$ (i.e. 
    $P\cup \set{e}$ does not form a cycle (which means $\Pr{P_v = P \mid B\text{ is a path}} > 0$), 
    then there exists a coupling between a geometric random variable $L\sim\-{Geo}(q)$, where $q\triangleq \frac{2}{1+\sqrt{1+\lambda\Delta}}$, and $(X,Y)\sim \+C_e$, such that
    \begin{align*}
      \Pr{\abs{P_u} \le 2L\mid B\text{ is a path} \wedge P_v=P}=1. %\quad \text{where } q=\frac{2}{1+\sqrt{1+\lambda\Delta}}.
    \end{align*}
  \end{lemma}
% }

%Note that the event that $B$ is a path is equivalent to the event that $P_v \cup \set{e}$ is a path.

% Let $\E[\text{path}]{\cdot}$ denote the expectation over the space of $\+C_e$ conditional on $B$ is a path.
By \Cref{lem:path-length-geo-bound}, the law of total expectation, and the 2nd moment of a geometric random variable, 
% (see \Cref{lem:path-length-geo-bound} in \Cref{sec:coupling-analysis-formal} for a proof)
%\begin{align*}
%  \E{\abs{P_u}^2  \cdot \*1[\text{$B$ is a path}]} &\leq \E{\abs{P_u}^2  \mid \*1[\text{$B$ is a path}]} 
%  = \E{\abs{P_u}^2 \cdot \*1[P_v\cup\set{e} \text{is a path}]} \\
%  &= \E{\E{\abs{P_u}^2 \mid P_v} \cdot \*1[P_v\cup\set{e} \text{is a path}] }
%  \le \E{4L^2} =  O(q^{-2}).
%\end{align*}
\begin{align*}
  \E{\abs{P_u}^2 \mid B\text{ is a path}}  
  = \E{\E[]{\abs{P_u}^2 \mid P_v} \mid B\text{ is a path}}
  \le \E{4L^2} =  O(q^{-2}).
\end{align*}
%
%where $|P_u|$ denotes the number of edges in the path $P_u$.
%and we use $B \sim \+C_e$ to denote $B = \{f \in E \mid X_f \neq Y_f\}$ and $(X,Y) \sim \+C_e$. 
By symmetry, the same bound applies to $\E{|P_v|^2 \mid B\text{ is a path}}$ as well.

Next, since  $|B| = |P_u| + |P_v| + 1=\langle(|P_u|,|P_v|,1),(1,1,1)\rangle$, applying Cauchy-Schwarz, we have
$|B|^2 \leq 3(|P_u|^2+|P_v|^2+1)$,
which leads to 
\[ \E{|B|^2\cdot \*1[B\text{ is a path}]} \leq \E{|B|^2 \mid B\text{ is a path}}  = O(q^{-2}) = O(1 + \lambda\Delta). \]

\paragraph{Cycle case.}
Let $k = \theta q^{-1}$, where $\theta > 0$  is a parameter to be determined later.
We decompose the second moment based on cycle size as follows:
\begin{align*}
  \E{\abs{B}^2 \cdot \*1[B \text{ is a cycle}]}
  &= \E{\abs{B}^2 \cdot \*1[B \text{ is a cycle}] \cdot \*1[\abs{B} \leq k]} + \E{\abs{B}^2 \cdot \*1[B \text{ is a cycle}] \cdot \*1[\abs{B} \geq k]} \\
  &\leq \theta^2q^{-2} + \E{\abs{B}^2 \cdot \*1[B \text{ is a cycle}] \cdot \*1[\abs{B} \geq k]}.
\end{align*}
Thus, we focus on the second term, which accounts for large cycles.
For each even-length cycle  $C$ containing $e=\{u,v\}$, let $P$ be the path obtained by removing the unique edge $f\in C$ such that $f \ni v$ and $f\neq e$. 
% Define a mapping $h:\set{B\ni e\mid B\text{ is a cycle}}\to \set{B\ni e\mid B\text{ is a path}}$ such that for each even cycle $C$ containing $e$, $h(C)$ is the path obtained by removing the unique edge $f\in C$ such that $f \ni v$ and $f\neq e$. Note that $h$ is an injection and $\abs{h(C)}=\abs{C}-1$.
We claim %that for any such cycle $C$, 
\begin{align} \label{eq:prob-cycle-path}
  \Pr{B = C} = \lambda \cdot \Pr{B = P}=\lambda \cdot \Pr{B = P\land P_v=\emptyset},
\end{align}
which defines an injection $C\mapsto P$ with $|C|=|P|+1$ from cycles to paths satisfying \eqref{eq:prob-cycle-path}, implying
\begin{align*}
  \E{\abs{B}^2 \cdot \*1[B \text{ is a cycle}] \cdot \*1[\abs{B} \geq k]}
%  =\,& \sum_{\text{cycle C}: C\ni e} \Pr{B = C}\abs{C}^2\cdot \*1[\abs{C} \geq k]\\
%  =\,& \sum_{\text{cycle C}: C\ni e} \lambda \Pr{B = h(C)}\tp{\abs{h(C)}+1}^2\cdot \*1[\abs{h(C)}+1 \geq k]\\
%  (h\text{ is injective and~\Cref{eq:prob-cycle-path}})\quad \leq\,& \sum_{\text{path D}: D\ni e} \lambda \Pr{B = D}\E{\tp{\abs{P_u}+2}^2\cdot \*1[\abs{P_u}+2 \geq k\land P_v=\emptyset]\mid B = D}\\
  \le\,& \lambda \E{\*1[B \text{ is a path}] \cdot (\abs{P_u}+2)^2 \cdot \*1[P_v = \emptyset] \cdot \*1[\abs{P_u} + 2 \geq k]} \\ %\label{eq:Epath} \\
  \leq\,& \lambda \E{(\abs{P_u}+2)^2 \cdot \*1[\abs{P_u} + 2 \geq k] \mid B\text{ is a path} \wedge P_v = \emptyset} \\
  (\text{by \Cref{lem:path-length-geo-bound}})\quad 
  \leq\,& \lambda \E{(2L+2)^2\cdot \*1[2L+2\ge k]}\\
  % \leq\,& \lambda \sum_{t=k}^\infty q(1-q)^{t+2} \cdot t^2 
  \leq\,& O(\lambda \exp(-\theta/2) q^{-2} (\theta^2+1)).
\end{align*}
%For the first inequality, if $B$ is a path, then $P_u$ and $P_v$ are well-defined; otherwise, both sides equal~0.
Choosing $\theta = 10\log \ol{\lambda}$ yields
\begin{align*}
  \E{\abs{B}^2 \cdot \*1[B \text{ is a cycle}]} = O(q^{-2}(1+\log^2\ol{\lambda})) = O((1+\lambda\Delta)(1+\log^2\ol{\lambda})).
\end{align*}

Now, it remains to show \eqref{eq:prob-cycle-path}, i.e., $\Pr{B = C}=\lambda \cdot \Pr{B=P}$ for any even cycle $C$ containing $e=\{u,v\}$ and the path $P$ obtained by removing the unique edge $f\in C\setminus\{e\}$ incident to vertex~$v$. 
This follows from \Cref{lem:explicit-formula} and the fact that  ${\overline{\partial}} C = {\overline{\partial}} P$. 
Specifically, by~\Cref{lem:explicit-formula}, 
\begin{align}\label{eq:mid-1}
\Pr{B = C} = \mu^{e \gets 0}_{{\overline{\partial}} C}(x_{{\overline{\partial}} C}) \cdot \mu^{e \gets 1}_{{\overline{\partial}} C}(y_{{\overline{\partial}} C}),
\end{align}
where $x\in \Omega(\mu^{e \gets 0})$ and $y \in \Omega(\mu^{e \gets 1})$ correspond to the two matchings with difference set $B = C$. 
Note that given $x_e=0, y_e=1$, there is only one way to alternate $x$ to $y$, which uniquely determines the configurations $x_{{\overline{\partial}} C}$ and $y_{{\overline{\partial}} C}$ given their difference $C$. Thus, \eqref{eq:mid-1} is well-defined.

Let $f \in C \setminus \{e\}$ be the edge removed from $C$  to obtain $P$, and define $x^{-f}$ (resp.~$y^{-f}$) as the configuration obtained by setting $x_f = 0$ (resp.~$y_f = 0$). 
Similarly, by~\Cref{lem:explicit-formula}, we have
\begin{align}\label{eq:mid-2}
\Pr{B = P} = \mu_{{\overline{\partial}} P}^{e \gets 0}(x_{{\overline{\partial}} P}^{-f})\cdot \mu_{{\overline{\partial}} P}^{e \gets 1}(y_{{\overline{\partial}} P}^{-f}).
\end{align} 
Since $C$ is the difference between $x$ and $y$, we have $x_{\partial C} = y_{\partial C} = \*0_{\partial C}$. It then  follows that
\begin{align}\label{eq:mid-3}
\mu^{e \gets 0}_{{\overline{\partial}} C}(x_{{\overline{\partial}} C}) \cdot \mu^{e \gets 1}_{{\overline{\partial}} C}(y_{{\overline{\partial}} C}) = \lambda \cdot \mu^{e \gets 0}_{{\overline{\partial}} C}(x_{{\overline{\partial}} C}^{-f}) \cdot \mu^{e \gets 1}_{{\overline{\partial}} C}(y_{{\overline{\partial}} C}^{-f}).
\end{align}
The desired equation~\eqref{eq:prob-cycle-path} then follows from \eqref{eq:mid-1}, \eqref{eq:mid-2}, \eqref{eq:mid-3}, and the fact that ${\overline{\partial}} C = {\overline{\partial}} P$.
% Next, we can bound 
% \begin{align}\label{eq:simple-graph}
%    \sum_{\text{cycle C}: e \in C} \sum_{P \in h(C)} \Pr[B \sim \+C_e]{B = P}|P| \leq \E[B \sim \+C_e]{\abs{B} \cdot \*1[B \text{ is a path}]}.
% \end{align}
% This because when we enumerate all $P \in h(C)$ for all cycle $C$ containing $e$, we enumerate every path $P$ containing $e$ at most once. To verify this, given any $P \in h(C)$, since the graph is simple, there is a unique edge $f$ that connects two endpoints of $P$ and $C = P \cup \{f\}$ can be uniquely recovered from $P$. Hence,~\eqref{eq:reduction-relation} can be proved by combining \eqref{eq:reduce-moment} and~\eqref{eq:simple-graph}.
\end{proof}

% \begin{remark}[Improving the reliance on $\ol{\lambda}$]
% The main bottleneck in the above proof lies in the cycle case, where an additional $\lambda$ factor appears in~\eqref{eq:reduction-relation}. To further improve the reliance on $\ol{\lambda}$, we establish a general relationship between the moment generating function (MGF) of the cycle case and that of the path case, rather than limiting our analysis to low-order moments as done previously. Leveraging this more general relation, along with Jensen's inequality, allows us to derive a sharper bound for the cycle case.
% \end{remark}

\subsubsection{Geometric decay of the discrepancy path}
In the following, we provide a proof of~\Cref{lem:path-length-geo-bound}.

\begin{proof}[Proof of~\Cref{lem:path-length-geo-bound}]
   Recall that we use ${\overline{\partial}} P_v$ to denote the inclusive boundary of the path $P_v$, which includes all edges incident to the path, as well as the edges that make up the path itself. 
   In the case where $P_v = \emptyset$,  we abuse notation and define ${\overline{\partial}} P_v = \{f \in E \mid v \in f\}$.
    Conditioned on $P_v$, the configurations $(X,Y)$ on the inclusive boundary ${\overline{\partial}} P_v$ are uniquely determined, and we have $X_{\partial P_v \setminus \{e\}} = Y_{\partial P_v \setminus \{e\}} = \*0_{\partial P_v \setminus \{e\}}$.  
    Therefore, for any $(x,y)\in\Omega(\+C_e)$ whose difference set  $B=x\oplus y$ forms  a path containing $P_v$, we have
    \begin{align*}
        &\Pr[(X,Y) \sim \+C_e]{(X,Y) = (x,y) \mid {X\oplus Y\text{ is a path } \wedge} P_v} \\
        = &\frac{\Pr[(X,Y) \sim \+C_e]{(X,Y) = (x,y)}}{\Pr[(X,Y) \sim \+C_e]{{X\oplus Y\text{ is a path }\wedge}X\oplus Y\supseteq P_v\wedge X_{\partial P_v \setminus \{e\}} = Y_{\partial P_v \setminus \{e\}} = \*0_{\partial P_v \setminus \{e\}}}}\\
        {(\text{\small by \Cref{lem:explicit-formula}})\atop(\text{\small by \Cref{def:new-coupling-2}})}\quad =&\frac{\mu_{{\overline{\partial}} B}^{e \gets 0}(x_{{\overline{\partial}} B})\cdot \mu^{e \gets 1}_{ {\overline{\partial}} B}(y_{ {\overline{\partial}} B}) \cdot\mu^{x_{{\overline{\partial}} B}}_{E \setminus {\overline{\partial}} B}(x_{E \setminus{\overline{\partial}} B})}{\mu^{e \gets 0}_{ {\overline{\partial}} P_v}(x_{ {\overline{\partial}} P_v}) \cdot\mu^{e \gets 1}_{ {\overline{\partial}} P_v}(y_{{\overline{\partial}} P_v})}\\
   \text{(by $P_v \subseteq B$)}\quad     =& \mu^{x_{{\overline{\partial}} P_v}}_{{\overline{\partial}} B}(x_{ {\overline{\partial}} B}) \cdot\mu^{y_{ {\overline{\partial}} P_v}}_{ {\overline{\partial}} B}(y_{{\overline{\partial}} B}) \cdot\mu^{x_{ {\overline{\partial}} B}}_{E \setminus  {\overline{\partial}} B}(x_{E \setminus {\overline{\partial}} B})\\
   \text{(by $e \in {\overline{\partial}} P_v$)}\quad     =&\mu^{\*0_{\partial P_v \setminus \{e\}} \wedge (e \gets 0)}_{ {\overline{\partial}} B \setminus \ol{\partial} P_v }(x_{{\overline{\partial}} B \setminus 
\ol{\partial} P_v }) \cdot\mu^{\*0_{\partial P_v \setminus \{e\}} \wedge (e \gets 1)}_{ {\overline{\partial}B \setminus \ol{\partial} P_v } }(y_{{\overline{\partial}} B \setminus \ol{\partial} P_v }) \cdot\mu^{x_{{\overline{\partial}} B}}_{E \setminus {\overline{\partial}} B}(x_{E \setminus {\overline{\partial}} B}).
    \end{align*}
 Therefore,  conditioning on  $P_v$ is equivalent to considering the monomer-dimer distribution on a subgraph of $G$  obtained by removing all edges in ${\overline{\partial}} P_v \setminus \{e\}$.
    Formally, the distribution of $(X,Y) \sim \+C_e$ conditioned on $P_v$ and projected onto $E \setminus ({\overline{\partial}} P_v \setminus \{e\})$  follows the law of the local flipping coupling $\widetilde{\+C_e}$ of the distributions  $\widetilde{\mu}^{e \gets 0}$ and $\widetilde{\mu}^{e \gets 1}$, where $\widetilde{\mu}$ is the monomer-dimer distribution on the graph $\widetilde{G} \triangleq G\setminus ({\overline{\partial}} P_v \setminus \{e\})$. 
    Let $\widetilde{X} \sim \widetilde{\mu}^{e \gets 0}$ and $\widetilde{Y} \sim \widetilde{\mu}^{e \gets 1}$ be independent samples,
    %Let $(\widetilde{X},\widetilde{Y})\sim\widetilde{\+C_e}$, 
    and define $\widetilde{B}=\widetilde{X}\oplus\widetilde{Y}$ as the set of edges where $\widetilde{X}$ and $\widetilde{Y}$ disagree.
    The following process traverses $\widetilde{B}$ by revealing the disagreement edges between $\widetilde{X}$ and $\widetilde{Y}$ one
     by one:
    \begin{enumerate}
    \item \emph{Initialization}:  Sample $\widetilde{X} \sim \widetilde{\mu}^{e \gets 0}$ and $\widetilde{Y} \sim \widetilde{\mu}^{e \gets 1}$ independently. Initialize $w^\star \gets u$, where $u$  is the   vertex (other than $v$) incident to the edge $e$.
    Initially, only the disagreement of $\widetilde{X}$ and $\widetilde{Y}$ at the edge $e$ are revealed.
    \item \emph{Self-avoiding walk}: 
    Reveal the partial configurations of $X$ and $Y$ on edges incident to $w^\star$. If a new disagreement $\{w^\star,w\} \in E$ between $\widetilde{X}$ and $\widetilde{Y}$ is encountered, update $w^\star \gets w$ (note that there can be at most one new disagreement). If no new disagreement is found, the process terminates.  The path of disagreements, $\widetilde{B}$, is given precisely by the trail of the vertex $w^\star$.
    \end{enumerate}
    
    Note that $e$ is a hanging edge in the new graph $\widetilde{G}$.
    Thus, intuitively, the above process should be exactly the same as the coupling used to construct the ``path tree'' (i.e., Godsil's self-avoiding walk tree for the monomer-dimer model, see~\cite{godsil1993algebratic,bayati2007simple}) rooted at $v$ on $\widetilde{G}$.
    
    We claim that at any point in the process,  it terminates within the next two steps with probability $q = \Omega(1/\sqrt{1+\lambda\Delta})$.
    Specifically, conditioned on any $w^\star$, with probability $q$, the process either terminates at the current $w^\star$, or $w^\star$ moves to a new neighbor $w$ and then terminates.
    This is consistent with the well-known decay of correlation  property for the monomer-dimer distributions in~\cite{bayati2007simple}. 
    For completeness, we include a proof of this result in our setting.
%    This is already hinted by the previous correlation decay analysis in~\cite{bayati2007simple}, we include a proof in our setting for the reader's convenience.

    Suppose $w^\star$ is the current disagreement vertex, and let $e'$ be the last revealed disagreement edge incident to $w^\star$. 
    Without loss of generality, assume $\widetilde{X}_{e'} = 1$ and $\widetilde{Y}_{e'} = 0$; the other case follows by symmetry. 
    By self-reducibility, remove all edges (including $e'$) revealed during the process and consider the monomer-dimer model on the remaining subgraph $G'$. 
    Let $N(w^\star)$ denote the set of neighbors of $w^\star$ in $G'$, and let $\mu'$ be the monomer-dimer distribution on $G'$.
    If the process does not terminate within next two steps, there must exist a neighbor $w \in N(w^\star)$ such that:
    \begin{enumerate}
        \item In the first step, $\widetilde{Y}_{\{w^\star, w\}} = 1$, which occurs with probability $\mu'(\{w,w^\star\} \text{ is occupied})$;
        \item In the next step, $w$ becomes saturated in $\widetilde{X}$, which occurs with probability 
        $$\mu'(\text{$w$ is saturated} \mid \text{$w^\star$ is not saturated}),$$ 
        since given $\widetilde{X}_e = 1$, no other edge in $\widetilde{X}$ is incident to $w^\star$, i.e., $w^\star$ is not saturated by $\widetilde{X}$ in $G'$;
    \end{enumerate}
    where the above probabilities are calculated by the \emph{principle of deferred decisions}.
    
    Let $\+B$ denote the event that the process does not terminate within the next two steps starting from $w^\star$.
    Since $\widetilde{X}$ and $\widetilde{Y}$ are independent, we have 
    \begin{align*}%\label{eq:two-step-contraction}
      \Pr{\+B} = 
        \sum_{w \in N(w^\star)} \mu'(\text{$\{w,w^\star\}$ is occupied}) \cdot \mu'(\text{$w$ is saturated} \mid \text{$w^\star$ is not saturated}).% \le 1-\frac{1}{1+\sqrt{1+\lambda \Delta}}.
    \end{align*}
    %To see why the above inequality is sufficient, suppose $w^\star$ is incident to an edge $e'$ such that $\widetilde{X}_{e'} = 1$ and $\widetilde{Y}_{e'} = 0$, where $e'$ is the disagreement edge generated by the previous step. 
    Define $p_w \triangleq \mu'(\text{$w$ is not saturated} \mid \text{$w^\star$ is not saturated})$. 
    The probability of the edge $\{w,w^\star\}$ being occupied in $\mu'$ is given by
    \begin{align*}
    \mu'(\text{$\{w,w^\star\}$ is occupied}) =\lambda\cdot\mu'(w \text{ and } w^\star \text{ are not saturated})= \frac{\lambda p_w}{1+\lambda \sum_{w' \in N(w^\star)} p_{w'}}.
    \end{align*}
    To verify this equation, we multiply both the numerator and denominator by the probability that $w^\star$ is not saturated and observe: 
    (1)~the event that $w^\star$ is saturated is equivalent to the event that no edge $\{w,w^\star\}$ is occupied;
    and (2)~in any monomer-dimer distribution, it always holds
    $\Pr[]{\{w,w^\star\} \text{ is occupied}}=\lambda\cdot  \Pr[]{w \text{ and } w^\star \text{ are not saturated}}$.
    
    Define $S \triangleq \sum_{w \in N(w^\star)} p_w$. The (two-step) non-termination probability  $\Pr{\+B}$ satisfies
    \begin{align*}
        \Pr{\+B} = \frac{\sum_{w \in N(w^\star)} \lambda p_w (1-p_w)}{1+\lambda \sum_{w \in N(w^\star)} p_w} =  \frac{\lambda S - \lambda \sum_{w \in N(w^\star)} p_w^2}{1+\lambda S} \leq \frac{\lambda (S-S^2/\Delta)}{1+\lambda S},
    \end{align*}
    where the last inequality follows the Cauchy-Schwarz. Setting $T = S/\Delta$, we have
    \begin{align*}
        \Pr{\+B} \le \max_{T \in [0,1]} \frac{\lambda \Delta \cdot (T-T^2)}{1+\lambda \Delta \cdot T} \le 1-\frac{2}{1+\sqrt{1+\lambda \Delta}}.
    \end{align*}
    Thus, the discrepancy $|\widetilde{B}|$ is stochastically dominated by $2L$, where $L\sim\-{Geo}(q)$ for $q = \frac{2}{1+\sqrt{1+\lambda \Delta}}$. 
    
    To construct a coupling between $L$ and $(X,Y)\sim \+C_e$, we perform a Bernoulli trial with success probability $q$ every two steps and terminate the revealing process when the trial succeeds.
\end{proof}

\subsubsection{$p$-th moment bound on the one-sided discrepancy}
Beyond the 2nd moment bound in \Cref{cor:expected-squared-length}, we also analyze the $p$-th moment of the discrepancy, conditioned on one coordinate of the coupling being fixed. This is called the \emph{one-sided discrepancy}, which plays a key role in the analysis of congestion.
%Recall that we use $\Omega(\mu)$ to denote the support of the distribution $\mu$. 
%Let $\Omega(\+C_e)$ denote the support of the coupling $\+C_e$.
\begin{lemma}\label{lem:one-sided-CI}
%  Let $G=(V,E)$ be a simple undirected graph with maximum degree $\Delta$, and . 
  Fix an edge $e=\{u,v\} \in E$ and any constant $p\ge 0$. For any $x\in \Omega(\mu^{e \gets 0})$ and $y\in \Omega(\mu^{e \gets 1})$, 
  \begin{align*}
      \E[(X,Y)\sim {\+C}_e]{|{X\oplus Y}|^p\mid X=x} &=O_p(\ol{\lambda}^p \Delta^p);\\
      \E[(X,Y)\sim {\+C}_e]{|X\oplus Y|^p\mid Y=y}   &=O_p(\ol{\lambda}^p \Delta^p).
  \end{align*} 
\end{lemma}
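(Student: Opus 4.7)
The plan is to leverage the symmetric representation of the local flipping coupling established in the proof of \Cref{lem:explicit-formula}: one may equivalently sample $X\sim\mu^{e\gets 0}$ and $Z'\sim\mu^{e\gets 1}$ independently, let $B$ be the connected component of $e$ in $X\oplus Z'$, and set $Y\triangleq Z'_B\uplus X_{E\setminus B}$. Under this description, conditioning on $X=x$ amounts to sampling $Z'\sim\mu^{e\gets 1}$ and reading off $|X\oplus Y|=|B|$, where $B$ is now the alternating component of $e$ in $x\oplus Z'$ with $x$ fixed. The second bound in the lemma follows from the same argument with the roles of $X$ and $Y$ swapped.

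Given this description, I would bound $|B|^p$ via a two-sided exploration of the alternating component outward from the endpoints of $e$. Each direction alternates between a deterministic step governed by the fixed matching $x$ (at a vertex reached via an edge of $Z'$, at most one edge of $x$ is incident to it, so the next edge is either uniquely determined or the exploration halts) and a random step governed by $Z'$ (at a vertex $w$ reached via an edge of $x$, one must decide whether $w$ is saturated by an edge of $Z'$ in the still-unexposed subgraph). By self-reducibility of the monomer-dimer model, after any partial revealing of $Z'$ the unrevealed portion is a monomer-dimer distribution on a subgraph $G'$, and the standard marginal identity for monomer-dimer distributions yields the lower bound $1/(1+\lambda\, d_{G'}(w))\ge 1/(1+\lambda\Delta)$ on the probability that $w$ is unsaturated. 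Hence each random step terminates the exploration with probability at least $q=\Omega\!\left(1/(1+\ol{\lambda}\Delta)\right)$, independently of all previously revealed information.

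Consequently, the number of random steps in each of the two directions is stochastically dominated by a geometric random variable $L\sim\-{Geo}(q)$, and the total length satisfies $|B|\le 2(L_1+L_2)+1$ regardless of whether $B$ is a path or a cycle, since a cycle merely forces the two directional explorations to meet, which still requires each to survive the geometric decay. Taking $p$-th moments and invoking $\E{L^p}=O_p(q^{-p})$ yields $\E{|B|^p}=O_p(\ol{\lambda}^p\Delta^p)$, establishing the stated bound.

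The main obstacle is showing that the weaker one-sided geometric rate $q=\Theta(1/(1+\lambda\Delta))$, rather than the sharper rate $\Theta(1/\sqrt{1+\lambda\Delta})$ obtained in \Cref{lem:path-length-geo-bound} by the Bayati--Gamarnik--Katz--Nair--Tetali self-avoiding-walk-tree argument, is in fact what one gets once one side of the coupling is fixed, since the cancellation exploited in the two-sided analysis is no longer available; and then that this rate still yields the claimed $\Delta^p$ dependence. Careful bookkeeping is required in tracking which edges of $Z'$ have been exposed, verifying that self-reducibility applies at each random step, and handling the cycle case uniformly with the path case.
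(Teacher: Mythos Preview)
Your proposal is correct and follows essentially the same approach as the paper: reduce conditioning on one side of the coupling to sampling the other matching independently (using the symmetric description of $\+C_e$ from \Cref{lem:explicit-formula}), explore the disagreement component outward from the endpoints of $e$ by alternating deterministic steps (governed by the fixed matching) and random steps (governed by the unrevealed matching), and bound each random step's survival probability by $\lambda\Delta/(1+\lambda\Delta)$ via self-reducibility and the standard marginal bound. The paper carries this out for the $Y=y$ case with a stack-based revealing process and the same geometric domination; your remark that the one-sided rate $q=\Theta(1/(1+\lambda\Delta))$ is genuinely weaker than the two-sided $\Theta(1/\sqrt{1+\lambda\Delta})$ of \Cref{lem:path-length-geo-bound} is exactly the point.
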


In \Cref{lem:one-sided-CI},  we fix the (one-side) outcome of the coupling $\+C_e$ to be $X=x$ (resp.~$Y=y$) and analyze the conditional distribution of $X$ given $X = x$ (resp.~$Y=y$). 
The lemma asserts that, even under this conditional distribution, the $p$-th moment of the discrepancy length remains bounded. 
By the law of total expectation, \Cref{lem:one-sided-CI} immediately implies that $  \E[(X,Y)\sim {\+C}_e]{|X\oplus Y|^p}=O_p(\ol{\lambda}^p \Delta^p)$. 
However, \Cref{cor:expected-squared-length} establishes a sharper bound for $p = 2$. 
To prove \Cref{lem:one-sided-CI}, we leverage the definition of the local flipping coupling (\Cref{def:new-coupling-2}) and employ a local revealing process.  
%By symmetry, it suffices to consider the case where $Y = y$ is fixed and $X' \sim \mu^{e \gets 0}$ is independently sampled. Starting from the edges incident to $e$, we incrementally reveal the local information of $X'$ until we identify the component $B\subseteq X'\oplus Y$ containing $e$. The size of $B$ corresponds precisely to  $|{X \oplus Y}|$ given $Y= y$, completing the argument.

\begin{proof}
%Let $e = \{u,v\} \in E$ be the selected edge,
We focus on the case where the one-sided outcome $Y$ of the coupling $\+C_e$ is fixed, while the case with $X$ fixed follows by symmetry.
Fix a configuration $y \in \Omega(\mu)$ with $y_e = 1$.
Let $(X,y)$ be sampled from the local flipping coupling $\+C_e$ given $Y=y$, let  $B=X\oplus y$ be the disagreement set. 

The following process traverses $B$ by revealing  the disagreement edges one by one:
\begin{enumerate}
    \item Sample $X \sim \mu^{e \gets 0}$ and initialize a stack $S$ (which supports \emph{push} and \emph{pop} operations) by pushing $v$, then $u$. Initially, the states of all edges in $X$ and $y$ are unrevealed except for the edge~$e$. %We repeat the following step until the stack $S$ is empty.
    \item While $S\neq\emptyset$, pop the top vertex $w^\star$ from $S$ and reveal the states of all edges in $X$ and $y$ incident to $w^\star$.  If a new disagreement  $\{w^\star,w\} \in E$ between $X$ and $y$ is found,  push $w$ into~$S$. 
\end{enumerate}
Let $w_0 = u, w_1, \ldots, w_{k-1}, w_k =v, w_{k+1}, \ldots , w_\ell$ be the sequence of vertices $w^\star$ popped in the second step of the process, listed in chronological order. 
If $v$ appears multiple times in the sequence, define $k$ as the smallest index for which $w_k = v$. 
The disagreement set $B$ consists of: 
the initially selected edge $e$, 
the edges  $\{w_0,w_1\}, \{w_1,w_2\}, \ldots, \{w_{k-2},w_{k-1}\}$ forming the path from $u$ to $w_{k-1}$,
and the edges $\{w_k,w_{k+1}\}, \ldots, \{w_{\ell-1},w_\ell\}$ from $v$ onward. 
Define $P_u$ as the path from $w_0$ to $w_{k-1}$ and $P_v$ as the path from $w_k$ to $w_\ell$. 
The size of discrepancy satisfies $\abs{B}\le \abs{P_u} + \abs{P_v}$. 
We will show that $\abs{P_u}$ and $\abs{P_v}$ are each stochastically dominated by $2L$, where $L$ follows a geometric distribution with success probability $q = \frac{1}{1+\lambda \Delta}$. 
It suffices to prove this for $P_u$, and for $P_v$, the same argument applies. 

Specifically, we show that for any even $i$, the probability that $P_u$ does not terminate at $w_i$ is at most $\frac{\lambda \Delta}{1+\lambda \Delta}$. 
This follows from the \emph{principle of deferred decisions}:
Observe that the sequence of edges $e_0 = e, e_1 = \{w_0,w_1\}, e_2= \{w_1,w_2\}, \ldots, e_i = \{w_{i-1},w_i\}$ alternates between being present in $y$ and $X$.  
In particular, for even $i$, the edge $e_i$ belongs to $y$, meaning that for $P_u$ to continue past $w_i$, a new disagreement $\{w_i,w\} \in E$ between $X$ and $y$ must be generated. 
The probability of this occurring is at most the probability that $w_i$ is saturated in $X$, which can be upper bounded by $\frac{\lambda \Delta}{1+ \lambda \Delta}$.
Applying this argument iteratively gives the desired stochastic domination by a geometric random variable. 
Finally, using the standard moment bound for geometric distributions, $\E{L^p} = O_p(q^{-p})$, we conclude that $\E{|P_v|^p},\E{|P_u|^p}\in O_p(q^{-p})=O_p(\ol{\lambda}^p \Delta^p)$.
Thus, we obtain
\begin{align*}
\E{|B|^p}
&\le \E{(|P_v|+|P_u|)^p} \le \E{(2\cdot \max\{|P_v|,|P_u|\})^p}= 2^p\cdot \E{\max\{|P_v|^p,|P_u|^p\}}\\
&\le 2^p\cdot \E{|P_v|^p+|P_u|^p}
= O_p(\ol{\lambda}^p \Delta^p).    
\end{align*}
This proves the bound for the case where $Y=y$.
The case where $X=x$ follows by symmetry.
\end{proof}

\subsection{Bounded congestion of the transport flow} \label{sec:expect-conges}
%\leavevmode\par
%\todo{$\Omega_\mu\to \Omega(\mu)$}

Next, we analyze the congestion of the transport flow defined in~\Cref{def:monomer-dimer-trans-path} using an approach based on \DCP{} and measure-preserved flow encoding, proving \Cref{thm:effective-trans-flow}(\ref{thm:trans-flow-congestion})-(\ref{thm:trans-flow-strong-congestion}).
%Specifically, in \Cref{sec:expected-congestion}, we bound the expected congestion, thereby proving \Cref{thm:effective-trans-flow}(\ref{thm:trans-flow-congestion}); 
%and in \Cref{sec:expected-strong-cong}, we bound the expected strong congestion, thereby proving \Cref{thm:effective-trans-flow}(\ref{thm:trans-flow-strong-congestion}).

%\subsubsection{Expected congestion bound}\label{sec:expected-congestion}
%Consider a monomer-dimer distribution $\mu$
Let $Q$ denote the Jerrum-Sinclair chain for the monomer-dimer distribution $\mu$.
For any pair of configurations $X,Y\in\Omega(\mu)$ such that their difference $X\oplus Y\triangleq \{f\in E\mid X_f\neq Y_f\}$ forms a path or a cycle, let $\gamma^{X,Y}$ denote the unique canonical path from $X$ to $Y$ constructed according to \Cref{def:monomer-dimer-trans-path}.
It is obvious that if $(X,Y)\sim\+C_e$ is sampled from the local flipping coupling $\+C_e$, then $\gamma^{X,Y}\sim\Gamma_e$, following the law of the transport flow $\Gamma_e$ from $\mu^{e \gets 0}$ to $\mu^{e \gets 1}$.

% We can write $\gamma^{X,Y}$ with $(X,Y)\sim \+C_e$ to denote a random path from $X$ to $Y$ from $\Gamma_e$, where $\+C_e$ is the local flipping coupling defined in~\Cref{def:new-coupling-2}. 
%We use $\Omega(\mu)$ to denote the support of $\mu$. %and $\Omega(\+C_e)$ to denote the support of $\+C_e$.

%Define $\ol{\lambda}\triangleq\max\{1,\lambda\}$. 
%\Cref{thm:effective-trans-flow}(\ref{thm:trans-flow-congestion}) is implied by the following expected congestion bound.
\begin{lemma} \label{lem:expected-congestion}
%  Let $\mu$ be the distribution of  monomer-dimer model on graph $G=(V, E)$ with edge weight $\lambda$. 
%Define $\ol{\lambda}\triangleq \max\{1,\lambda\}$.
The transport flow $(\Gamma_e)_{e \in E}$ defined in~\Cref{def:monomer-dimer-trans-path} satisfies the following:
\begin{enumerate}
    \item \emph{($O(\ol{\lambda}^{2}m)$-expected congestion)} For any transition $(\alpha \mapsto \beta)$ with $Q(\alpha, \beta) > 0$, 
  \begin{align*}
    \sum_{e\in E} \mu_e(0)\mu_e(1)\E[(X,Y)\sim \+C_e]{\frac{\*1[(\alpha \mapsto \beta) \in \gamma^{X,Y}]}{\ell(\gamma^{X,Y})}} 
    &\leq O(\ol{\lambda}^{2} m) \cdot \mu(\alpha)Q(\alpha, \beta);
  \end{align*}
    \item \emph{(strong $O(\ol{\lambda}^{4}\Delta^2 m)$-expected congestion)} For any transition $(\alpha \mapsto \beta)$ with $Q(\alpha, \beta) > 0$, 
  \begin{align*}
	\sum_{e\in E} \mu_e(0)\mu_e(1)\E[(X,Y)\sim \+C_e]{\ell(\gamma^{X,Y})\cdot \*1[(\alpha \mapsto \beta) \in \gamma^{X,Y}]} 
    &\leq O(\ol{\lambda}^{4}\Delta^2 m) \cdot \mu(\alpha)Q(\alpha, \beta).    
  \end{align*}
\end{enumerate}
\end{lemma}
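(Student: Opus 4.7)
The plan is to adapt the classical Jerrum--Sinclair canonical-path encoding to our coupled setting, where the joint weight on $(X,Y)$ is the mixed density
\begin{align*}
\mu_e(0)\mu_e(1)\cdot\Pr[(X,Y)\sim\+C_e]{(X,Y)=(x,y)}=\mu_{\overline{\partial} B}(x_{\overline{\partial} B})\cdot\mu(y),
\end{align*}
from \Cref{lem:explicit-formula}, rather than the independent product $\mu(x)\mu(y)$ of the classical argument. Crucially this weight no longer depends on $e$, so for part~(1) I would first exchange the sum over $e\in E$ with the sum over $(x,y)$: for any fixed $(x,y)$ with $(\alpha\mapsto\beta)\in\gamma^{x,y}$, the number of admissible witnesses $e\in B$ with $x_e=0,y_e=1$ is $\Theta(\ell(\gamma^{x,y}))$ by alternation along the path or cycle $B$, which cancels the $1/\ell(\gamma^{x,y})$ factor. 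This reduces part~(1) to showing
\begin{align*}
\sum_{(x,y):\,(\alpha\mapsto\beta)\in\gamma^{x,y}}\mu_{\overline{\partial} B}(x_{\overline{\partial} B})\cdot\mu(y)\;\le\;O(\ol{\lambda})\cdot\mu(\alpha),
\end{align*}
which combined with $\mu(\alpha)Q(\alpha,\beta)\gtrsim \mu(\alpha)/(\ol{\lambda}\,m)$ from the Metropolis acceptance rule yields the $O(\ol{\lambda}^2 m)$ bound.

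The core step is the construction of a measure-preserved flow encoding $\eta=\eta(x,y,\alpha,\beta)$ in the spirit of \cite{jerrum1989approximating}: morally, $\eta\in\+M_G$ is the near-matching obtained from $\alpha$ by symmetric-differencing with the ``unprocessed'' portion of $B=x\oplus y$, corrected by $O(1)$ edges near the current pointer $w^\star$ of the canonical-path construction. The encoding must satisfy (i)~injectivity of $(x,y)\mapsto\eta$ for fixed $(\alpha,\beta)$, obtained by running the construction of \Cref{def:monomer-dimer-trans-path} in reverse from $\eta$; and (ii)~a weight-preservation inequality $\mu_{\overline{\partial} B}(x_{\overline{\partial} B})\cdot\mu(y)\le O(\ol{\lambda})\cdot\mu(\alpha)\cdot\mu(\eta)$. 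The inequality~(ii) is precisely where the marginal form of the joint weight from \Cref{lem:explicit-formula} pays off: the conditional-independence factorisation of $\mu$ on $E\setminus\overline{\partial} B$ (already exploited in the proof of \Cref{lem:explicit-formula}) exactly cancels the ``free'' coordinates of $x$ outside $B$, leaving only an occupied-edge-count comparison along the canonical path that loses $O(1)$ edges and hence a factor of $O(\ol{\lambda})$. Summing $\mu(\eta)$ over reachable near-matchings gives at most $1$, completing part~(1).

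For part~(2), after the sum swap the extra $\ell(\gamma^{x,y})$ factor becomes $\ell(\gamma^{x,y})^2$, which the global second-moment bound of \Cref{cor:expected-squared-length} cannot handle tightly. Instead I would split $\ell(\gamma^{x,y})=\ell_{\text{pre}}+\ell_{\text{post}}$ at the position of $(\alpha\mapsto\beta)$ along the canonical path, so that $\ell_{\text{pre}}$ is governed by the portion of $B$ between $X$ and $\alpha$, and $\ell_{\text{post}}$ by the portion between $\alpha$ and $Y$. After the encoding, each half is controlled by the one-sided moment bound \Cref{lem:one-sided-CI} with $p=2$, conditioning on the appropriate fixed side ($Y$ for $\ell_{\text{pre}}$ and $X$ for $\ell_{\text{post}}$), contributing $O(\ol{\lambda}^2\Delta^2)$ each. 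Combining with the weight bound from part~(1) yields the $O(\ol{\lambda}^4\Delta^2 m)$ bound.

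The principal obstacle is the construction and verification of the encoding $\eta$: the classical Jerrum--Sinclair argument leverages the product measure $\mu(x)\mu(y)$ in an essential way, whereas here the asymmetry between the marginal $\mu_{\overline{\partial} B}(x_{\overline{\partial} B})$ and the full measure $\mu(y)$ demands that $\eta$ be engineered to recover $x$ only on $\overline{\partial} B$ (not on all of $E$), and that the decoupling of the conditional-independence factor on $E\setminus\overline{\partial} B$ aligns correctly on both sides of inequality~(ii) so that the cancellation closes without loss. Verifying this alignment, together with the injectivity of $\eta$ in the cycle case (where two parities of alternation must be distinguished), is the delicate combinatorial step.
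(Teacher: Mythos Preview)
Your decoupling step (swapping the sum over $e$ using the $e$-independence of $\mu_{\overline{\partial}B}(x_{\overline{\partial}B})\cdot\mu(y)$) is correct and is exactly the paper's decoupling lemma in disguise. The gap is in the encoding step.

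Your weight-preservation inequality~(ii), $\mu_{\overline{\partial} B}(x_{\overline{\partial} B})\cdot\mu(y)\le O(\ol{\lambda})\cdot\mu(\alpha)\cdot\mu(\eta)$, is false in general. The marginal $\mu_{\overline{\partial}B}(x_{\overline{\partial}B})$ equals $\mu(x)\cdot\bigl(\mu^{x_{\overline{\partial}B}}_{E\setminus\overline{\partial}B}(x_{E\setminus\overline{\partial}B})\bigr)^{-1}$, so compared with the classical product $\mu(x)\mu(y)$ the joint weight carries an extra factor $1/\mu^{x_{\overline{\partial}B}}_{E\setminus\overline{\partial}B}(y_{E\setminus\overline{\partial}B})$, which can be exponentially large. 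Since your single-matching encoding $\eta$ agrees with $y$ on $E\setminus\overline{\partial}B$, the full measure $\mu(\eta)$ on the right does \emph{not} reproduce this factor, and the inequality fails already on a disjoint union $G=B\sqcup G'$ by the factor $Z_{G'}/\lambda^{|y_{G'}|}$. Your remark that ``the conditional-independence factorisation \ldots\ cancels the free coordinates of $x$'' is the right instinct, but it only works if the \emph{target} weight also has the marginal form $\mu_{\overline{\partial}B}(\eta_{\overline{\partial}B})$; with $\mu(\eta)$ the cancellation is one-sided.

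This is precisely why the paper does not encode to a single matching. It encodes $(x,y)$ to a pair $(\hat{x},\hat{y})$ in the local flipping coupling $\wh{\+C}_{h^\star}$ on a modified graph $\wh{G}$ (with the two transition edges merged), so that the source weight $\mu_{\overline{\partial}B}(x_{\overline{\partial}B})\cdot\mu(y)$ and the target weight $\wh{\mu}_{\overline{\partial}\wh{B}}(\hat{x}_{\overline{\partial}\wh{B}})\cdot\wh{\mu}(\hat{y})$ have the \emph{same} marginal-times-full structure; the conditional factors on $E\setminus\overline{\partial}B=\wh{E}\setminus\overline{\partial}\wh{B}$ then cancel exactly, leaving only the $O(\ol{\lambda}^2)$ edge-count discrepancy. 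Crucially, $\hat{y}=\textsf{proj}(\alpha\cup\beta)$ is \emph{fixed}, so after the encoding the indicator $\*1[(\alpha\mapsto\beta)\in\gamma^{X,Y}]$ becomes the one-sided event $\*1[\hat{Y}=\hat{y}]$, and the $|B|^p$ factor (for any $p$) is then controlled by a single application of \Cref{lem:one-sided-CI} conditioned on $\hat{Y}$. This handles both parts uniformly; your proposed split $\ell_{\text{pre}}+\ell_{\text{post}}$ with separate conditionings on $X$ and $Y$ does not survive the encoding, because neither $X$ nor $Y$ is fixed in the sum you are trying to bound.
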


Although the definitions of congestion involve the local flipping couplings $\+C_e$ for all edges $e \in E$, 
it suffices to analyze the coupling $\+C_f$ for a single $f\in E$ using the following \emph{decoupling lemma}.

\begin{lemma}[decoupling lemma]\label{lem:pre-decoupling}
The local flipping coupling $(\+C_e)_{e\in E}$ defined in \Cref{def:new-coupling-2} satisfies the following property.
  Let $\phi:\Omega(\mu) \times \Omega(\mu)\to \mathbb{R}$ be a function. For any edge $f \in E$, it holds
  \begin{align}\label{lem:new-pre-decoupling-equ1}
    \sum_{e\in E}\mu_e(0) \mu_e(1) &\E[(X,Y)\sim \+C_e]{\phi(X,Y) \cdot \*1[X_f \neq Y_f]}\notag\\
    &\le
    \mu_f(0)\mu_f(1) \E[(X,Y)\sim \+C_f]{\tp{\phi(X,Y) +\phi(Y,X)}\cdot |{X \oplus Y}|}.
  \end{align}
%  where $X \oplus Y\triangleq \{f\in E\mid X_f\neq Y_f\}$ represents the set of edges on which $X$ and $Y$ disagree.  
\end{lemma}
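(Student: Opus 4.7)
The plan is to apply \Cref{lem:explicit-formula} to both sides of~\eqref{lem:new-pre-decoupling-equ1} in order to rewrite the probabilities, and then exploit two structural properties: (i) the weight $\mu_{\ol{\partial} B}(x_{\ol{\partial} B})\,\mu(y)$, where $B = x \oplus y$, is independent of which edge $e$ we condition on, and (ii) a natural $(x,y) \leftrightarrow (y,x)$ symmetry induced by conditional independence of $\mu$. Concretely, \Cref{lem:explicit-formula} gives $\mu_e(0)\mu_e(1)\Pr[(X,Y)\sim\+C_e]{(X,Y)=(x,y)} = \mu_{\ol{\partial} B}(x_{\ol{\partial} B})\,\mu(y)$ for every $(x,y)$ in the support of $\+C_e$ (i.e., $x_e = 0$, $y_e = 1$, and $B$ is a single path or even cycle). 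Substituting this into the LHS and swapping the order of summation produces
\begin{align*}
  \text{LHS} = \sum_{(x,y):\,f \in B,\,B\text{ single path/even cycle}} N^{x \to y}(B) \cdot \phi(x,y) \cdot \mu_{\ol{\partial} B}(x_{\ol{\partial} B})\,\mu(y),
\end{align*}
where $N^{x \to y}(B) \triangleq |\{e \in B : x_e = 0,\,y_e = 1\}|$ counts the edges of $B$ oriented from $x$ to $y$.

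The next step is to split this sum according to $x_f \in \{0,1\}$ and, in the case $x_f = 1$, $y_f = 0$, perform the relabeling $(x',y') := (y,x)$ so that $x'_f = 0$, $y'_f = 1$. The relabeling hinges on the key symmetry identity
\begin{align*}
  \mu_{\ol{\partial} B}(x_{\ol{\partial} B}) \cdot \mu(y) = \mu_{\ol{\partial} B}(y_{\ol{\partial} B}) \cdot \mu(x),
\end{align*}
which follows from conditional independence of $\mu$ across the separator $\partial B$: since $x_{\partial B} = y_{\partial B} = \*0_{\partial B}$ and $x_{E \setminus \ol{\partial} B} = y_{E \setminus \ol{\partial} B}$, the conditional distribution on the complement region takes the same value whether conditioned on $x_{\ol{\partial} B}$ or $y_{\ol{\partial} B}$; this is essentially the identity $(\star)$ from the proof of \Cref{lem:explicit-formula}. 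Combining the two cases then yields
\begin{align*}
  \text{LHS} = \sum_{(x,y):\,x_f = 0,\,y_f = 1,\,B\text{ path/cycle}} \bigl[N^{x \to y}(B)\,\phi(x,y) + N^{y \to x}(B)\,\phi(y,x)\bigr]\,\mu_{\ol{\partial} B}(x_{\ol{\partial} B})\,\mu(y).
\end{align*}

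A parallel application of \Cref{lem:explicit-formula} with $e = f$ rewrites the RHS of~\eqref{lem:new-pre-decoupling-equ1} as the same sum but with the bracketed coefficient replaced by $|B|\,(\phi(x,y) + \phi(y,x))$. The desired inequality then reduces to the pointwise bound $N^{x \to y}(B)\,\phi(x,y) + N^{y \to x}(B)\,\phi(y,x) \le |B|\,(\phi(x,y) + \phi(y,x))$, which is immediate from $N^{x \to y}(B),\,N^{y \to x}(B) \le |B| = N^{x \to y}(B) + N^{y \to x}(B)$ together with non-negativity of $\phi$ in the intended applications (where $\phi$ is an indicator divided by, or multiplied with, a positive path length, hence $\phi \ge 0$). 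I expect the main obstacle to be the careful verification of the $(x,y) \leftrightarrow (y,x)$ symmetry, which requires tracking the conditional-independence structure of $\mu$ across $\partial B$; once that identity is established, the rest is a careful reorganization of a double sum.
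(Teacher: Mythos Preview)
Your proposal is correct and follows essentially the same route as the paper's proof. The paper packages your key observation as the single identity
\[
\mu_e(0)\mu_e(1)\Pr[(X,Y)\sim\+C_e]{(X,Y)=(\sigma,\tau)}=\mu_f(0)\mu_f(1)\Pr[(X,Y)\sim\+C_f]{(X,Y)=(\sigma,\tau)}
\]
for $(\sigma,\tau)\in\Omega(\+C_e)\cap\Omega(\+C_f)$, then splits according to whether $(X,Y)\in\Omega(\+C_f)$ or $(Y,X)\in\Omega(\+C_f)$ and bounds $\sum_{e}\*1[(\sigma,\tau)\in\Omega(\+C_e)]\le|\sigma\oplus\tau|$; this is exactly your use of the $e$-independent weight $\mu_{\ol\partial B}(x_{\ol\partial B})\mu(y)$, the $(x,y)\leftrightarrow(y,x)$ swap (which the paper dismisses as ``by symmetry'' but is precisely the $(\star)$ identity you cite), and the bound $N^{x\to y}(B)\le|B|$. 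Your remark that the final step needs $\phi\ge 0$ is accurate and applies equally to the paper's proof: the lemma as stated with $\phi:\Omega\times\Omega\to\mathbb{R}$ is slightly imprecise, but the intended applications only use nonnegative $\phi$.
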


%We have the following expected congestion bound via \emph{decoupling} lemma:

\Cref{lem:pre-decoupling} is formally proved in \Cref{sec:proof-decoupling}.
We now prove \Cref{lem:expected-congestion} by assuming~\Cref{{lem:pre-decoupling}}.
Fix a transition $(\alpha \mapsto \beta)$ of $Q$. We analyze the congestion on $(\alpha \mapsto \beta)$.
%For any edge $f \in E$, recall that $\Omega(\+C_f)\subseteq \Omega(\mu) \times \Omega(\mu)$ is the support of the local flipping coupling $\+C_f$, and 
For any pair $x,y\in\Omega(\mu)$, recall that $\gamma^{x,y}$ denotes the canonical path from $x$ to $y$. 
Define $\varphi,\psi:\Omega(\mu) \times \Omega(\mu)\to\mathbb{R}$ as: %for any $(x,y)\in \Omega(\mu) \times \Omega(\mu)$,
\begin{align*}
%\forall (x,y)\in \Omega(\mu) \times \Omega(\mu),\\
\varphi(x,y) &= \begin{cases}
    \frac{\*1[(\alpha \mapsto \beta) \in \gamma^{x,y}]}{\ell(\gamma^{x,y})} &\text{if } x,y \text{ differ at a single path or an even cycle;}\\
    0 &\text{otherwise.}
  \end{cases}\\
\psi(x,y) &= \begin{cases}
  |{X\oplus Y}|\cdot \*1[(\alpha\mapsto \beta) \in \gamma^{X,Y}] &\text{if } x,y \text{ differ at a single path or an even cycle;}\\
    0 &\text{otherwise.} 
    \end{cases}
\end{align*}

Next, fix an edge $h \in E$ such that $\alpha_h \neq \beta_h$. 
For any edge $e \in E$ and pair of configurations $(x,y) \in \Omega(\+C_e)$, note that if the transition $(\alpha \mapsto \beta) \in \gamma^{x,y}$, then $x_h \neq y_h$, because the canonical path never flips an edge where $x_h = y_h$. The expected congestion on $(\alpha \mapsto \beta)$ can be bounded as:
\begin{align} 
  \sum_{e \in E} \mu_e(0) \mu_e(1) \E[(X,Y) \sim \+C_e]{\frac{\*1[(\alpha \mapsto \beta) \in \gamma^{X,Y}]}{\ell(\gamma^{X,Y})}} %\notag\\
  =\,\, &\sum_{e \in E} \mu_e(0) \mu_e(1) \E[(X,Y) \sim \+C_e]{\varphi(X,Y) \cdot \*1[X_h \neq Y_h]}\notag\\
(\text{by \Cref{lem:pre-decoupling}})\qquad    \le\,\, &\mu_h(0)\mu_h(1) \E[(X,Y)\sim \+C_h]{\tp{\varphi(X,Y) + \varphi(Y,X)}|{X \oplus Y}|} \notag\\
(\text{by }|{X \oplus Y}| \le 2 \ell(\gamma^{X,Y}))\qquad   \le\,\, &2\mu_h(0) \mu_h(1) \Pr[(X,Y) \sim \+C_h]{(\alpha \mapsto \beta) \in \gamma^{\set{X,Y}}}, \label{eq:cong-decoupling}
\end{align}
where $\gamma^{\set{X,Y}}\triangleq \gamma^{X,Y} \cup \gamma^{Y,X}$ denotes the set of transitions in either $\gamma^{X,Y}$ or $\gamma^{Y,X}$.
%where the last inequality follows from $|{X \oplus Y}| \le 2 \ell(\gamma^{X,Y})$. 
%For simplicity, we define $\gamma^{\set{X,Y}}\triangleq \gamma^{X,Y} \cup \gamma^{Y,X}$ to denote the set of transitions in either $\gamma^{X,Y}$ or $\gamma^{Y,X}$.

\filbreak

Similarly, for any fixed edge $h \in E$  such that $\alpha_h \neq \beta_h$,  the strong expected congestion on $(\alpha \mapsto \beta)$ can be bounded as:
%We can use~\eqref{lem:pre-decoupling-equ1} in the \DCP{} lemma with the new function $\phi$ to get 
  \begin{align}
     &\sum_{e\in E}\mu_e(0)\mu_e(1)\E[(X,Y)\sim \+C_e]{\ell(\gamma^{X,Y})\cdot \*1[(\alpha \mapsto \beta) \in \gamma^{X,Y}]} \notag\\
   \text{(by $\ell(\gamma^{X,Y})\le |{X\oplus Y}|$)}\qquad \le\,\, &\sum_{e\in E} \mu_e(0)\mu_e(1)\E[(X,Y)\sim \+C_e]{|{X\oplus Y}|\cdot  \*1[(\alpha \mapsto \beta) \in \gamma^{X,Y}]}\notag\\
   =\,\, &\sum_{e\in E}\mu_e(0) \mu_e(1) \E[(X,Y)\sim \+C_e]{\psi(X,Y) \cdot \*1[X_h \neq Y_h]}\notag\\
    \text{(by~\Cref{lem:pre-decoupling})}\qquad \le\,\, &\mu_h(0)\mu_h(1) \E[(X,Y)\sim \+C_h]{\tp{\psi(X,Y) + \psi(Y,X)}|{X \oplus Y}|} \notag\\
    =\,\, &\mu_h(0)\mu_h(1)\E[(X,Y)\sim \+C_h]{|{X\oplus Y}|^2\cdot   \*1[(\alpha \mapsto \beta) \in \gamma^{\set{X,Y}}]}. \label{eq:cong-decoupling-strong}
  \end{align}

The bounds on expected congestion and strong expected congestion in \eqref{eq:cong-decoupling} and \eqref{eq:cong-decoupling-strong} motivate us to establish the following lemma.
%Though in this section it suffices to analyze the probability that the transition $(\alpha \mapsto \beta)$ appears in $\gamma^{\set{X,Y}}$, we takes the $p$-th power of discrepancies into consideration, which will be helpful for proving local log-Sobolev inequalities.
\begin{lemma} \label{lem:powered-cong}
    Fix any constant $p \ge 0$. For any transition $(\alpha \mapsto \beta)$ and any edge $h\in E$ such that $\alpha_h \neq \beta_h$, 
    \begin{align}\label{eq:powered-cong}
        \mu_h(0) \mu_h(1) \E[(X,Y) \sim \+C_h]{|{X \oplus Y}|^p \cdot \*1[(\alpha \mapsto \beta) \in \gamma^{\set{X,Y}}]} \le O_p(\ol{\lambda}^{p+2} \Delta^p m) \cdot \mu(\alpha) Q(\alpha,\beta),
    \end{align}
    where $\gamma^{\set{X,Y}}\triangleq \gamma^{X,Y}\cup \gamma^{Y,X}$ denotes the set of transitions appearing in either $\gamma^{X,Y}$ or $\gamma^{Y,X}$. %and the big-O notation hides a constant depending on $p$.
\end{lemma}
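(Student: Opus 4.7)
The plan is to adapt the classical Jerrum--Sinclair canonical-path encoding argument to our coupling-based transport flow, exploiting the fact that $X \oplus Y$ is a single path or even cycle under $\+C_h$. By symmetry it suffices to bound the contribution from pairs $(x,y)$ whose canonical path $\gamma^{x,y}$ contains $(\alpha \mapsto \beta)$ (the case $\gamma^{y,x}$ is identical after swapping $x\leftrightarrow y$). Applying \Cref{lem:explicit-formula}, the left-hand side of~\eqref{eq:powered-cong} expands as
\[
\sum_{(x,y):\, x_h=0,\, y_h=1,\, (\alpha\mapsto\beta)\in\gamma^{x,y}} \mu_{\ol{\partial} B}(x_{\ol{\partial} B})\, \mu(y)\cdot |B|^p,
\]
where $B = x\oplus y$.

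The core step is to construct, for each contributing pair, a Jerrum--Sinclair-style encoding $W = W(x,y,\alpha,\beta)$, essentially $W = \alpha \oplus x \oplus y$, adjusted by one or two edges depending on whether the transition $(\alpha\mapsto\beta)$ is a down, exchange, or up move and on its position along $B$ relative to the distinguished endpoint $w^\ast$ used in \Cref{def:monomer-dimer-trans-path}. Two properties must be verified: (a) a weight inequality of the form
\[
\mu_{\ol{\partial} B}(x_{\ol{\partial} B})\, \mu(y) \;\le\; \ol{\lambda}^{O(1)} \cdot \mu(\alpha)\, \mu(W),
\]
which should follow from the multiplicative form of $\mu$ (since $|W| = |x|+|y|-|\alpha|$ up to an additive constant absorbed into $\ol{\lambda}^{O(1)}$) combined with the symmetry identity $\mu_{\ol{\partial} B}(x_{\ol{\partial} B})\,\mu(y) = \mu(x)\,\mu_{\ol{\partial} B}(y_{\ol{\partial} B})$ noted in the proof of \Cref{lem:explicit-formula}; and (b) near-injectivity of $(x,y)\mapsto W$ for fixed $(\alpha,\beta)$, since $\alpha$ together with $W$ determines the set $B$, and then the ordering rules from \Cref{def:monomer-dimer-trans-path} (starting at $w^\ast$, path vs.\ cycle, parity) recover both $x$ and $y$ up to $O(1)$ ambiguity.

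The moment factor $|B|^p$ is handled separately using \Cref{lem:one-sided-CI}: conditioning on $Y=y$ (the coordinate that carries the full factor $\mu(y)$) yields $\E[(X,Y)\sim\+C_h]{|B|^p \mid Y=y} = O_p(\ol{\lambda}^p \Delta^p)$, which contributes the $\Delta^p$ factor. Combining these pieces, summing $\mu(W)$ over the at-most-matching configurations $W$ and using $\sum_W \mu(W)\le 1$, gives
\[
\text{LHS} \;\le\; O_p(\ol{\lambda}^{p+2} \Delta^p) \cdot \mu(\alpha).
\]
Finally, the stated factor $m$ appears because $Q(\alpha,\beta) = \Theta(1/m)$ on any valid Jerrum--Sinclair transition, so that $m\cdot \mu(\alpha) Q(\alpha,\beta) \asymp \min\{\mu(\alpha),\mu(\beta)\}\ge\mu(\alpha)$ up to the acceptance ratio, which closes the bound.

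The main obstacle is designing $W$ so that the weight inequality holds with only $O(1)$ powers of $\ol{\lambda}$ regardless of $|B|$, and verifying injectivity uniformly across all five transition configurations in \Cref{def:monomer-dimer-trans-path} (path subcases (a) and (b), each with either interior or endpoint transition; and the even-cycle case). The delicate bookkeeping lives at the distinguished vertex $w^\ast$ and at the coupling edge $h$: the former breaks the directional symmetry of the canonical path and is the source of a potential extra ``defect'' vertex in $W$, while the latter supplies the factor $\mu_h(0)\mu_h(1)$ that must be absorbed via the weight identity. Once these cases are unified under a common encoding with bounded defect, the rest of the bound follows from the standard canonical-path bookkeeping combined with the one-sided moment bound.
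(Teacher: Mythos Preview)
Your plan has a genuine gap in step (a), the weight inequality
\[
\mu_{\ol{\partial} B}(x_{\ol{\partial} B})\,\mu(y)\;\le\;\ol{\lambda}^{O(1)}\cdot\mu(\alpha)\,\mu(W).
\]
You argue this should follow from the multiplicative form of $\mu$ together with the symmetry $\mu_{\ol{\partial} B}(x_{\ol{\partial} B})\mu(y)=\mu(x)\mu_{\ol{\partial} B}(y_{\ol{\partial} B})$. But the left-hand side is \emph{not} of the form $Z^{-2}\lambda^{|x|+|y|}$: since $x_{\partial B}=\*0$, one has
\[
\mu_{\ol{\partial} B}(x_{\ol{\partial} B})=\frac{\lambda^{|x_B|}}{Z}\cdot Z_{G[E\setminus\ol{\partial} B]},
\]
where $Z_{G[E\setminus\ol{\partial} B]}$ is the partition function of the monomer-dimer model on the subgraph with edge set $E\setminus\ol{\partial} B$. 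This factor depends on $B$ and is not a power of $\lambda$, so no single matching $W$ can satisfy your inequality uniformly in $B$: you would need $\lambda^{|W|}\gtrsim\lambda^{|x_B|+|y_B|-|\alpha_B|}\,Z_{G[E\setminus\ol{\partial} B]}$, and the partition function can be exponentially large. The ``multiplicative form of $\mu$'' buys you the classical identity $\mu(x)\mu(y)\approx\mu(\alpha)\mu(W)$, but the coupling weight here is $\mu_{\ol{\partial} B}(x_{\ol{\partial} B})\mu(y)$, which is structurally different.

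The paper resolves this by encoding not to a single configuration but to a \emph{pair} $(\hat x,\hat y)$ in the local flipping coupling $\hat{\+C}_{h^\star}$ on a modified graph $\hat G$ (obtained by merging the two edges $g,h$ of the exchange transition into one edge $h^\star$). Because both the source and target measures are coupling probabilities of the same ``marginal times full'' form, the extra partition-function factor cancels in the ratio, yielding a clean measure-preserving bound (\Cref{lem:enc}). Crucially, the encoding forces $\hat y=\textsf{proj}(\alpha\cup\beta)$ to be \emph{fixed}, so after the change of measure the entire sum (including the factor $|B|^p\approx|\hat B|^p$) becomes a one-sided moment $\E[(\hat X,\hat Y)\sim\hat{\+C}_{h^\star}]{|\hat X\oplus\hat Y|^p\cdot\*1[\hat Y=\textsf{proj}(\alpha\cup\beta)]}$, to which \Cref{lem:one-sided-CI} applies directly. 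This also closes your second loose end: the moment factor cannot be ``handled separately'' from the encoding, because dropping the indicator to invoke \Cref{lem:one-sided-CI} loses all dependence on $\alpha$, while keeping the indicator after encoding to a single $W$ leaves you with $\sum_W \mu(W)\,|B(W)|^p$, which is not a coupling moment. The paper's pair-encoding handles both issues simultaneously.
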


With this general congestion bound, the expected congestion bound in \Cref{lem:expected-congestion} follows from \eqref{eq:cong-decoupling} and \Cref{lem:powered-cong} with $p=0$, while the strong expected congestion bound in \Cref{lem:expected-congestion} follows from \eqref{eq:cong-decoupling-strong} and \Cref{lem:powered-cong} with $p=2$.

\subsubsection{Measure-preserving flow encoding}
To complete the proof, it remains to prove \Cref{lem:pre-decoupling,lem:powered-cong}. 
We begin with \Cref{lem:powered-cong}, where we derive the general congestion bound \eqref{eq:powered-cong} by constructing a measure-preserving flow encoding.

\begin{proof}[Proof of~\Cref{lem:powered-cong}]
Fix a transition $(\alpha \mapsto \beta)$
of the Jerrum-Sinclair chain $Q$. By definition, $\alpha$ and $\beta$ differ on exactly one or two edges. We analyze each case separately.

\paragraph{Case 1: single disagreement.}
When $\abs{\alpha \oplus \beta}=1$, there is a unique edge $h$ where $\alpha_h \neq \beta_h$. 
We consider the up-transition case where $\alpha_h=0$ and $\beta_h=1$, while the down-transition case follows by symmetry. 
By the construction of the canonical path $\gamma^{x,y}$, the transition $\alpha \mapsto \beta$ can only occur in the final step of $\gamma^{x,y}$, meaning $y = \beta$. 
Moreover, since each discrepancy edge $f\in x\oplus y$ is flipped exactly once along $\gamma^{x,y}$, the transition $(\alpha \mapsto \beta) \in \gamma^{x,y}$ implies $x_h = 0$ and $y_h = 1$. 
Consequently, 
\begin{align}\label{eq:single-edge-1}
  \nonumber
  \E[(X,Y) \sim \+C_h]{\abs{X \oplus Y}^p \cdot \*1[(\alpha \mapsto \beta) \in \gamma^{\set{X,Y}}]}
  &= \E[(X,Y) \sim \+C_h]{\abs{X \oplus Y}^p \cdot \*1[(\alpha \mapsto \beta) \in \gamma^{X,Y}]}\\
  &\leq \E[(X,Y) \sim \+C_h]{\abs{X \oplus Y}^p \cdot \*1[Y = \beta]}. 
\end{align}
Applying~\Cref{lem:one-sided-CI}, there exists a constant $c_p >0$ depending on $p\ge 0$ such that the $p$-th moment of the one-sided discrepancies satisfies
\begin{align}\label{eq:one-sided-discrepancy}
    \E[(X,Y) \sim \+C_h]{\abs{X \oplus Y}^p \mid Y = \beta} \le c_p \ol{\lambda}^p \Delta^p.
\end{align}
Combining~\eqref{eq:single-edge-1} and~\eqref{eq:one-sided-discrepancy}, we obtain
\begin{align} \label{eq:bound-for-expectation-aux-1}
  \E[(X,Y) \sim \+C_h]{\abs{X \oplus Y}^p \cdot \*1[(\alpha \mapsto \beta) \in \gamma^{\set{X,Y}}]}
  &\leq c_p \ol{\lambda}^p \Delta^p \Pr[(X,Y) \sim \+C_h]{Y = \beta}
  = c_p \ol{\lambda}^p \Delta^p \mu^{h\gets 1}(y).
\end{align}
To conclude this case, we note that
\begin{align*}
  \mu_h(0)\mu_h(1) \cdot{c_p \ol{\lambda}^p \Delta^p \mu^{h\gets 1}(y)}
    =c_p \ol{\lambda}^p \Delta^p \cdot \mu_h(0) \mu(\beta)
    \le c_p \ol{\lambda}^{p+1} \Delta^p m \cdot \mu(\alpha) Q(\alpha,\beta),
\end{align*}
where the last inequality follows from the facts that $Q(\alpha,\beta)=\frac{1}{m}\min\set{1,\frac{\mu(\beta)}{\mu(\alpha)}}$ and $\mu(\beta)\le \ol{\lambda}\cdot\mu(\alpha)$.

\paragraph{Case 2: two disagreements.}
When $\abs{\alpha \oplus \beta} = 2$,  i.e., $\alpha \mapsto \beta$ is an exchange transition, there exist two edges $f,g \in E$ such that $\alpha_f = 0, \beta_f = 1$ and $ \alpha_g=1, \beta_g = 0$. 
Let $w \in f\cap g$ be the vertex shared by both $f$ and $g$. We consider the case where $h = f$, while the case $h = g$ follows by symmetry.

Given $h =f$, the event $(\alpha \mapsto \beta) \in \gamma^{\set{X,Y}}$ is equivalent to $(\alpha \mapsto \beta) \in \gamma^{X,Y}$, since $X_h = 0$ and $Y_h = 1$.
However, unlike the single-disagreement case, this event no longer implies a trivial case (e.g., $Y = \beta$ previously).
Consequently, it is not immediate why the term $\mu(\alpha) Q(\alpha, \beta)$ should appear in the right-hand side of \eqref{eq:powered-cong}. 
To proceed, we refine our understanding of  $(\alpha \mapsto \beta) \in \gamma^{X,Y}$.
%
%We first introduce an encoding that acts on all $(x,y) \in \Omega(\+C_h)$ such that the transition $\alpha \mapsto \beta$ is in the canonical path $\gamma^{x,y}$.

Recall that $B=x\oplus y$ denotes the set of edges where $x$ and $y$ differ. 
We partition $B$ into $B^h\uplus B^g$ as follows:

% When $\norm{\alpha \oplus \beta}_1 = 2$,  i.e. $\alpha \mapsto \beta$ is an exchange transition, we assume there are two edges $h,g$ such that $\alpha_h = 0, \alpha_g = 1, \beta_h = 1, \beta_g = 0$ such that $h\cap g = \{w\}$. We only need to verify~\Cref{item:powered-cong} in \Cref{lem:powered-cong} for the edge $h$.
% To bound the congestion, we will introduce an encoding that acts on all $(x,y)$ such that $(\alpha \mapsto \beta) \in \gamma^{\set{x,y}}$.
% Fix such a pair $(x,y)\in\Omega(\+{C}_{h})$. Recall that $B\triangleq \{e\in E:x_e\ne y_e\}$ is either a path or a cycle. 
\begin{itemize}
  \item If $B$ is a path, $w$ (the vertex shared by $h=f$ and $g$) divides $B$ into two subpaths $B^h$ and $B^g$, where $B^h$ (resp.~$B^g$) contains $h$ (resp.~$g$).
  \item If $B$ is an even-length cycle, let $w^*$ be the largest vertex in $B$. By the construction of the canonical path $\gamma^{x,y}$, we must have $w^* \neq w$ whenever $(\alpha \mapsto \beta) \in \gamma^{x,y}$. The cycle is then divided into two subpaths $B^h$ and $B^g$ by $w$ and $w^*$, with $B^h$ (resp.~$B^g$) containing $h$ (resp.~$g$).
\end{itemize}
From \Cref{def:monomer-dimer-trans-path}, we observe that during the transition $(\alpha\mapsto\beta)$ within the canonical path $\gamma^{x, y}$, the edges in $B^{h}$ are flipped, while those in $B^{g}$ (except for $g$) remain unchanged.
Formally, this is described by the following condition:
  \begin{align}\label{eq:aorb}
    \alpha \cup \beta = x \triangle {B^{h}} = y \triangle {B^{g}},
  \end{align}
  where we slightly abuse the notation by interpreting the configurations $\alpha,\beta,x,y\in\{0,1\}^E$ as subsets of edges $\alpha,\beta,x,y\subseteq E$,
  with the corresponding configurations as indicator vectors.
  Therefore, the union $\alpha\cup\beta$ produces the same set of edges as the symmetric differences $x\triangle {B^{h}} = y \triangle {B^{g}}$.
  
  Note that $\alpha \cup \beta$ is no longer a matching in $G$, as both $h$ and $g$ are in $\alpha \cup \beta$. 
  To resolve this issue, we introduces a new graph $\wh{G}=(\wh{V},\wh{E})$.
  The graph $\wh{G}=(\wh{V},\wh{E})$ is constructed from $G$ by first removing every edge incident to $w$ except for $g$ and $h$, and then merging the two edges $g$ and $h$ into a single edge $h^{\star}$. This induces a natural mapping $\textsf{proj}$ that maps a configuration $x \in \{0,1\}^E$ with $x_h = x_g$ to a new configuration $\hat{x} \in \{0,1\}^{\wh{E}}$ as follows:
	\begin{align*}
	\forall e \in \hat{E},\quad	\textsf{proj}(x)_e \triangleq  \begin{cases}
			x_e,& \text{if $e\ne h^{\star}$}, \\
			x_h=x_g,& \text{if $e=h^{\star}$}.
		\end{cases}
	\end{align*}
    It is obvious that if $x_h = x_g=1$ and $g,h$ are the only edges that violate the constraint of $x$ being a matching, then $\textsf{proj}(x)$ is a matching in $\wh{G}$. 
	Let $\wh{\mu}$ be the monomer-dimer distribution on $\wh{G}$ with the same edge weight $\lambda$. We use the analogous natation $\wh{\+{C}}_{h^{\star}}$ to denote the local flipping coupling (as defined in \Cref{def:new-coupling-2}) with respect to the monomer-dimer distribution $\wh{\mu}$ at the edge $h^\star$.
	%We denote the support of $\wh{\+{C}}_{h^{\star}}$ by $\Omega(\wh{\+{C}}_{h^{\star}})$. %while $\Omega(\+{C}_h)$ remains to be the support of $\+C_h$ w.r.t. $\mu$.

	For $(x,y)\in\Omega(\+{C}_h)$ such that $(\alpha \mapsto \beta) \in \gamma^{x,y}$, if the disagreement set  $B=x\oplus y$ forms a cycle, let $e$ be the unique edge in $B^{h}$ that is incident to $w^*$. We define a \emph{measure-preserved flow encoding} that maps from $\Omega(\+{C}_h)$ to $\Omega({\wh{\+{C}}_{h^{\star}}})$ as follows: %recall $B=B(x,y)$ is the disagreement between $x$ and $y$,
% \begin{align} \label{eq:def-enc}
%   \textsf{enc}(x,y) \triangleq  \begin{cases}
%   \tp{\textsf{proj}(x\oplus\*1_{B^{g}}), \textsf{proj}(x\oplus\*1_{B^{h}})},& \text{if $B$ is a path;} \\
%   \tp{\textsf{proj}(x\oplus\*1_{B^{g}}\oplus\*1_{e}), \textsf{proj}(x\oplus\*1_{B^{h}})},& \text{if $B$ is an even-length cycle,} 
%   \end{cases}
% \end{align}
\begin{align} \label{eq:def-enc}
  \textsf{enc}(x,y) \triangleq  \begin{cases}
  \tp{\textsf{proj}(x\triangle{B^{g}}), \textsf{proj}(x\triangle{B^{h}})},& \text{if $B$ is a path;} \\
  \tp{\textsf{proj}(x\triangle{B^{g}}\triangle{\{e\}}), \textsf{proj}(x\triangle{B^{h}})},& \text{if $B$ is an even-length cycle;} 
  \end{cases}
\end{align}
%where $\*1_e \in \{0,1\}^E$ denotes the indicator vector that takes value 1 only at edge $e$. 
which is equivalent to the following by~\eqref{eq:aorb}:
%$\textsf{enc}(x,y)= \tp{\textsf{proj}(y\triangle{B^{h}}), \textsf{proj}(y\triangle{B^{g}})}$ if $B$ is a path, and $\textsf{enc}(x,y)= \tp{\textsf{proj}(y\triangle{B^{h}}\triangle{\{e\}}), \textsf{proj}(y\triangle{B^{g}})}$.
\begin{align*}
  \textsf{enc}(x,y) \triangleq  \begin{cases}
  \tp{\textsf{proj}(y\triangle{B^{h}}), \textsf{proj}(y\triangle{B^{g}})},& \text{if $B$ is a path;} \\
  \tp{\textsf{proj}(y\triangle{B^{h}}\triangle{\{e\}}), \textsf{proj}(y\triangle{B^{g}})},& \text{if $B$ is an even-length cycle.} 
  \end{cases}
\end{align*}

The encoding defined here is inspired by the injective mapping introduced in the original canonical path method by Jerrum and Sinclair~\cite{jerrum1989approximating}.
Specifically, our encoding function maps every $(x,y)\in\Omega(\+{C}_h)$ with $(\alpha\mapsto\beta)\in\gamma^{x,y}$ injectively to an $(\wh{x},\wh{y})\in\Omega({\wh{\+{C}}_{h^{\star}}})$ with $\wh{y}=\textsf{proj}(\alpha \cup \beta)$, and approximately preserves the relevant measure between the two couplings $\+{C}_h$ and ${\wh{\+{C}}_{h^{\star}}}$, as its name suggests. 
This is formalized by the following lemma.
\begin{lemma}\label{lem:enc}
  The function $\textsf{enc}$ defined in \eqref{eq:def-enc} is an injection from $\{(x,y)\in\Omega(\+{C}_h)\mid(\alpha\mapsto\beta)\in\gamma^{x,y}\}$ to $\{(\wh{x},\wh{y})\in\Omega(\wh{\+{C}}_{h^{\star}})\mid\wh{y}=\textsf{proj}(\alpha\cup\beta)\}$. 
  Furthermore, for any $(x,y)\in\Omega(\+{C}_h)$ such that $(\alpha\mapsto\beta)\in\gamma^{x,y}$, 
  \begin{align*}
    \mu_h(0)\mu_h(1)\cdot\Pr[(X,Y)\sim\+{C}_h]{(X,Y) = (x,y)} \le \eta\cdot\wh{\mu}_{h^{\star}}(0)\wh{\mu}_{h^{\star}}(1)\cdot \Pr[(\wh{X},\wh{Y}) \sim\wh{\+C}_{h^{\star}}]{(\wh{X},\wh{Y}) = \mathsf{enc}(x,y)},
  \end{align*}
  where $\eta\triangleq \tp{\wh{Z}/Z}^2\cdot\ol{\lambda}^2$, and $Z$ (resp.~$\wh{Z}$) denotes the partition function of $\mu$ (resp.~$\wh{\mu}$).
  %We write $\+C_h{(x,y)}$ for $\Pr[(X,Y)\sim \+C_e]{(X,Y)=(x,y)}$ in short and $\hat{\+C}_{h^\star}$ is similar.
\end{lemma}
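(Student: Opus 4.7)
The plan is to prove the lemma in three steps: verify well-definedness of $\textsf{enc}$, establish injectivity by describing an explicit inverse, and derive the measure bound via \Cref{lem:explicit-formula} combined with the structure of monomer-dimer partition functions.

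\emph{Step 1 (well-definedness).} At the transition $(\alpha \mapsto \beta)$ within $\gamma^{x,y}$, the canonical-path construction forces $x_h = 0$ and $x_g = 1$. Hence $(x \triangle B^g)_g = 0 = (x \triangle B^g)_h$ and $(x \triangle B^h)_g = 1 = (x \triangle B^h)_h$, so both projections are well-defined. By \eqref{eq:aorb}, $\wh{y} = \textsf{proj}(x \triangle B^h) = \textsf{proj}(\alpha \cup \beta)$ with $\wh{y}_{h^\star} = 1$, and similarly $\wh{x}_{h^\star} = 0$. In the cycle case, $e$ is incident to $w^\star \ne w$, so $e \notin \{g, h\}$ and the extra $\{e\}$-flip preserves the projection criterion. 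A direct computation gives $\wh{x} \oplus \wh{y} = \textsf{proj}(B)$ in the path case and $\textsf{proj}(B \setminus \{e\})$ in the cycle case; both are paths in $\wh{G}$ containing $h^\star$, confirming $(\wh{x}, \wh{y}) \in \Omega(\wh{\+C}_{h^\star})$.

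\emph{Step 2 (injectivity).} Given $(\wh{x}, \wh{y})$ in the codomain, compute $\wh{B} = \wh{x} \oplus \wh{y}$. Unprojecting $h^\star$ back into $\{g, h\}$ separated by vertex $w$ yields a path in $G$. The ordering conventions of \Cref{def:monomer-dimer-trans-path} (``$w^\star$ is the larger endpoint'' in the path case versus ``$w^\star$ is the largest vertex of the cycle'' in the cycle case) uniquely determine whether the original $B$ was a path or a cycle; in the latter case, the closing edge $e$ is uniquely identified as the edge from $w^\star$ (the largest vertex of $V(B)$) to the other endpoint of $\wh{B}$. Once $B$ and its split $B^h, B^g$ at $w$ are recovered, \eqref{eq:aorb} determines $x = (\alpha \cup \beta) \triangle B^h$ and $y = (\alpha \cup \beta) \triangle B^g$ uniquely.

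\emph{Step 3 (measure bound).} Applying \Cref{lem:explicit-formula} to both $\+C_h$ and $\wh{\+C}_{h^\star}$ reduces the inequality to
\[
\mu_{{\overline{\partial}} B}(x_{{\overline{\partial}} B}) \cdot \mu(y) \le \eta \cdot \wh{\mu}_{{\overline{\partial}} \wh{B}}(\wh{x}_{{\overline{\partial}} \wh{B}}) \cdot \wh{\mu}(\wh{y}).
\]
Writing $\mu(y) = \lambda^{|y|}/Z$ and $\wh{\mu}(\wh{y}) = \lambda^{|\wh{y}|}/\wh{Z}$, and expressing each marginal as $\lambda^{|x_{{\overline{\partial}} B}|}$ (resp.\ $\lambda^{|\wh{x}_{{\overline{\partial}} \wh{B}}|}$) times the partition function of the subgraph over which the configuration is still free, we use the key observation that $V_B$ contains $w$, so removing $V_B$ from $G$ deletes $w$ and all its incident edges---exactly the modification used to construct $\wh{G}$. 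Together with the fact that $x$ and $\wh{x}$ agree on boundary edges outside $B$ (these edges lie outside $B$ and hence are not flipped by any operation in $\textsf{enc}$), the free-subgraph partition functions coincide and cancel. The remaining ratio equals $\lambda^k \cdot (\wh{Z}/Z)^2$ where $k = (|x_{{\overline{\partial}} B}| + |y|) - (|\wh{x}_{{\overline{\partial}} \wh{B}}| + |\wh{y}|)$. A parity analysis using the alternation of $B^h$ (starting at $h \in y$) and $B^g$ (starting at $g \in x$), combined with $|\wh{y}| = |\alpha|$, yields $k = 1$ in the path case and $k \in \{0, 2\}$ in the cycle case (according to the common parity of $|B^h|, |B^g|$). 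Hence $\lambda^k \le \ol{\lambda}^2$, giving the bound with $\eta = (\wh{Z}/Z)^2 \ol{\lambda}^2$.

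\emph{Main obstacle.} The main technical difficulty is the parity-sensitive combinatorial counting in Step 3: one must carefully split into the path/cycle dichotomy and (in the cycle case) the two parity sub-cases to verify $k \le 2$ uniformly across all configurations. The ordering-based disambiguation in Step 2 is also delicate, relying on the specific ``largest vertex'' convention of \Cref{def:monomer-dimer-trans-path} to rule out spurious collisions between the path and cycle interpretations of a given $\wh{B}$.
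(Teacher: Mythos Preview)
Your proposal is correct and follows essentially the same three-part structure as the paper: codomain verification, injectivity via the vertex-ordering convention of \Cref{def:monomer-dimer-trans-path}, and the measure comparison through \Cref{lem:explicit-formula}. One remark on Step~3: the paper's computation is shorter than your parity analysis --- after observing that $x,\wh{x}$ (and $y,\wh{y}$) agree on $E\setminus\ol{\partial}B=\wh{E}\setminus\ol{\partial}\wh{B}$, the ratio collapses to $(\wh{Z}/Z)^2\lambda^{|B|-|\wh{B}|}$ directly, using only that $\|x_B\|_1+\|y_B\|_1=|B|$ and likewise for $\wh{B}$; this gives $k=|B|-|\wh{B}|=1$ in the path case and $k=2$ always in the cycle case (your ``$k\in\{0,2\}$ according to parity'' is not quite right, though harmless for the final bound since $\lambda^k\le\ol{\lambda}^2$ for every $k\in\{0,1,2\}$).
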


As a direct corollary of \Cref{lem:enc}, we can rewrite the expectation on the left-hand side of \eqref{eq:powered-cong} in the new probability space of $\wh{\+C}_{h^\star}$, replacing the indicator for the non-trivial event $(\alpha\mapsto \beta) \in \gamma^{\set{X,Y}}$ with the indicator for the simpler event  $\wh{Y} = \textsf{proj}(\alpha \cup \beta)$.
We formalize this as follows:
\begin{corollary} \label{cor:enc}
  Let $\eta$ be the factor defined in \Cref{lem:enc}. Then, it holds that
  \begin{align*} 
    \mu_h(0)\mu_h(1)&\cdot\E[(X,Y)\sim {\+C}_h]{\abs{X\oplus Y}^{p} \cdot \*{1}[(\alpha\mapsto\beta)\in\gamma^{\set{X,Y}}]} \\
    &\leq \eta \cdot \wh{\mu}_{h^{\star}}(0)\wh{\mu}_{h^{\star}}(1) \E[(\wh{X},\wh{Y})\sim\wh{\+{C}}_{h^{\star}}]{\tp{\abs{\wh{X}\oplus\wh{Y}} + 2}^{p} \cdot \*{1}[\wh{Y}=\mathsf{proj}(\alpha\cup\beta)]}.
  \end{align*}
\end{corollary}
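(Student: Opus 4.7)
The plan is to prove \Cref{cor:enc} by expanding both expectations as explicit sums, applying the injection and measure comparison of \Cref{lem:enc} term by term, and closing with a pointwise combinatorial bound $|x\oplus y|\le|\wh x\oplus\wh y|+2$ on the relevant support.

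First I would expand the left-hand side as
\[
\mu_h(0)\mu_h(1)\sum_{(x,y)\in\Omega(\+C_h)}\Pr[(X,Y)\sim\+C_h]{(X,Y)=(x,y)}\cdot|x\oplus y|^p\cdot\*1[(\alpha\mapsto\beta)\in\gamma^{\set{x,y}}].
\]
Since $\+C_h$ is supported on pairs with $x_h=0,y_h=1$ and the exchange transition of interest satisfies $\alpha_h=0,\beta_h=1$, this transition can only occur in the forward path $\gamma^{x,y}$ and never in $\gamma^{y,x}$, so the indicator reduces to $\*1[(\alpha\mapsto\beta)\in\gamma^{x,y}]$ --- exactly the regime where \Cref{lem:enc} applies. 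For each such $(x,y)$, setting $(\wh x,\wh y)=\textsf{enc}(x,y)$, the lemma gives $\wh y=\textsf{proj}(\alpha\cup\beta)$ and bounds its $\+C_h$-probability by $\eta\cdot\wh\mu_{h^\star}(0)\wh\mu_{h^\star}(1)\Pr[(\wh X,\wh Y)\sim\wh{\+C}_{h^\star}]{(\wh X,\wh Y)=(\wh x,\wh y)}$.

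Next I would establish the pointwise inequality $|x\oplus y|\le|\wh x\oplus\wh y|+2$ by a case analysis on $B=x\oplus y$. From the identity $\alpha\cup\beta=x\triangle B^h=y\triangle B^g$ in~\eqref{eq:aorb}, one reads off $x_g=1,x_h=0$ and $y_g=0,y_h=1$, so $\textsf{proj}$ is well-defined on both $x\triangle B^g$ and $x\triangle B^h$ (respectively $x\triangle B^g\triangle\{e\}$ in the cycle case) and produces matchings of $\wh G$ that disagree on $h^\star$. In the path case, the only effect of the encoding is to merge the two edges $g,h\in B$ incident to the shared vertex $w$ into the single edge $h^\star$, yielding $|\wh x\oplus\wh y|=|B|-1$. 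In the even-cycle case, the additional $\triangle\{e\}$ in~\eqref{eq:def-enc} further removes the unique edge $e\in B^h$ adjacent to the maximum vertex $w^\star$ from the symmetric difference, giving $|\wh x\oplus\wh y|=|B|-2$. In either case $|x\oplus y|^p\le(|\wh x\oplus\wh y|+2)^p$.

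Combining these ingredients and exploiting the injectivity of $\textsf{enc}$ into $\{(\wh x,\wh y)\in\Omega(\wh{\+C}_{h^\star}):\wh y=\textsf{proj}(\alpha\cup\beta)\}$, the sum over $(x,y)$ is bounded by the sum over its image, which is in turn dominated by the full expectation over $\wh{\+C}_{h^\star}$ weighted by $\*1[\wh Y=\textsf{proj}(\alpha\cup\beta)]$; this yields the claimed inequality. I expect the main subtlety to lie in the cycle case of the combinatorial step: one must verify that $\triangle\{e\}$ removes exactly one edge from the symmetric difference (using $e\neq g,h$, which follows because $e$ is incident to $w^\star\neq w$) so that after $\textsf{proj}$ identifies $g,h$ into $h^\star$ the net disagreement count is precisely $|B|-2$; everything else --- the indicator collapse and the pointwise measure comparison --- is immediate from the setup and \Cref{lem:enc}.
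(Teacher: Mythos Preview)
Your proposal is correct and follows essentially the same approach as the paper's proof: collapse the indicator $\gamma^{\{X,Y\}}$ to $\gamma^{X,Y}$ using $x_h=0,y_h=1$, expand as a sum, apply the measure comparison of \Cref{lem:enc} termwise, and then use injectivity of $\textsf{enc}$ together with $|x\oplus y|\le|\wh x\oplus\wh y|+2$ to pass to the $\wh{\+C}_{h^\star}$-expectation. The paper is terser about the combinatorial bound $|B|\le|\wh B|+2$ in the proof of the corollary itself (it is implicit from the description of $\wh B$ in Property~1 of \Cref{lem:enc}), whereas you spell out the path and cycle cases explicitly; otherwise the arguments coincide.
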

The proof of \Cref{cor:enc} is a straightforward calculation, which we will provide later. Now, assuming \Cref{cor:enc}, we can complete the proof of \Cref{lem:powered-cong} using the same argument as before.
By applying the same reasoning  that led to \eqref{eq:bound-for-expectation-aux-1},  there exists a constant $C_p > 0$ such that
\begin{align*}
  \E[(\wh{X},\wh{Y})\sim\wh{\+{C}}_{h^{\star}}]{\tp{\abs{\wh{X}\oplus\wh{Y}} + 2}^{p} \cdot \*{1}[\wh{Y}=\mathsf{proj}(\alpha\cup\beta)]} \leq C_p \ol{\lambda}^p\Delta^p \wh{\mu}^{h^\star\gets 1}(\mathsf{proj}(\alpha \cup \beta)).
\end{align*}
Recall that $\eta = \tp{\wh{Z}/Z}^2\cdot\ol{\lambda}^2$ and ${\wh{Z}}\le {Z}$.
We can conclude the proof by noting:
\begin{align*}
  \eta \cdot \wh{\mu}_{h^{\star}}(0)\wh{\mu}_{h^{\star}}(1) \wh{\mu}^{h^\star\gets 1}(\textsf{proj}(\alpha \cup \beta))
  \leq  \ol{\lambda}^2 \cdot \frac{\hat{Z}}{Z} \cdot\wh{\mu}(\textsf{proj}(\alpha \cup \beta))
  = \ol{\lambda}^2  \cdot\mu(\alpha)
  =\ol{\lambda}^2 m \cdot \mu(\alpha) Q(\alpha, \beta),
\end{align*}
where the first equality follows from the fact that $\abs{\-{proj}(\alpha \cup \beta)} = \abs{\alpha}=\abs{\beta}$; and in the last equality, we use the fact that $m \cdot \mu(\alpha)Q(\alpha, \beta) = \min\set{\mu(\alpha),\mu(\beta)} = \mu(\alpha)$, since $\abs{\alpha} = \abs{\beta}$.
%}
\end{proof}

\begin{proof}[Proof of \Cref{cor:enc} assuming~\Cref{lem:enc}]
Assuming \Cref{lem:enc}, we now prove \Cref{cor:enc} using the bound on the $p$-th moment of the one-sided discrepancy (\Cref{lem:one-sided-CI}). Since $(\alpha \mapsto \beta) \in \gamma^{x,y}$ implies that $x_h = 0$ and $y_h = 1$  (as each disagreeing edge between $x$ and $y$ is flipped exactly once), it follows that
\begin{align*}
  \mu_h(0)\mu_h(1)&\cdot\E[(X,Y)\sim {\+C}_h]{\abs{X\oplus Y}^{p} \cdot \*{1}[(\alpha\mapsto\beta)\in\gamma^{\set{X,Y}}]} \\
    &= \mu_h(0)\mu_h(1)\cdot\E[(X,Y)\sim {\+C}_h]{\abs{X\oplus Y}^{p} \cdot \*{1}[(\alpha\mapsto\beta)\in\gamma^{X,Y}]}.
\end{align*}
By \Cref{lem:enc}, we have
\begin{align} \label{eq:congestion-1}
    \nonumber
    \mu_h(0) \mu_h(1) & \E[(X,Y)\sim {\+C}_h]{\abs{X\oplus Y}^{p} \cdot \*{1}[(\alpha\mapsto\beta)\in\gamma^{X,Y}]} \\
    \nonumber &= \sum_{(x,y)\in\Omega(\+{C}_h) \atop (\alpha\mapsto\beta)\in\gamma^{x,y}} \abs{x\oplus y}^{p}\cdot \mu_h(0) \mu_h(1) \cdot\Pr[(X,Y) \sim \+{C}_h]{(X,Y) = (x,y)}\\
    &\le \eta\cdot\wh{\mu}_{h^{\star}}(0)\wh{\mu}_{h^{\star}}(1)\cdot \sum_{(x,y)\in\Omega(\+{C}_h) \atop (\alpha\mapsto\beta)\in\gamma^{x,y}} \abs{x\oplus y}^{p}\cdot \Pr[(\wh{X},\wh{Y}) \sim \wh{\+C}_{h^{\star}}]{(\wh{X}, \wh{Y}) = \textsf{enc}(x,y)}.
\end{align}
Since $\textsf{enc}$ is an injection (by \Cref{lem:enc}),
 the summation over all configuration pairs $(x,y) \in \Omega(\+C_h)$ with $(\alpha \mapsto \beta) \in \gamma^{x,y}$ can be bounded by summing over all $(\hat{x},\hat{y}) \in \Omega(\hat{\+C}_{h^\star})$ with $\hat{y} = \textsf{proj}(\alpha\cup\beta)$. Specifically, 
\begin{align}\label{eq:congestion-2}
  \nonumber
  \sum_{(x,y)\in\Omega(\+{C}_h) \atop (\alpha\mapsto\beta)\in\gamma^{x,y}} & \abs{x\oplus y}^{p}\cdot\Pr[(\wh{X},\wh{Y})\sim \wh{\+C}_{h^{\star}}]{(\wh{X}, \wh{Y}) = \textsf{enc}(x,y)} \\
  &\le \sum_{(\wh{x},\wh{y})\in\Omega(\wh{\+{C}}_{h^{\star}}) \atop \wh{y}=\textsf{proj}(\alpha\cup\beta)}\tp{\abs{\wh{x}\oplus\wh{y}} + 2}^{p}\cdot\Pr[(\wh{X},\wh{Y}) \sim \wh{\+C}_{h^{\star}}]{(\wh{X}, \wh{Y}) = (\wh{x},\wh{y})}.
\end{align}
\Cref{cor:enc} follows immediately from~\eqref{eq:congestion-1} and~\eqref{eq:congestion-2}.
\end{proof}

Next, we prove \Cref{lem:enc}, ensuring the properties of the measure-preserving flow encoding.
\begin{proof}[Proof of~\Cref{lem:enc}]
Fix any transition $\alpha \mapsto \beta$.
For each $(x,y)\in\Omega(\+{C}_h)$ such that $(\alpha\mapsto\beta)\in\gamma^{x,y}$, we let $(\wh{x},\wh{y})=\textsf{enc}(x,y)$. 
We prove the following three properties sequentially.

\paragraph{Property 1: Codomain.}
  % Fix any transition $\alpha \mapsto \beta$.
  % For each $(x,y)\in\Omega(\+{C}_h)$ such that $(\alpha\mapsto\beta)\in\gamma^{x,y}$, let $(\wh{x},\wh{y})=\textsf{enc}(x,y)$. 
  By~\eqref{eq:aorb}, we have $x\triangle{B^{h}}=\alpha\cup\beta$. From the definition of $\textsf{enc}$ in~\eqref{eq:def-enc}, $\wh{y}=\textsf{proj}(x\triangle{B^{h}})=\textsf{proj}(\alpha\cup\beta)$. Moreover, $\wh{x}_{h^\star} = \*1[g\in x \triangle B^g] = 0$ since $x_g = \alpha_g = 1$.
  % We also prove that $\-{enc}$ is indeed a mapping whose codomain is in $\Omega(\wh{\+C}_{h^{\star}})$ by showing that the disagreement $\hat{B}\triangleq \set{f \in \wh{E} \mid \wh{x}_f \ne \wh{y}_f}$ is always a path. This can be obtained through direct calculation. 
  To verify that  $(\wh{x},\wh{y}) \in \Omega(\wh{\+C}_{h^\star})$, it remains to ensure $\wh{B} \triangleq \wh{x} \oplus \wh{y}$ forms a path or an even-length cycle containing $h^\star$. 
  If $B=x\oplus y$ is a path, then $\hat{B}=B\cup\set{h^\star}\setminus \set{g,h}$, and if $B$ is a cycle, then $\hat{B}=B\cup\set{h^\star}\setminus \set{g,h,e}$, with $e,g,h$ as given in~\eqref{eq:def-enc}.  In both cases, since $h^\star$ is the merge of $g$ and $h$, we have $\wh{B}$ forms a path. 
% \begin{enumerate}
%   \item If $B$ is a path, 
%   $\hat{B}=\wh{x}\oplus \wh{y}=\-{proj}\tp{(x\triangle B^g)\triangle(x\triangle B^h)}=B\cup\set{h^\star}\setminus \set{g,h}$. Recall that $h^\star$ is obtained by merging edges $g$ and $h$.
%   Hence, the disagreeing edges $\wh{B}=B\cup\set{h^\star}\setminus \set{g,h}$ forms a path. 
%   \item If $B$ is a cycle,
%   $\hat{B}=\wh{x}\oplus \wh{y}=\-{proj}\tp{(x\triangle B^g)\triangle(x\triangle B^h\triangle \set{e})}=B\cup\set{h^\star}\setminus \set{g,h,e}$.
%     Hence, by a similar argument, the disagreeing edges $\hat{B}$ also forms a path.
% \end{enumerate}
	
\paragraph{Property 2: Injection.}
To prove that $\textsf{enc}$ (with respect to $\alpha \mapsto \beta$) is an injection, it is sufficient to show that the preimage $(x, y)$ can be uniquely recovered from the image $(\wh{x}, \wh{y})$ given $\alpha \mapsto \beta$.
  % Note that the encoding function $\textsf{enc}$ is defined with the exchange transition $\alpha \mapsto \beta$ fixed. 
%  To prove $\-{enc}$ is injective, it is sufficient to show that the preimage of $(\hat{x},\hat{y})$ is unique. 
Note that if the disagreement set $B=x\oplus y$ can be recovered, then $x=(\alpha\cup\beta)\triangle B^h$ and $y=(\alpha\cup\beta)\triangle B^g$ are determined based on the identity in~\eqref{eq:aorb}. %Therefore, it is sufficient to prove that the choice of $B = x \oplus y$ is unique.
  It follows from the previous property that $\wh{B}$ always forms a path. We denote the endpoints of path $\wh{B}$ by $s$ and $t$. 
  Then, the construction of $B$ has only two choices: either $B$ is a path satisfying $B = \wh{B} \cup \{g,h\} \setminus \{h^\star\}$, or $B$ is a cycle satisfying $B = \wh{B} \cup \{g,h, \{s,t\}\} \setminus \{h^\star\}$, where $\{s,t\}$ is an edge in $E$. 
  
  % However, there is a slightly delicate issue here as to how to distinguish whether $B$ is a path or a cycle. We need some notation preparations.
  To distinguish between these cases, let $h^\star = \{w_g,w_h\}$, where $w_g$ and $w_h$ are the vertices incident to edges $g$ and $h$, respectively. 
  The edge $h^\star$ divides $\wh{B}$ into two subpaths: $\wh{B}^g$, from $s$ to $w_g$ (with $w_g=s$ if $\wh{B}^g=\emptyset$) and $\wh{B}^h$, from $w_h$ to $t$ (with $w_h=t$ if $\wh{B}^h=\emptyset$).
  We claim the distinction between the path case and cycle case of $B$ follows from the order between $s$ and $t$  (in the total order $(V,<)$ used in~\Cref{def:monomer-dimer-trans-path}):
    \begin{enumerate}
      \item If $s<t$, then $B$ forms a path, and thus $B=\wh{B}\cup \set{g,h}\setminus \set{h^\star}$ ;
      \item If $s>t$, then $B$  forms a cycle, and thus $=\wh{B}\cup\set{g,h,\set{s,t}}\setminus \set{h^\star}$. 
    \end{enumerate}
   To see this is correct, note that~\Cref{def:monomer-dimer-trans-path} specifies a ``flipping order'' for edges in $B= x \oplus y$. 
       When $B$ is a path with endpoints $u > v$, the canonical path flips the edges along the path from $u$ to $v$. Here, $t$ is an endpoint of $\hat{B}^h$, implying it is also an endpoint in $B^h$, which forces $s = v < u = t$.
       When $B$ is a cycle, the flipping order ensures that the removed edge $e=\{w^*,w\}$ incident to the largest vertex $w^*$ in $B$, where $w$ is an endpoint in $\hat{B}^h$, which forces $t = w < w^* = s$.

Therefore, $B$ (and hence $(x,y)$) can be uniquely recovered and the function $\mathsf{enc}$ is injective.

\paragraph{Property 3: Approximate measure-preserving.}
   By~\Cref{lem:explicit-formula}, the joint probabilities satisfy
  \begin{align}
    \label{eq:num}\mu_h(0)\mu_h(1)\cdot\Pr[(X,Y) \sim\+C_{h}]{(X, Y) = (x,y)} &= \mu_{{\overline{\partial}} B}(x_{{\overline{\partial}} B})\cdot\mu(y) \\
    \label{eq:denom}\wh{\mu}_{h^{\star}}(0)\wh{\mu}_{h^{\star}}(1)\cdot\Pr[(\wh{X},\wh{Y}) \sim\wh{\+C}_{h^{\star}}]{(\wh{X},\wh{Y}) = (\hat{x},\hat{y})} &= \wh{\mu}_{{\overline{\partial}}\wh{B}}(\wh{x}_{{\overline{\partial}}\wh{B}})\cdot\wh{\mu}(\wh{y})
  \end{align}
  Since $x$ and $\hat{x}$ (resp.~$y$ and $\hat{y}$) agree  over $E \setminus \ol{\partial} B = \hat{E} \setminus \ol{\partial} \hat{B}$, we have
  $$
\mu^{x_{{\overline{\partial}} B}}_{E\setminus{\overline{\partial}} B}(x_{E\setminus{\overline{\partial}} B})=\wh{\mu}^{\wh{x}_{{\overline{\partial}}\wh{B}}}_{\wh{E}\setminus{\overline{\partial}}\wh{B}}(\wh{x}_{\wh{E}\setminus{\overline{\partial}}\wh{B}}).$$
Therefore, taking the ratio of~\eqref{eq:num} and~\eqref{eq:denom}, we obtain
\begin{align*}
    \frac{\text{\eqref{eq:num}}}{\text{\eqref{eq:denom}}} = \frac{\mu(x) \mu(y)}{\hat{\mu}(\hat{x}) \hat{\mu} (\hat{y})} = \tp{\frac{\hat{Z}}{Z}}^2\cdot\frac{\lambda^{\norm{x}_1}\cdot\lambda^{\norm{y}_1}}{\lambda^{\norm{\wh{x}}_1}\cdot\lambda^{\norm{\wh{y}}_1}} &=\tp{\frac{\wh{Z}}{Z}}^2\cdot\frac{\lambda^{\norm{x_{{\overline{\partial}} B}}_1+\norm{y_{{\overline{\partial}} B}}_1}}{\lambda^{\norm{\wh{x}_{{\overline{\partial}} \wh{B}}}_1+\norm{\wh{y}_{{\overline{\partial}} \wh{B}}}_1}}\\
    &= \tp{\frac{\wh{Z}}{Z}}^2\cdot \lambda^{\abs{B} - \abs{\hat{B}}}
    \le \tp{\frac{\wh{Z}}{Z}}^2 \cdot \ol{\lambda}^2.
\end{align*}
Here, the 1-norm $\norm{\cdot}_1$ counts the number of 1's in a configuration, and the final inequality follows from the construction of $B=x\oplus y$ and $\wh{B}=\wh{x}\oplus\wh{y}$, where $(\wh{x},\wh{y})=\mathsf{enc}(x,y)$. 
\end{proof}

\subsubsection{Decoupling the local flipping coupling}\label{sec:proof-decoupling}
We now prove \Cref{lem:pre-decoupling}, the decoupling lemma for the local flipping coupling.
\begin{proof}[Proof of~\Cref{lem:pre-decoupling}]
  %Recall some properties of $\Omega(\+{C}_e)$, which are the corollaries of~\Cref{lem:explicit-formula}:
  By \Cref{lem:explicit-formula}, for every edge $e \in E$, the support $\Omega(\+{C}_e)$ of the local flipping coupling $\+{C}_e$ satisfies
  \begin{align*}
    \Omega(\+{C}_e)=\set{(\sigma,\tau)|\text{$\sigma\oplus\tau$ forms a path or an even-length cycle, and }\sigma(e)=0,\tau(e)=1}.
  \end{align*}
  This implies that for any $e,f \in E$ and any $(\sigma,\tau) \in \Omega(\+{C}_e)$, if $\sigma_{f} \ne \tau_{f}$, then either $(\sigma,\tau) \in \Omega(\+C_f)$ or $(\tau,\sigma) \in \Omega(\+C_f)$.
  Thus, we can split the right-hand side of \eqref{lem:new-pre-decoupling-equ1}  as follows:
  \begin{align*}
    \sum_{e\in E}\mu_e(0)\mu_e(1) & \E[(X,Y)\sim \+C_e]{\phi(X,Y)\cdot \*1[X_f \ne Y_f]} \\
    &\leq \sum_{e \in E}\mu_e(0)\mu_e(1)\E[(X,Y) \sim \+C_e]{\phi(X,Y)\tp{\*1[(X,Y) \in \Omega(\+C_f)] + \*1[(Y,X) \in \Omega(\+C_f)]}}.
  \end{align*}
To complete the proof, it suffices to show that
  \begin{align*}
    \sum_{e \in E}\mu_e(0)\mu_e(1)\E[(X,Y) \sim \+C_e]{\phi(X,Y)\cdot \*1[(X,Y) \in \Omega(\+C_f)]}
    \le \mu_f(0)\mu_f(1) \E[(X,Y)\sim \+C_f]{\phi(X,Y)\cdot \abs{X \oplus Y}}.
  \end{align*}
  The symmetric case with $(Y,X) \in \Omega(\+C_f)$ follows by symmetry.
  
  By \Cref{lem:explicit-formula}, there is a detailed identity between $\+C_e$ and $\+C_f$  for any $e,f \in E$. 
  Specifically, for every $(\sigma, \tau) \in \Omega(\+C_e) \cap \Omega(\+C_f)$, it holds
  \begin{align} \label{eq:relation-Ce-Cf}
    \mu_e(0)\mu_e(1)\Pr[(X,Y)\sim\+C_e]{(X,Y) = (\sigma,\tau)} &= \mu_f(0)\mu_f(1) \Pr[(X,Y)\sim\+C_f]{(X,Y)=(\sigma,\tau)}
  \end{align}
  Applying this identity, we have
  \begin{align*}
    \sum_{e \in E}&\mu_e(0)\mu_e(1) \E[(X,Y) \sim \+C_e]{\phi(X,Y)\cdot \*1[(X,Y) \in \Omega(\+C_f)]} \\
    &= \sum_{e\in E} \sum_{(\sigma,\tau) \in \Omega(\+{C}_e)\cap \Omega(\+{C}_f)} \phi(\sigma,\tau)\cdot  \mu_e(0)\mu_e(1) \Pr[(X,Y)\sim \+C_e]{(X,Y) = (\sigma,\tau)} \\
    &= \sum_{e\in E} \sum_{(\sigma,\tau) \in \Omega(\+C_e) \cap \Omega(\+C_f) } \phi(\sigma,\tau)\cdot  \mu_f(0)\mu_f(1) \Pr[(X,Y)\sim \+C_f]{(X,Y) = (\sigma,\tau)} \tag{by \eqref{eq:relation-Ce-Cf}} \\
    &= \mu_f(0)\mu_f(1)  \sum_{(\sigma,\tau) \in \Omega(\+C_f) }\sum_{e\in E}\*1[(\sigma,\tau)\in \Omega(\+C_e)] \cdot \phi(\sigma,\tau)\Pr[(X,Y)\sim \+C_f]{(X,Y) = (\sigma,\tau)} \\
    &\leq \mu_f(0)\mu_f(1) \sum_{(\sigma,\tau) \in \Omega(\+C_f)} \phi(\sigma,\tau)\cdot \abs{\sigma \oplus \tau} \Pr[(X,Y)\sim \+C_f]{(X,Y) = (\sigma,\tau)} \\
    &= \mu_f(0)\mu_f(1) \E[(X,Y)\sim \+C_f]{\phi(X,Y)\cdot \abs{X \oplus Y}}.
  \end{align*}
  This completes the proof.
\end{proof}

\subsection{Congestion bounds under pinnings}\label{sec:length-and-congestion-analysis-for-conditional-distributions}
%\Cref{thm:effective-trans-flow} establishes bounds on the expected  squared length, expected congestion, and expected strong congestion of the transport flow constructed for the monomer-dimer model on a graph $G$ with distribution $\mu$. To prove \Cref{lem:local-functional-inequalities}, we must extend these results to all conditional distributions induced by $\mu$. While the self-reducibility property of the monomer-dimer model suggests a natural approach to generalize \Cref{thm:effective-trans-flow}, we cannot apply the theorem in a purely black-box way. Instead, some minor modifications are required to properly handle the conditional distributions.

We have established discrepancy and congestion bounds without pinning, as stated in \Cref{thm:effective-trans-flow}. 
To extend these bounds to the entire family $\mathfrak{Q}_{\JS}$ of Jerrum-Sinclair chains, we must address the general case that includes pinnings.
%
%Let $G=(V,E)$ be a simple undirected graph with maximum degree $\Delta$, 
%and let $\mu$ be the monomer-dimer distribution on graph $G$ with edge weight $\lambda > 0$.
Fix a subset of edges $\Lambda \subseteq E$ and a feasible pinning $\tau \in \{0,1\}^{E \setminus \Lambda}$. 
Consider the conditional distribution $\mu^\tau$ and the Jerrum-Sinclair chain $Q^\tau$ for $\mu^\tau$ as defined in \Cref{sec:app-JS-chain}.
Notably, this differs slightly from the Jerrum-Sinclair chain obtained via self-reduction under the pinning $\tau$.
To see this, define the subset $S \subseteq \Lambda$ of ``free'' edges under $\tau$ as
\begin{align*}
  S \triangleq \set{f \in \Lambda \mid \Omega(\mu^\tau_f) = \{0,1\}}.
\end{align*}
Let $\hat{G}$ be the subgraph of $G$ induced by the edge set $S$, and let $\hat{\mu}$ denote the monomer-dimer distribution on $\hat{G}$ with edge weight $\lambda$. It is easy to see that $\hat{\mu} = \mu^\tau_S$.
Now, let $\hat{Q}$ be the Jerrume-Sinclair chain for the monomer-distribution $\hat{\mu}$, which corresponds to the chain obtained via self-reduction.
Since \Cref{thm:effective-trans-flow} directly applies to $\hat{Q}$, our goal is to extend these bounds to $Q^\tau$.

% For any edge $e \in \Lambda$ with $\Omega(\mu^\tau_e) = \{0,1\}$, we need to construct a transport flow $\Gamma_e$ from  $\mu^{\tau \land e \gets 0}_\Lambda$ to  $\mu^{\tau \land e \gets 1}_\Lambda$. 
% To apply \Cref{def:monomer-dimer-trans-path}, we define the following subset $S \subseteq \Lambda$ such that

The only difference between $Q^\tau$ and $\hat{Q}$ is that $Q^\tau$ is a lazy version of $\hat{Q}$: when an edge $e\in\Lambda\setminus S$  is selected for an update in $Q^\tau$,  no change occurs, effectively introducing a lazy probability $1-\frac{|S|}{|\Lambda|}$.
%Note that for any $f \in \Lambda \setminus S$, there exists a neighboring edge $g \in E \setminus \Lambda$ such that $\tau_g = 1$ so that the value of $f$ is fixed to be $0$. Hence, the distribution $\mu^\tau_\Lambda$ is essentially the same as the distribution $\mu^{\tau}_S$. 
%Let $\hat{G}$ be the subgraph of $G$ induced by the edge set $S$. Let $\hat{\mu}$ denote the monomer-dimer distribution on $\hat{G}$ with edge weight $\lambda$. We have $\hat{\mu} = \mu^\tau_S$.
For any $e \in S$, we apply \Cref{def:monomer-dimer-trans-path} on $\hat{\mu}$ to construct the transport flow $\hat{\Gamma}_e$ from $\hat{\mu}^{e \gets 0} = \mu^{\tau \land e \gets 0}_S$ to $\hat{\mu}^{e \gets 1} = \mu^{\tau \land e \gets 1}_S$. 
Since the configuration on $E \setminus S$ remains fixed in $\mu^\tau$, we can extend $\hat{\Gamma}_e$ over $E \setminus S$ (as it is fixed in $\mu^\tau$) to obtain the transport flow $\Gamma_e$ from $\mu^{\tau \land e \gets 0}$ to $\mu^{\tau \land e \gets 1}$.

\filbreak

We now establish that  $\Gamma_e$ satisfies the following properties with respect to the chain $Q^\tau$:
\begin{enumerate}
  \item\label{lem:trans-flow-length} $O(\ol{\lambda}\Delta  \log^2(1+\ol{\lambda}))$-expected squared length;
  \item\label{lem:trans-flow-congestion}\label{lem:trans-flow-strong-congestion} $O(\ol{\lambda}^2  |\Lambda|)$-expected congestion, and
   strong $O(\ol{\lambda}^4 \Delta^2 |\Lambda|)$-expected congestion.
  % \item $O(\log(\Delta(\lambda+\frac{1}{\lambda})))$-marginal coefficient $\frac{\log(\frac{1}{\phi} - 1)}{1-2\phi}$ (defined in~\Cref{thm:log-Sob-HS-route-via-congestion});
  \end{enumerate}

Since the extension from $\hat{\Gamma}_e$ to ${\Gamma}_e$ does not alter path length, and \Cref{thm:effective-trans-flow} applies directly to $\hat{\Gamma}_e$ for $\hat{Q}$, the bound of $O(\ol{\lambda}\Delta  \log^2(1+\ol{\lambda}))$-expected squared length follows immediately. 

In contrast, $\hat{\Gamma}_e$ and ${\Gamma}_e$ may have different congestion because $Q^\tau$ is lazier than $\hat{Q}$.
To bound the congestion of $\Gamma_e$,
we apply \Cref{thm:effective-trans-flow} to $\hat{\Gamma}_e$ for $\hat{Q}$. 
For any transition $\hat{x} \mapsto \hat{y}$ in $\hat{Q}$,
\begin{align*}
  \sum_{e\in S} \hat{\mu}_e(0)\hat{\mu}_e(1) \E[\hat{\gamma} \sim \hat{\Gamma}_e]{\frac{\*1[(\hat{x} \mapsto \hat{y}) \in \hat{\gamma}]}{\ell(\hat{\gamma})}} \leq \hat{\kappa} \cdot \hat{\mu}(\hat{x})\hat{Q}(\hat{x},\hat{y}),\quad \text{where $\hat{\kappa} = O(\ol{\lambda}^2 |S|)$}.
\end{align*}
Now, consider the chain $Q^\tau$ for $\mu^\tau$. For any transition $x \mapsto y$ in $Q^\tau$ with $x \neq y$, we have:
\begin{itemize}
  \item $x_S \mapsto y_S$ is a transition in $\hat{Q}$;
  \item $\frac{|S|}{|\Lambda|}\cdot \hat{Q}(x_S,y_S) = Q^\tau(x,y)$, due to the laziness of $Q^\tau$ from selecting an $e\in\Lambda\setminus S$.
\end{itemize}
%To verify the second point, note that in each step, $Q^\tau$ picks a uniform random edge from $\Lambda$ while $\hat{Q}$ picks a uniform random edge from $S$.
%If $x \neq y$, then $Q^\tau$ must pick an edge in $S$ and two chains have the same probability to accept the candidate matching.
Thus, for any transition $(x \mapsto y)$ with $x \neq y$,
%(if $x = y$, then the congestion is 0 because the canonical path never stays in the same state), 
the expected congestion of $\Gamma_e$ satisfies
\begin{align*}
  &\sum_{e \in \Lambda:0,1 \in \Omega(\mu^\tau_e)} \mu^\tau_e(0)\mu^\tau_e(1) \E[\gamma \sim \Gamma_e]{\frac{\*1[(x \mapsto y) \in \gamma]}{\ell(\gamma)}} = \sum_{e \in S} \hat{\mu}_e(0)\hat{\mu}_e(1) \E[\gamma \sim {\Gamma}_e]{\frac{\*1[(x \mapsto y) \in {\gamma}]}{\ell({\gamma})}} \\
  =\,& \sum_{e \in S} \hat{\mu}_e(0)\hat{\mu}_e(1) \E[\hat{\gamma} \sim \hat{\Gamma}_e]{\frac{\*1[(x_S \mapsto y_S) \in \hat{\gamma}]}{\ell(\hat{\gamma})}} \leq \hat{\kappa} \cdot \hat{\mu}(x_S)\hat{Q}(x_S,y_S) = \frac{|\Lambda|}{|S|} \cdot O(\ol{\lambda}^2 |S|) \cdot \mu^\tau(x) Q^\tau(x,y)\\
  =\,& O(\ol{\lambda}^2 |\Lambda|) \cdot \mu^\tau(x) Q^\tau(x,y).
\end{align*}
This proves the bound of $O(\ol{\lambda}^2  |\Lambda|)$-expected congestion for the transport flow $\Gamma_e$ in the chain $Q^\tau$.
The bound of strong $O(\ol{\lambda}^4 \Delta^2 |\Lambda|)$-expected congestion follows by applying a similar argument.

%\begin{theorem} \label{lem:gen-effective-trans-flow}
%  Let $G=(V,E)$ be a simple undirected graph with maximum degree $\Delta$, 
%  and let $\mu$ be the the monomer-dimer distribution on graph $G$ with edge weight $\lambda > 0$. Define $\ol{\lambda}=\max\{1, \lambda\}$.
%  For any $\Lambda \subseteq E$ with $|\Lambda| = k$, any feasible pinning $\tau \in \{0,1\}^{E \setminus \Lambda}$, for every edge $e\in \Lambda$ with $\Omega(\mu^\tau_e) = \{0,1\}$, the transport flow $\Gamma_e$ in~\Cref{def:monomer-dimer-trans-path} satisfies the following conditions:
%  \begin{enumerate}
%  \item\label{lem:trans-flow-length} $O(\ol{\lambda}\Delta  \log^2(1+\ol{\lambda}))$-expected squared length;
%  \item\label{lem:trans-flow-congestion} $O(\ol{\lambda}^2  k)$-expected congestion;
%  \item\label{lem:trans-flow-strong-congestion} strong $O(\ol{\lambda}^4 \Delta^2 k)$-expected congestion;
  % \item $O(\log(\Delta(\lambda+\frac{1}{\lambda})))$-marginal coefficient $\frac{\log(\frac{1}{\phi} - 1)}{1-2\phi}$ (defined in~\Cref{thm:log-Sob-HS-route-via-congestion});
 % \end{enumerate}
 % as required in \Cref{thm:HS-route-via-congestion,thm:log-Sob-HS-route-via-congestion}.
  %where $\ol{\lambda}=\max\{1, \lambda\}$. 
  %We use $\log(1+\ol{\lambda})$ instead of $\log(\ol{\lambda})$ to ensure the bound remains nonzero when $\lambda\le 1$.
%\end{theorem}

\subsection{Concavity of the Dirichlet forms and marginal bounds}\label{sec:concave-Dirichlet-forms}
We verify that the Jerrum-Sinclair chains satisfy the concavity condition stated in \Cref{def:concave-Dirichlet-forms}.

\begin{proof}[Proof of~\Cref{lem:concave-JS}]
  % Without loss of generality, we only prove for the case without pinning, i.e. $\Lambda = E$. 
  Let $\tau \in \{0,1\}^{E \setminus \Lambda}$ be a feasible pinning, and let $Q^{\tau}\in\mathfrak{Q}_{\JS}$ be the Jerrum-Sinclair chain for $\mu^\tau$.
  For any $\alpha,\beta \in \Omega(\mu^\tau)$, we claim the following inequality holds:
  \begin{align}\label{eq:assump-concave}
      \frac{1}{\abs{\Lambda}}\sum_{e \in \Lambda, c \in \{0,1\}} \*1[\alpha_e = \beta_e = c] \cdot  Q^{\tau \wedge (e \gets c)}(\alpha,\beta) \le Q^{\tau}(\alpha,\beta).
  \end{align}
  Assuming~\eqref{eq:assump-concave} holds, for any $f: \Omega \to \mathbb{R}$, 
  by a straightforward calculation, we obtain:
  \begin{align*}
      &\frac{1}{\abs{\Lambda}}\sum_{e \in \Lambda} \E[c \sim \mu^\tau_e]{\+E_{Q^{\tau \wedge (e \gets c)}}(f,f)} \\
      = &\frac{1}{2\abs{\Lambda}} \sum_{e \in \Lambda, c \in \{0,1\}} \mu^\tau_e(c) \sum_{\alpha,\beta \in \Omega(\mu^{\tau \wedge (e \gets c)})}\mu^{\tau \wedge (e \gets c)}(\alpha) \cdot Q^{\tau \wedge (e \gets c)}(\alpha,\beta) \cdot (f(\alpha)-f(\beta))^2\\
      = &\frac{1}{2\abs{\Lambda}} \sum_{\substack{e \in \Lambda, c \in \{0,1\}\\\alpha,\beta \in \Omega(\mu^\tau)}} \*1[\alpha_e = \beta_e = c] \cdot \mu^\tau(\alpha) \cdot Q^{\tau \wedge (e \gets c)}(\alpha,\beta) \cdot (f(\alpha)-f(\beta))^2\\
      \le &\frac{1}{2} \sum_{\alpha,\beta \in \Omega(\mu^\tau)} \mu^\tau (\alpha) \cdot Q^\tau(\alpha,\beta) \cdot (f(\alpha)-f(\beta))^2 = \+E_{Q^\tau}(f,f).
  \end{align*}
  It remains to verify~\eqref{eq:assump-concave}. We discuss all possible types of transition $(\alpha \mapsto \beta)$:
  \begin{enumerate}
  \item $\alpha \oplus \beta=\{f\}$ for an edge $f \in \Lambda$. In this case, for any edge $e \neq f$ and spin state $c = \alpha_e = \beta_e$, the transition happens in Markov chain $Q^{\tau \wedge (e \gets c)}$ with probability 
  \begin{align*}Q^{\tau \wedge (e \gets c)}(\alpha,\beta) &= \frac{1}{\abs{\Lambda}-1} \cdot \min \set{1,\frac{\mu^{\tau \wedge (e \gets c)}(
  \beta)}{\mu^{\tau \wedge (e \gets c)}(\alpha)}}\\
  &=\frac{1}{\abs{\Lambda}-1} \cdot \min \set{1,\frac{\mu^{\tau}(
  \beta)}{\mu^{\tau}(\alpha)}} =  \frac{\abs{\Lambda}}{\abs{\Lambda}-1} Q^{\tau}(\alpha,\beta),
  \end{align*}
  as  $Q^{\tau \wedge (e \gets c)}$ chooses an edge $\widetilde{e}$ in $\Lambda \setminus \{e\}$ uniformly at random. Hence,
  \begin{align*}
      \frac{1}{\abs{\Lambda}}\sum_{e \in \Lambda, c \in \{0,1\}} \*1[\alpha_e = \beta_e = c] \cdot  Q^{\tau \wedge (e \gets c)}(\alpha,\beta) = \frac{1}{\abs{\Lambda}} \sum_{e \in \Lambda, e\neq f} Q^{\tau \wedge (e \gets \alpha_e)}(\alpha,\beta) = Q^{\tau}(\alpha,\beta).
  \end{align*}
  \item $\alpha\oplus\beta=\{f,g\}$ and $\alpha_f = 0$ for distinct edges $f,g \in \Lambda$. In this case, for any edge $e \not\in \{f,g\}$ and $c = \alpha_e = \beta_e$, the transition happens in Markov chain $Q^{\tau \land (e \gets c)}$ with probability
  \begin{align*}
      Q^{\tau \wedge (e \gets c)}(\alpha,\beta) = \frac{1}{\abs{\Lambda}-1} \cdot \min\set{1,\frac{\mu^{\tau \wedge (e \gets c)}(\alpha)}{\mu^{\tau \wedge (e \gets c)}(\beta)}} = \frac{\abs{\Lambda}}{\abs{\Lambda}-1} Q^\tau(\alpha,\beta),
  \end{align*}
  as  $Q^{\tau \wedge (e \gets c)}$ chooses an edge $\widetilde{e}$ in $\Lambda \setminus \{e\}$ uniformly at random. Hence,
\begin{align*}
      \frac{1}{\abs{\Lambda}}\sum_{e \in \Lambda, c \in \{0,1\}} \*1[\alpha_e = \beta_e = c] \cdot  Q^{\tau \wedge (e \gets c)}(\alpha,\beta) = \frac{1}{\abs{\Lambda}} \sum_{e \in \Lambda, e\not\in \{f,g\} } Q^{\tau \wedge (e \gets \alpha_e)}(\alpha,\beta) \leq Q^{\tau}(\alpha,\beta).
  \end{align*}
  \end{enumerate}
  This concludes the proof of~\eqref{eq:assump-concave}.
\end{proof}

The following proposition verifies the marginal lower bound for monomer-dimer distributions.
\begin{proposition}\label{lem:marginal-coef}
Let $\mu$ be the monomer-dimer model on a graph $G=(V,E)$ with edge weight $\lambda>0$. 
The marginal lower bound satisfies $\phi\triangleq\min \set{\mu_e(c) \mid e \in E, c\in [q], \mu_e(c) \neq 0}=\Omega\tp{\frac{\lambda}{(1+\lambda\Delta)^2}}$.
\end{proposition}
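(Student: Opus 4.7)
The plan is to lower bound $\mu_e(0)$ and $\mu_e(1)$ separately. Note first that both are strictly positive: the empty matching certifies $\mu_e(0)>0$, and the singleton matching $\{e\}$ certifies $\mu_e(1)>0$. So the required bound must actually hold for all edges $e$ and both spin values.

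For $\mu_e(1)$, the key identity is $\mu_e(1) = \lambda \cdot \Pr_{M\sim\mu}[A_e]$, where $A_e$ denotes the event that both endpoints $u,v$ of $e=\{u,v\}$ are unsaturated in $M$. This is proved by the standard bijection $M\mapsto M\setminus\{e\}$ between matchings containing $e$ and matchings in $\Omega(A_e)$, combined with $\mu(M)/\mu(M\setminus\{e\})=\lambda$. Now decompose $\Pr[A_e]$ by conditioning: writing $p_v^H$ for the probability that vertex $v$ is unsaturated under the monomer-dimer distribution on any subgraph $H$, one has $\Pr[A_e]=p_u^G\cdot p_v^{G\setminus u}$, because conditioning on $u$ being free is exactly restricting to matchings of $G\setminus u$.

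The core step is the recursion $p_v^H\ge 1/(1+\lambda\Delta)$ for any vertex $v$ in any subgraph $H$. This follows from the chain of equalities/inequalities
\[
1-p_v^H=\sum_{w\in N_H(v)} \mu^H_{\{v,w\}}(1)=\sum_{w\in N_H(v)} \lambda \Pr_H[A_{\{v,w\}}]\le \lambda\deg_H(v)\cdot p_v^H,
\]
where the last inequality uses $A_{\{v,w\}}\subseteq\{v\text{ free}\}$. Rearranging yields $p_v^H\ge 1/(1+\lambda\deg_H(v))\ge 1/(1+\lambda\Delta)$. Applying this twice gives $\Pr[A_e]\ge 1/(1+\lambda\Delta)^2$ and hence $\mu_e(1)\ge \lambda/(1+\lambda\Delta)^2$, which is the claimed bound.

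For $\mu_e(0)$, the same bijection argument combined with $A_e\subseteq\{e\notin M\}$ gives $\mu_e(1)=\lambda\Pr[A_e]\le \lambda\mu_e(0)$. Together with $\mu_e(0)+\mu_e(1)=1$, this forces $\mu_e(0)\ge 1/(1+\lambda)$. A direct comparison verifies $1/(1+\lambda)\ge \lambda/(1+\lambda\Delta)^2$ whenever $\Delta\ge 1$ (the case $\Delta=0$ being vacuous since then there are no edges), because $(1+\lambda\Delta)^2\ge 1+2\lambda\Delta+\lambda^2\ge \lambda+\lambda^2=\lambda(1+\lambda)$. Taking the minimum over the two bounds completes the proof. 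There is no real obstacle here; the only delicate point is being careful with the conditional self-reducibility in step showing $\Pr[A_e]=p_u^G\cdot p_v^{G\setminus u}$, which is a standard consequence of the fact that conditioning a monomer-dimer distribution on a vertex being unsaturated yields the monomer-dimer distribution on the graph with that vertex deleted.
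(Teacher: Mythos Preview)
Your proof is correct and follows essentially the same approach as the paper: both reduce to the vertex-unsaturation bound $\Pr_H[v\text{ free}]\ge 1/(1+\lambda\Delta)$ via self-reducibility, and both obtain $\mu_e(0)\ge 1/(1+\lambda)$ from $\mu_e(1)\le\lambda\mu_e(0)$. The only cosmetic difference is that you bound $\mu_e(1)=\lambda\Pr[A_e]$ directly, whereas the paper bounds the ratio $\mu_e(1)/\mu_e(0)$ by factoring through $\Pr[u\text{ not saturated}]$; your route is slightly more direct and yields the marginally stronger constant $\mu_e(1)\ge \lambda/(1+\lambda\Delta)^2$ versus the paper's $\lambda/(\lambda+(1+\lambda\Delta)^2)$.
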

\begin{proof}
  For edge $e = \{u,v\}$, it is well known that $\mu_e(0) \ge \frac{1}{1+\lambda}$.
  On the other hand, we have
  \begin{align*}
    \frac{\mu_e(1)}{\mu_e(0)} &= \frac{\Pr[X \sim \mu]{X_e=1}}{\Pr[X \sim \mu]{X_e=0}} \\
    &= \frac{\Pr[X \sim \mu]{X_e=1}}{\Pr[X \sim \mu]{u\text{ is not saturated in }X}} \cdot \frac{\Pr[X \sim \mu]{u\text{ is not saturated in }X}}{\Pr[X \sim \mu]{X_e=0}} \\
    &= \lambda \Pr[X \sim \mu_{G-u}]{v\text{ is not saturated in }X} \cdot \Pr[X \sim \mu_{G-e}]{u\text{ is not saturated in }X} \\
    &\ge \frac{\lambda}{(1+\lambda\Delta)^2}
  \end{align*}
  where we use $\mu_{G-u}$(resp.~$\mu_{G-e}$) to denote the monomer-dimer model with edge weight $\lambda$ on graph $G-u$(resp.~$G-e$).
  This implies that $\mu_e(1) \ge \frac{\lambda}{\lambda + (1+\lambda\Delta)^2}$.
%  Combining these together finishes the proof.
  % Therefore $\frac{1}{\phi}-1$ is bounded by $O(\frac{(1+\lambda\Delta)^2}{\lambda})$. Since the function $y = \frac{1-2x}{\log(1/x - 1)}$ is increasing on $(0,1/2)$, we conclude that $\frac{\log(\frac{1}{\phi} - 1)}{1-2\phi}=O(\log(\Delta(\lambda+\frac{1}{\lambda})))$.
\end{proof}
\begin{remark}
  The factor $\frac{1-2\phi}{\log(\frac{1}{\phi} - 1)}=\Omega(\frac{1}{\log\frac{1+\lambda\Delta}{\lambda}})$ in the log-Sobolev bound in \Cref{thm:log-Sob-HS-route-via-congestion} vanishes as $\lambda\to 0^+$. 
  However, when $\lambda=o(\frac{1}{\Delta})$, we can prove an $O(m\log m)$ mixing time for both Glauber dynamics and the Jerrum-Sinclair chain on monomer-dimers via a simple path coupling argument  (see, e.g.,~\cite[Corollary 18]{Dyer2007MatrixNA}).
  As a result, it is sufficient to focus on the case where  $\lambda=\Omega(\frac{1}{\Delta})$. 
  In this regime, the factor is  lower bounded as $\frac{1-2\phi}{\log(\frac{1}{\phi} - 1)}=\Omega(\frac{1}{\log(\ol{\lambda}\Delta)})$, where $\ol{\lambda}\triangleq\max\{1,\lambda\}$.
\end{remark}

\section{Faster Mixing of Monomer-Dimer Glauber Dynamics}\label{sec:faster-mixing-of-monomer-dimer-glauber-dynamics}
\newcommand{\toJ}{\overset{\-{JS}}{\mapsto}}
\newcommand{\toG}{\overset{\-{GD}}{\mapsto}}
\newcommand{\PJ}{P_{\-{JS}}}
\newcommand{\PG}{P_{\-{GD}}}
In this section, we establish an upper bound on the mixing time of Glauber dynamics for the monomer-dimer model via a comparison argument.
\begin{theorem}\label{thm:main-Glauber}
  Let $G=(V,E)$ be a simple undirected graph with $n$ vertices, $m$ edges, and maximum degree~$\Delta$.
  Let $\mu$ be the monomer-dimer distribution on $G$ with edge weight $\lambda>0$, and define $\overline{\lambda} = \max\set{1,\lambda}$.
  The Glauber dynamics for $\mu$,  with transition matrix $\PG$, has Poincar\'e constant $ \gamma(\PG)$ and log-Sobolev constant $\rho(\PG)$ satisfying:
  \begin{align*}
    \gamma(\PG) = \Omega\tp{\frac{1}{\ol{\lambda}\Delta}}\cdot \gamma(P_{\mathrm{JS}}) \quad \text{and} \quad \rho(\PG) = \Omega\tp{\frac{1}{\ol{\lambda}\Delta}}\cdot\rho(P_{\mathrm{JS}}).
  \end{align*}
  As a consequence, the mixing time of the Glauber dynamics for the monomer-dimer model satisfies
  \begin{align*}
      T_{\mathrm{mix}}(\PG) &= O\left(\overline{\lambda}^{4} \Delta^2 m\cdot  \min \left\{n\log^2(1+\ol{\lambda}),\,\, \overline{\lambda} \Delta \log(\ol{\lambda} \Delta)\cdot \log n \right\}\right)=\widetilde{O}_{\lambda}(\Delta^3m).
  \end{align*}
  \end{theorem}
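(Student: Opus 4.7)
The plan is to establish \Cref{thm:main-Glauber} via a standard Dirichlet-form comparison between the Jerrum--Sinclair chain $\PJ$ and the heat-bath Glauber dynamics $\PG$. Since both chains are reversible with respect to $\mu$, it suffices to prove
\[
  \+E_{\PJ}(f,f) \;\le\; A \cdot \+E_{\PG}(f,f), \qquad A = O(\ol{\lambda}\Delta),
\]
for every $f:\Omega(\mu)\to\mathbb{R}$. By the variational characterizations of $\gamma$ and $\rho$, this immediately yields $\gamma(\PG)\ge\gamma(\PJ)/A$ and $\rho(\PG)\ge\rho(\PJ)/A$. Combining with the two constants from \Cref{thm:main} gives the claimed bounds on $\gamma(\PG)$ and $\rho(\PG)$, and the mixing time follows from the standard estimates $T_{\mathrm{mix}}(\PG)=O(\rho(\PG)^{-1}\log\log(1/\mu_{\min}))$ and $T_{\mathrm{mix}}(\PG)=O(\gamma(\PG)^{-1}\log(1/\mu_{\min}))$, valid for $\PG$ since it is PSD as a heat-bath chain, together with the crude estimate $\log(1/\mu_{\min})=O(n\log(1+\ol{\lambda}\Delta))$; taking the smaller of the two recovers the claimed minimum.

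The comparison is realized via the Diaconis--Saloff-Coste theorem, which requires routing each non-trivial JS transition $(x\mapsto y)$ through a short Glauber path $\gamma_{x,y}$. For a down or up transition of $\PJ$ (single-edge disagreement), I use the identical Glauber step of length~$1$. For an exchange transition with $y=x\cup\{e\}\setminus\{f\}$, I route through the intermediate matching $z = x\setminus\{f\}$ via the length-two path $x\to z\to y$: both legs are legitimate Glauber moves since $z$ is a matching, removing $f$ from $x$ is a valid down-resampling, and adding $e$ to $z$ is a valid up-resampling. The resulting comparison constant is
\[
  A \;=\; \max_{(u,v):\,\PG(u,v)>0}\frac{1}{\mu(u)\PG(u,v)}\sum_{(x,y):\,(u,v)\in\gamma_{x,y}}\mu(x)\,\PJ(x,y)\,\ell(\gamma_{x,y}),
\]
so the task reduces to proving $A=O(\ol{\lambda}\Delta)$.

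For the congestion analysis, fix any Glauber transition $(u,v)$; by symmetry, assume $v=u\cup\{e\}$ (the down case $v=u\setminus\{f\}$ is analogous). Exactly two types of JS transition can route through $(u,v)$: the up transition $(u\mapsto v)$ itself, contributing $\mu(u)\cdot m^{-1}\min\{1,\lambda\}\cdot 1$ to the numerator; and exchanges $(x\mapsto y)$ whose second leg is $u\to v$, for which $x=u\cup\{f\}$ with $f$ sharing an endpoint with $e$ and compatible with $u$. The number of such $f$ is at most $2(\Delta-1)$, and each contributes $\mu(u\cup\{f\})\cdot m^{-1}\cdot 2 = 2\lambda\mu(u)/m$. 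Dividing by $\mu(u)\PG(u,v)=\mu(u)\cdot m^{-1}\cdot \lambda/(1+\lambda)$ yields
\[
  A_{(u,v)} \;\le\; \frac{1+\lambda}{\lambda}\bigl(\min\{1,\lambda\}+4\lambda\Delta\bigr) \;=\; O(1) + O\bigl((1+\lambda)\Delta\bigr) \;=\; O(\ol{\lambda}\Delta),
\]
using the elementary identity $(1+\lambda)\min\{1,\lambda\}/\lambda \le 2$ to control the first term uniformly in $\lambda$.

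The main obstacle is essentially bookkeeping: carefully enumerating which JS transitions route through a fixed Glauber transition (and in which leg), and verifying that the heat-bath factor $(1+\lambda)/\lambda$ does not blow up for small $\lambda$---it is compensated precisely by the $\min\{1,\lambda\}$ factor appearing in the JS rates, so no restriction such as $\lambda=\Omega(1/\Delta)$ is needed. Once $A = O(\ol{\lambda}\Delta)$ is verified for both orientations of $(u,v)$, the substitutions from \Cref{thm:main} are routine, and taking the minimum of the Poincar\'e- and log-Sobolev-derived mixing-time bounds produces the form stated in \Cref{thm:main-Glauber}.
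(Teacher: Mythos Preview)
Your proposal is correct and essentially identical to the paper's proof: both compare Dirichlet forms via the Diaconis--Saloff-Coste path comparison, routing up/down JS moves by the identical single Glauber step and exchange moves through the length-two path that first removes the outgoing edge, then bound the congestion on a fixed Glauber edge by enumerating the (at most $2(\Delta-1)$) adjacent exchanges to obtain $A=O(\ol{\lambda}\Delta)$. The only cosmetic difference is that the paper carries out the explicit congestion calculation on an edge-removing Glauber transition while you do it on an edge-inserting one; the arithmetic and the final constant $4\Delta(\ol{\lambda}+1)$ match.
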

The Markov chain $\PG$ is similar to Jerrum-Sinclair chain $\PJ$,  but it updates at most one edge when transitioning from $X_t$ to $X_{t+1}$ according to the following rules: 
\begin{enumerate}
\item Select an edge $e\in E$ uniformly at random;
\item Set $X_{t+1}(e)$ to a random value sampled from the conditional marginal distribution $\mu^{{X_t}(N(e))}_e,$ where 
  $N(e)=\set{f\in E: f\text{ is incident to }e \text{ in }G}$.
\item  For all  edges $f\in E\setminus\set{e}$, set $X_{t+1}(f)=X_t(f)$.
\end{enumerate}
It can be verified that the Glauber dynamics defined above is ergodic and reversible with respect to the stationary distribution $\mu$. Moreover, $\PG$ is positive semi-definite.

We now prove the following key lemma that compares the Dirichlet forms of the monomer-dimer Glauber dynamics $\PG$ with those of the Jerrum-Sinclair chains $\PJ$.
\begin{lemma}\label{lem:comparing-Dirichlet}
  For any function $f:\Omega \to \mathbb{R}_{\geq 0}$, we have $\+E_{\PJ}(f,f) \le 4\Delta(\ol{\lambda}+1)\cdot \+E_{\PG}(f,f)$.
\end{lemma}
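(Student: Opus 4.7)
The plan is to split $\+E_{\PJ}(f,f)$ by the type of JS transition involved—down (single edge removal), up (single edge addition), or exchange (swapping $f\in X_t$ for an adjacent edge $e$)—and bound each type's contribution separately in terms of $\+E_{\PG}(f,f)$. The down/up types coincide with the nontrivial Glauber moves, so they are handled by a pointwise ratio comparison. The exchange type has no Glauber counterpart, and I will simulate each exchange $x\mapsto y=x\cup\{e\}\setminus\{f\}$ by a two-step Glauber path through the intermediate matching $z=x\setminus\{f\}$, together with a standard congestion count.

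For the up/down part, a short case analysis on $\PJ(x,y)=(1/m)\min\{1,\mu(y)/\mu(x)\}$ versus $\PG(x,y)=\frac{1}{m(1+\lambda)}$ or $\frac{\lambda}{m(1+\lambda)}$ depending on direction shows that the pointwise ratio is at most $1+\ol\lambda$ in every case, yielding
\begin{align*}
\+E^{\mathrm{up,down}}_{\PJ}(f,f)\;\le\;(1+\ol\lambda)\,\+E_{\PG}(f,f).
\end{align*}

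For an exchange, I will apply the Cauchy--Schwarz bound $(f(x)-f(y))^2\le 2(f(x)-f(z))^2+2(f(z)-f(y))^2$ together with the key identity $\mu(x)=\mu(y)=\lambda\mu(z)$, which implies
\begin{align*}
\mu(x)\PJ(x,y)\;=\;\frac{\mu(x)}{m}\;=\;(1+\lambda)\,\mu(x)\PG(x,z)\;=\;(1+\lambda)\,\mu(z)\PG(z,y).
\end{align*}
Thus each exchange's Dirichlet term is bounded, up to a factor of $2(1+\lambda)$, by the Glauber contributions along its two half-steps. A congestion count then controls the total multiplicity: each fixed Glauber down move $(x_0,x_0\setminus\{f_0\})$ is the first step of at most $2(\Delta-1)$ exchanges (one for each edge $e$ sharing a vertex with $f_0$ whose other endpoint is free in $x_0$), and a symmetric bound holds for the up half-step. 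Summing gives $\+E^{\mathrm{exch}}_{\PJ}(f,f)\le 4(\Delta-1)(1+\lambda)\,\+E_{\PG}(f,f)$, and adding the up/down bound yields the claimed $4\Delta(1+\ol\lambda)\,\+E_{\PG}(f,f)$.

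The only delicate step is the congestion bookkeeping: I must keep ordered versus unordered pair conventions consistent across the Glauber and JS Dirichlet sums, and use detailed balance $\mu(x)\PG(x,z)=\mu(z)\PG(z,x)$ to align the ``first-half'' count with the ``second-half'' count when relating the exchange terms to two distinct Glauber edges. Beyond that, the argument is a sequence of elementary identities with no essential obstacle.
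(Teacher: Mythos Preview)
Your proof is correct and takes essentially the same approach as the paper: both route each JS exchange $x\mapsto y$ through the two-step Glauber path $x\to z\to y$ with $z=x\setminus\{f\}$, bound the congestion on each Glauber edge by $2(\Delta-1)$, and handle the up/down moves by a direct pointwise ratio. The only cosmetic difference is that the paper packages the exchange analysis by invoking the black-box comparison-via-paths lemma \cite[Theorem~13.20]{levin2017markov}, whereas you unpack that lemma in place (Cauchy--Schwarz on the length-2 path plus the same congestion count), arriving at the identical constant $4\Delta(\ol\lambda+1)$.
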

\Cref{thm:main-Glauber} then follows from the straightforward application of \Cref{lem:comparing-Dirichlet} to the definitions of the Poincaré constant and the log-Sobolev constant.
%To prove \Cref{lem:comparing-Dirichlet}, we will use the \emph{comparison via paths} method in~\cite{levin2017markov}, with some preliminary definitions introduced first.
To prove \Cref{lem:comparing-Dirichlet}, it is sufficient to use the standard canonical path method. For more details, we refer the readers to~\cite[Section 13.4]{levin2017markov} for a thorough introduction.

%\begin{definition}[comparing paths] \label{def: comparing-flow}
%  Let $Q_1, Q_2$ be two ergodic Markov chains over the state space $\Omega$, both reversible with respect to the distribution $\mu$. The \emph{comparing paths} from $Q_1$ to $Q_2$ is a distribution $\Gamma=\set{\gamma^{X,Y}}$ of paths on the transition graph of $Q_2$, 
%  where the starting point $X$ and the endpoint $Y$ of $\gamma \sim \Gamma$ are distributed according to the joint law 
%  $$\Pr{(X,Y)=(x,y)}=\mu(x)Q_1(x\mapsto y),$$
%  which can also be written as  $X\sim \mu,Y\sim Q_1(X\mapsto \cdot)$.
%\end{definition}

Let $P_1, P_2$ be two ergodic Markov chains on the state space $\Omega$, both reversible with respect to the stationary distribution $\mu$. 
For each transition $x\mapsto y$ in $P_1$ with $P_1(x,y) > 0$, 
%$x,y \in \Omega$ such that $P_1(x,y) > 0$, 
we define a canonical path $\gamma^{x,y} = (x_0, x_1, \cdots, x_\ell)$ with length $\ell = \ell(\gamma^{x,y})$ from $x_0=x$ to $x_{\ell}=y$ using the transitions in~$P_2$, which means $P_2(x_i, x_{i+1}) > 0$ for all $i < \ell$.
%For convenience, we also represent $\gamma^{x,y}$ as the set of edges $\set{(x_i, x_{i+1})\mid 0\le i<\ell}$.

Given such a family of canonical paths, the \emph{congestion ratio} $B$ between $P_1$ and $P_2$ is defined as
\begin{align*}
  B \triangleq \max_{\alpha, \beta\in\Omega: \atop P_2(\alpha, \beta) > 0} B(\alpha,\beta),\quad \text{ where }\quad B(\alpha,\beta)\triangleq  
  %\max_{(\alpha, \beta): \atop P_2(\alpha, \beta) > 0} \tp{ 
  \frac{1}{\mu(\alpha)P_2(\alpha,\beta)} \sum_{x,y\in\Omega: \atop  (\alpha, \beta)\in\gamma^{x,y}} \mu(x) P_1(x,y) \ell(\gamma^{x,y}).
  %}.
\end{align*}

\begin{lemma}[comparing Dirichlet forms via paths \text{\cite[Theorem 13.20]{levin2017markov}}] \label{lem: comparison-via-congestion}
    If there exists a family of canonical paths with congestion ratio $B$ between $P_1$ and $P_2$, then for all functions $f:\Omega \to \=R$, it holds $$\+E_{P_1}(f,f) \leq B \cdot \+E_{P_2}(f,f).$$ %comparing paths from $\PJ$ to $\PG$ that satisfy a $B$-congestion ratio. Specifically, for any $ \alpha,\beta\in \Omega(\mu)$,
  %\begin{align*}
  %  \E[X\sim \mu,Y\sim \PJ(X\mapsto \cdot)]{\ell(\gamma^{X,Y})\*1[(\alpha\overset{\-{GD}}{\mapsto} \beta)\in \gamma^{X,Y}]}\le B\cdot \mu(\alpha)\PG(\alpha,\beta). %\quad \forall \alpha,\beta\in \Omega(\mu).
  %\end{align*}
  %Then, for any function $f:\Omega \to \mathbb{R}_{\geq 0}$, we have $\+E_{\PJ}(f,f)\le B \cdot \+E_{\PG}(f,f)$.
\end{lemma}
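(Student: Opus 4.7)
The plan is to prove this as a direct consequence of Cauchy--Schwarz and a swap of summation order, following the standard canonical path argument. I would start by expanding the Dirichlet form of $P_1$ using the definition:
\begin{align*}
\+E_{P_1}(f,f) = \frac{1}{2}\sum_{x,y \in \Omega} \mu(x) P_1(x,y) \tp{f(x)-f(y)}^2.
\end{align*}
For each transition $x\mapsto y$ with $P_1(x,y)>0$, I would use the canonical path $\gamma^{x,y} = (x_0=x,x_1,\ldots,x_\ell=y)$ to telescope $f(x)-f(y)=\sum_{i=0}^{\ell-1}(f(x_i)-f(x_{i+1}))$, then apply Cauchy--Schwarz to get
\begin{align*}
\tp{f(x)-f(y)}^2 \leq \ell(\gamma^{x,y}) \sum_{(\alpha \mapsto \beta) \in \gamma^{x,y}} \tp{f(\alpha)-f(\beta)}^2.
\end{align*}

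Next, I would substitute this inequality into the Dirichlet form and swap the order of summation, so that the outer sum is over transitions $(\alpha \mapsto \beta)$ in $P_2$:
\begin{align*}
\+E_{P_1}(f,f) \leq \frac{1}{2}\sum_{\alpha,\beta \in \Omega: P_2(\alpha,\beta)>0}\tp{f(\alpha)-f(\beta)}^2 \sum_{x,y: (\alpha \mapsto \beta)\in\gamma^{x,y}} \mu(x) P_1(x,y) \ell(\gamma^{x,y}).
\end{align*}
The inner sum is exactly $\mu(\alpha)P_2(\alpha,\beta) \cdot B(\alpha,\beta)$ by the definition of the congestion ratio, and is thus at most $B \cdot \mu(\alpha)P_2(\alpha,\beta)$. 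Substituting this bound yields
\begin{align*}
\+E_{P_1}(f,f) \leq \frac{B}{2}\sum_{\alpha,\beta \in \Omega} \mu(\alpha) P_2(\alpha,\beta) \tp{f(\alpha)-f(\beta)}^2 = B\cdot \+E_{P_2}(f,f),
\end{align*}
as desired.

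Since the argument is standard and each step is routine, I do not anticipate a substantial obstacle. The only subtlety is to ensure the definition of $B(\alpha,\beta)$ aligns with the factor $\ell(\gamma^{x,y})$ produced by Cauchy--Schwarz; this is already baked into the statement via the $\ell(\gamma^{x,y})$ appearing inside the sum defining $B(\alpha,\beta)$, so the matching is immediate. The only place one needs to be slightly careful is that the canonical paths need only be defined for transitions $(x \mapsto y)$ with $P_1(x,y)>0$, and the congestion ratio is maximized over $(\alpha,\beta)$ with $P_2(\alpha,\beta)>0$, both of which are consistent with the sums appearing above.
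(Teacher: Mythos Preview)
Your proof is correct and is exactly the standard canonical-path comparison argument; the paper does not give its own proof of this lemma but simply cites it from \cite[Theorem 13.20]{levin2017markov}. The telescoping plus Cauchy--Schwarz plus swap-of-summation you outline is precisely the textbook derivation, and your care about where $P_1(x,y)>0$ versus $P_2(\alpha,\beta)>0$ enters is on point.
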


\begin{proof}[Proof of~\Cref{lem:comparing-Dirichlet}]
To apply \Cref{lem: comparison-via-congestion} with $P_1 = \PJ$ and $P_2 = \PG$,
%For convenience, we use $ \alpha \overset{\-{JS}}{\mapsto} \beta$ and $\alpha \overset{\-{GD}}{\mapsto} \beta$ to distinguish the transitions in the chains $\PJ$ and $\PG$, respectively.
we provide a  symmetric construction of canonical paths  $\gamma^{x,y}=\gamma^{x,y}$ with congestion ratio $B\le 4\Delta(\ol{\lambda}+1)$ between $P_1$ and~$P_2$. 
Depending on the size of the difference $x\oplus y\triangleq \{e\in E\mid x_e\neq x_e\}$,
the construction proceeds in two cases, ensuring that $\ell(\gamma^{x,y})\le 2$ in both cases.
\begin{enumerate}
  \item When $\abs{x\oplus y}=1$, $x$ and $y$ differ at exactly one edge. Thus, $(x, y)$ is a legal transition in $\PG$.
  We define $\gamma^{x,y}=\tp{x,y}$, which forms a path of length 1 in the transition graph of $\PG$.
  %Recall that $(x_0,x_1,\ldots,x_{\ell})$ denotes a sequence of states from the state space $\Omega$.
  \item When $\abs{x\oplus y}=2$, we have $x_g=0,y_g=1$ and $x_h=1,y_h=0$ for exactly two edges $g,h$, and $x=y$ elsewhere. 
  Then, $(x,z)$ and $(z,y)$ are legal transitions in $\PG$, where $z=x\triangle\{h\}$. We define $\gamma^{x,y}=\tp{x, z, y}$,  which forms a path of length 2 in the transition graph of $\PG$,  
  %Recall that $\*1_e \in \{0,1\}^E$ denotes the indicator vector that takes value 1 only at edge $e$.
\end{enumerate}
Next, observe the following simple properties of the canonical paths from the construction above.
%\begin{claim}\label{clm: properties-of-ck}
  Suppose $(\alpha, \beta)\in \gamma^{x,y}$ is an edge-removing transition in Glauber dynamics with $\alpha_e=1$ and $\beta_e=0$. Then:
  \begin{enumerate}
    \item If $\abs{x\oplus y}=1$, then $x=\alpha, y=\beta$.
    \item If $\abs{x\oplus y}=2$ with $x_g=0,y_g=1$, $x_h=1,y_h=0$, we have $e=h$, $x=\alpha$, and $y=\beta\triangle\{g\}$. %which is obtained by applying the symmetric difference between $\beta$ and $\{g\}$. 
    Note that $g$ is incident to $h$. Since $h=e$ is fixed in the transition $(\alpha,\beta)$, the number of possible choices for $g$ is at most $2(\Delta - 1)$. %, and vise versa.
  \end{enumerate}
%\end{claim}
  %For any edge-removing transition $\alpha\toG \beta$, we have the following bound for the congestion ratio:
  For any edge-removing transition $(\alpha, \beta)$ in Glauber dynamics, we bound the congestion ratio:
  \begin{align*}
    B(\alpha, \beta) %\frac{\E[X\sim \mu,Y\sim \PJ(X\mapsto \cdot)]{\ell(\gamma^{X,Y})\*1[(\alpha\overset{\-{GD}}{\mapsto} \beta)\in \gamma^{X,Y}]}}{\mu(\alpha)\PG(\alpha,\beta)} \\
    \le & 2\tp{\frac{\mu(\alpha)\PJ(\alpha,\beta)}{\mu(\alpha)\PG(\alpha,\beta)}+\sum_{g\text{ incident to }h} \frac{\mu(\alpha)\PJ(\alpha,\beta\triangle \{g\})}{\mu(\alpha)\PG(\alpha,\beta)}} \\
    \le & 2\tp{(\lambda+1)\min\tp{1,1/\lambda}+2(\Delta-1)\cdot (\lambda+1) }\\
    \le & 4\Delta(\ol{\lambda}+1).
  \end{align*}
  By symmetry, the same bound also applies to any edge-inserting transition in Glauber dynamics. The conclusion follows by applying \Cref{lem: comparison-via-congestion}.
\end{proof}

%\section{Conclusion and Future Directions}
%In this paper, we prove an $\widetilde{O}(\Delta^2 m)$ upper bound on the mixing time of the Jerrum-Sinclair chain for the monomer-dimer model with a constant edge weight $\lambda$ on any graph with $m$ edges and maximum degree $\Delta$. This improves upon the well-known $\widetilde{O}(n^2 m)$ bound established over three decades ago in the seminal works of Jerrum and Sinclair \cite{jerrum1989approximating,sinclair1992improved}.

\section{Concurrent Work and Future Directions}
We note that a concurrent work \cite{chen2025rapid}, which shares a subset of authors with the current paper, proves an $\widetilde{O}(\sqrt{\Delta}nm)$ mixing time for the Glauber dynamics for uniform sampling of matchings (with edge weight $\lambda=1$), among other results, by utilizing the trickle down method.

An important open question remains: What is the sharp bound for the mixing time of the Jerrum-Sinclair chain on matchings? A potential milestone to aim for is a bound of $\widetilde{O}(\sqrt{\Delta}m)$, where the $\Theta(\sqrt{\Delta})$ factor is motivated by the total influence bound for matchings~\cite{bayati2007simple}.

%An open problem remains: can the dependence on the maximum degree $\Delta$ be further improved? Specifically, can we prove an improved bound of $\widetilde{O}_\lambda(\sqrt{\Delta} m)$ for the mixing time? The square-root dependence is suggested by the decay factor of the monomer-dimer system~\cite{bayati2007simple}.

Beyond the monomer-dimer system, the canonical path method for analyzing mixing times has been applied to a variety of foundational problems, such as the permanent, the Ising model, and the switch/flip chain for generating random regular graphs.
A promising direction for future work is to extend the new techniques developed in this paper to these problems, with the goal of achieving improved mixing times.

As a final remark, 
it will be exciting to see how the key insights of this paper can be applied to overcome some long-standing challenges in current approaches to mixing times, such as the exponential dependence of mixing time on spectral independence.

\pagebreak

\bibliographystyle{alpha}
\bibliography{refs}

\newcommand{\etalchar}[1]{$^{#1}$}
\begin{thebibliography}{HMMR05}

\bibitem[AL20]{alev2020improved}
Vedat~Levi Alev and Lap~Chi Lau.
\newblock Improved analysis of higher order random walks and applications.
\newblock In {\em STOC}, pages 1198--1211, 2020.

\bibitem[ALO24]{anari2020spectral}
Nima Anari, Kuikui Liu, and Shayan {Oveis Gharan}.
\newblock Spectral independence in high-dimensional expanders and applications
  to the hardcore model.
\newblock {\em SIAM J. Comput.}, 53(6):FOCS20--1--FOCS20--37, 2024.

\bibitem[ALOV24]{anari2019logconcave}
Nima Anari, Kuikui Liu, Shayan {Oveis Gharan}, and Cynthia Vinzant.
\newblock Log-concave polynomials {II}: {H}igh-dimensional walks and an {FPRAS}
  for counting bases of a matroid.
\newblock {\em Ann. of Math. (2)}, 199(1):259--299, 2024.

\bibitem[BBV07]{bezakova2007sampling}
Ivona Bez\'akov\'a, Nayantara Bhatnagar, and Eric Vigoda.
\newblock Sampling binary contingency tables with a greedy start.
\newblock {\em Random Structures Algorithms}, 30(1-2):168--205, 2007.

\bibitem[BBY08]{benjamini2008long}
Itai Benjamini, Noam Berger, and Ariel Yadin.
\newblock Long-range percolation mixing time.
\newblock {\em Combin. Probab. Comput.}, 17(4):487--494, 2008.

\bibitem[BGK{\etalchar{+}}07]{bayati2007simple}
Mohsen Bayati, David Gamarnik, Dimitriy Katz, Chandra Nair, and Prasad Tetali.
\newblock Simple deterministic approximation algorithms for counting matchings.
\newblock In {\em STOC}, pages 122--127, 2007.

\bibitem[BKMP05]{berger2005glauber}
Noam Berger, Claire Kenyon, Elchanan Mossel, and Yuval Peres.
\newblock Glauber dynamics on trees and hyperbolic graphs.
\newblock {\em Probab. Theory Relat. Fields}, 131(3):311--340, 2005.

\bibitem[CCC{\etalchar{+}}25]{chen2025rapid}
Xiaoyu Chen, Zejia Chen, Zongchen Chen, Yitong Yin, and Xinyuan Zhang.
\newblock Rapid mixing on random regular graphs beyond uniqueness.
\newblock In submission, 2025.

\bibitem[CCFV25]{CCFV25}
Charlie Carlson, Xiaoyu Chen, Weiming Feng, and Eric Vigoda.
\newblock Optimal mixing for randomly sampling edge colorings on trees down to
  the max degree.
\newblock In {\em SODA}, pages 5418--5433, 2025.

\bibitem[CDG05]{cooper2005sampling}
Colin Cooper, Martin Dyer, and Catherine Greenhill.
\newblock Sampling regular graphs and a peer-to-peer network.
\newblock In {\em SODA}, pages 980--988, 2005.

\bibitem[CDG{\etalchar{+}}06]{cryan2006rapidly}
Mary Cryan, Martin Dyer, Leslie~Ann Goldberg, Mark Jerrum, and Russell Martin.
\newblock Rapidly mixing {M}arkov chains for sampling contingency tables with a
  constant number of rows.
\newblock {\em SIAM J. Comput.}, 36(1):247--278, 2006.

\bibitem[CDGH19]{cooper2019flip}
Colin Cooper, Martin Dyer, Catherine Greenhill, and Andrew Handley.
\newblock The flip {M}arkov chain for connected regular graphs.
\newblock {\em Discrete Appl. Math.}, 254:56--79, 2019.

\bibitem[CDH09]{cooper2009flip}
Colin Cooper, Martin Dyer, and Andrew~J. Handley.
\newblock The flip {M}arkov chain and a randomising {P2P} protocol.
\newblock In {\em PODC}, pages 141--150, 2009.

\bibitem[CDMS08]{cryan2008random}
Mary Cryan, Martin Dyer, Haiko M\"uller, and Leen Stougie.
\newblock Random walks on the vertices of transportation polytopes with
  constant number of sources.
\newblock {\em Random Structures Algorithms}, 33(3):333--355, 2008.

\bibitem[CE22]{chen2022localization}
Yuansi Chen and Ronen Eldan.
\newblock Localization schemes: A framework for proving mixing bounds for
  {M}arkov chains.
\newblock In {\em FOCS}, pages 110--122, 2022.

\bibitem[CF11]{colin2011cover}
Colin Cooper and Alan Frieze.
\newblock The cover time of random geometric graphs.
\newblock {\em Random Structures Algorithms}, 38(3):324--349, 2011.

\bibitem[CGM21]{cryan2021modified}
Mary Cryan, Heng Guo, and Giorgos Mousa.
\newblock Modified log-{S}obolev inequalities for strongly log-concave
  distributions.
\newblock {\em Ann. Probab.}, 49(1):506--525, 2021.

\bibitem[CH16]{crosson2016quantum}
Elizabeth Crosson and Aram~W. Harrow.
\newblock Simulated quantum annealing can be exponentially faster than
  classical simulated annealing.
\newblock In {\em FOCS}, pages 714--723, 2016.

\bibitem[CLL19]{cai2019approximability}
Jin-Yi Cai, Tianyu Liu, and Pinyan Lu.
\newblock Approximability of the six-vertex model.
\newblock In {\em SODA}, pages 2248--2261, 2019.

\bibitem[CLV21]{chen2021optimal}
Zongchen Chen, Kuikui Liu, and Eric Vigoda.
\newblock Optimal mixing of {G}lauber dynamics: Entropy factorization via
  high-dimensional expansion.
\newblock In {\em STOC}, pages 1537--1550, 2021.

\bibitem[CMSS16]{caputo2016dynamics}
Pietro Caputo, Fabio Martinelli, Alistair Sinclair, and Alexandre Stauffer.
\newblock Dynamics of lattice triangulations on thin rectangles.
\newblock {\em Electron. J. Probab.}, 21:No. 29, 22, 2016.

\bibitem[DFK{\etalchar{+}}93]{dyer1993mildly}
Martin Dyer, Alan Frieze, Ravi Kannan, Ajai Kapoor, Ljubomir Perkovic, and
  Umesh Vazirani.
\newblock A mildly exponential time algorithm for approximating the number of
  solutions to a multidimensional knapsack problem.
\newblock {\em Combin. Probab. Comput.}, 2(3):271--284, 1993.

\bibitem[DGJ06]{dyer2006systematic}
Martin Dyer, Leslie~Ann Goldberg, and Mark Jerrum.
\newblock Systematic scan for sampling colorings.
\newblock {\em Ann. Appl. Probab.}, 16(1):185--230, 2006.

\bibitem[DGJ09]{Dyer2007MatrixNA}
Martin Dyer, Leslie~Ann Goldberg, and Mark Jerrum.
\newblock {Matrix norms and rapid mixing for spin systems}.
\newblock {\em Ann. Appl. Probab.}, 19(1):71 -- 107, 2009.

\bibitem[DGM21]{dyer2021counting}
Martin Dyer, Catherine Greenhill, and Haiko M\"{u}ller.
\newblock Counting independent sets in graphs with bounded bipartite pathwidth.
\newblock {\em Random Structures Algorithms}, 59(2):204--237, 2021.

\bibitem[DHJM21]{dyer2021polynomial}
Martin Dyer, Marc Heinrich, Mark Jerrum, and Haiko M{\"u}ller.
\newblock Polynomial-time approximation algorithms for the antiferromagnetic
  {I}sing model on line graphs.
\newblock {\em Combin. Probab. Comput.}, 30(6):905--921, 2021.

\bibitem[DHP20]{delcourt2020glauber}
Michelle Delcourt, Marc Heinrich, and Guillem Perarnau.
\newblock The {G}lauber dynamics for edge-colorings of trees.
\newblock {\em Random Structures Algorithms}, 57(4):1050--1076, 2020.

\bibitem[DJM17]{dyer2017switch}
Martin Dyer, Mark Jerrum, and Haiko M\"uller.
\newblock On the switch {M}arkov chain for perfect matchings.
\newblock {\em J. ACM}, 64(2):Art. 12, 33, 2017.

\bibitem[DS91]{diaconis1991geometric}
Persi Diaconis and Daniel Stroock.
\newblock Geometric bounds for eigenvalues of {M}arkov chains.
\newblock {\em Ann. Appl. Probab.}, 1(1):36--61, 1991.

\bibitem[DSC96]{Diaconis1996logarithmic}
P.~Diaconis and L.~Saloff-Coste.
\newblock Logarithmic {S}obolev inequalities for finite {M}arkov chains.
\newblock {\em Ann. Appl. Probab.}, 6(3):695--750, 1996.

\bibitem[EF23a]{david2023improved}
David Eppstein and Daniel Frishberg.
\newblock Improved mixing for the convex polygon triangulation flip walk.
\newblock In {\em ICALP}, volume 261 of {\em LIPIcs}, pages 56:1--56:17, 2023.

\bibitem[EF23b]{eppstein2023rapid}
David Eppstein and Daniel Frishberg.
\newblock Rapid mixing for the hardcore {G}lauber dynamics and other {M}arkov
  chains in bounded-treewidth graphs.
\newblock In {\em ISAAC}, volume 283 of {\em LIPIcs}, pages 30:1--30:13, 2023.

\bibitem[EGM{\etalchar{+}}22]{erdos2022mixing}
Péter~L. Erdős, Catherine Greenhill, Tamás~Róbert Mezei, István Miklós,
  Dániel Soltész, and Lajos Soukup.
\newblock The mixing time of switch {M}arkov chains: A unified approach.
\newblock {\em Eur. J. Comb.}, 99:103421, 2022.

\bibitem[FGMS06]{tomas2006local}
Tom{\'{a}}s Feder, Adam Guetz, Milena Mihail, and Amin Saberi.
\newblock A local switch {M}arkov chain on given degree graphs with application
  in connectivity of peer-to-peer networks.
\newblock In {\em FOCS}, pages 69--76, 2006.

\bibitem[FGW23]{feng2023swendsen}
Weiming Feng, Heng Guo, and Jiaheng Wang.
\newblock Swendsen-{W}ang dynamics for the ferromagnetic {I}sing model with
  external fields.
\newblock {\em Inform. and Comput.}, 294:Paper No. 105066, 34, 2023.

\bibitem[FIKP98]{fontes1998spectral}
L.~R.~G. Fontes, M.~Isopi, Y.~Kohayakawa, and P.~Picco.
\newblock The spectral gap of the {REM} under {M}etropolis dynamics.
\newblock {\em Ann. Appl. Probab.}, 8(3):917--943, 1998.

\bibitem[FM92]{feder1992balanced}
Tom{\'a}s Feder and Milena Mihail.
\newblock Balanced matroids.
\newblock In {\em STOC}, pages 26--38, 1992.

\bibitem[FP23]{frieze2023subexponential}
Alan Frieze and Wesley Pegden.
\newblock Subexponential mixing for partition chains on grid-like graphs.
\newblock In {\em SODA}, pages 3317--3329, 2023.

\bibitem[GJ17]{guo2017random}
Heng Guo and Mark Jerrum.
\newblock Random cluster dynamics for the {I}sing model is rapidly mixing.
\newblock In {\em SODA}, pages 1818--1827, 2017.

\bibitem[GJK10]{goldberg2010mixing}
Leslie~Ann Goldberg, Mark Jerrum, and Marek Karpinski.
\newblock The mixing time of {G}lauber dynamics for coloring regular trees.
\newblock {\em Random Structures Algorithms}, 36(4):464–476, 2010.

\bibitem[GJP03]{goldberg2003computational}
Leslie~Ann Goldberg, Mark Jerrum, and Mike Paterson.
\newblock The computational complexity of two-state spin systems.
\newblock {\em Random Structures Algorithms}, 23(2):133--154, 2003.

\bibitem[God93]{godsil1993algebratic}
Chris~D. Godsil.
\newblock {\em Algebraic combinatorics}.
\newblock Chapman and Hall mathematics series. Chapman and Hall, 1993.

\bibitem[Gre11]{greenhill2011polynomial}
Catherine Greenhill.
\newblock A polynomial bound on the mixing time of a {M}arkov chain for
  sampling regular directed graphs.
\newblock {\em Electron. J. Combin.}, 18(1):Paper 234, 49, 2011.

\bibitem[HLZ16]{huang2016canonical}
Lingxiao Huang, Pinyan Lu, and Chihao Zhang.
\newblock Canonical paths for {MCMC}: from art to science.
\newblock In {\em SODA}, pages 514--527, 2016.

\bibitem[HMMR05]{hoory2005simple}
Shlomo Hoory, Avner Magen, Steven Myers, and Charles Rackoff.
\newblock Simple permutations mix well.
\newblock {\em Theoret. Comput. Sci.}, 348(2-3):251--261, 2005.

\bibitem[Jer03]{jerrum2003counting}
Mark Jerrum.
\newblock {\em Counting, sampling and integrating: algorithms and complexity}.
\newblock Lectures in Mathematics ETH Z\"{u}rich. Birkh\"{a}user Verlag, Basel,
  2003.

\bibitem[JS89]{jerrum1989approximating}
Mark Jerrum and Alistair Sinclair.
\newblock Approximating the permanent.
\newblock {\em {SIAM} J. Comput.}, 18(6):1149--1178, 1989.

\bibitem[JS93]{jerrum1993polynomial}
Mark Jerrum and Alistair Sinclair.
\newblock Polynomial-time approximation algorithms for the {I}sing model.
\newblock {\em SIAM J. Comput.}, 22(5):1087--1116, 1993.

\bibitem[JSV04]{jerrum2004polynomial}
Mark Jerrum, Alistair Sinclair, and Eric Vigoda.
\newblock A polynomial-time approximation algorithm for the permanent of a
  matrix with nonnegative entries.
\newblock {\em J. ACM}, 51(4):671--697, 2004.

\bibitem[Kai04]{volker2004expansion}
Volker Kaibel.
\newblock On the expansion of graphs of 0/1-polytopes.
\newblock In Martin Gr{\"{o}}tschel, editor, {\em The Sharpest Cut}, volume~4
  of {\em {MPS-SIAM} series on optimization}, pages 199--216. {SIAM}, 2004.

\bibitem[KR01]{anil2001coupling}
V.S.~Anil Kumar and Hariharan Ramesh.
\newblock Coupling vs. conductance for the {J}errum--{S}inclair chain.
\newblock {\em Random Structures Algorithms}, 18(1):1--17, 2001.

\bibitem[KTV99]{kannan1999simple}
Ravi Kannan, Prasad Tetali, and Santosh Vempala.
\newblock Simple {M}arkov-chain algorithms for generating bipartite graphs and
  tournaments.
\newblock {\em Random Structures Algorithms}, 14(4):293--308, 1999.

\bibitem[LMP09]{lucier2009glauber}
Brendan Lucier, Michael Molloy, and Yuval Peres.
\newblock The {G}lauber dynamics for colourings of bounded degree trees.
\newblock In {\em Approximation, randomization, and combinatorial
  optimization}, volume 5687 of {\em LNCS}, pages 631--645. Springer, Berlin,
  2009.

\bibitem[LP17]{levin2017markov}
David~A Levin and Yuval Peres.
\newblock {\em Markov chains and mixing times}, volume 107.
\newblock American Mathematical Soc., 2017.

\bibitem[Mar94]{martinelli1994two}
F.~Martinelli.
\newblock On the two-dimensional dynamical {I}sing model in the phase
  coexistence region.
\newblock {\em J. Statist. Phys.}, 76(5-6):1179--1246, 1994.

\bibitem[McQ13]{mcquillan2013approximating}
Colin McQuillan.
\newblock Approximating {H}olant problems by winding.
\newblock {\em arXiv preprint arXiv:1301.2880}, 2013.

\bibitem[MR10]{MR10}
Rajeev Motwani and Prabhakar Raghavan.
\newblock {\em Randomized algorithms}, page~12.
\newblock Chapman \& Hall/CRC, 2 edition, 2010.

\bibitem[MS04]{morris2004random}
Ben Morris and Alistair Sinclair.
\newblock Random walks on truncated cubes and sampling 0-1 knapsack solutions.
\newblock {\em SIAM J. Comput.}, 34(1):195--226, 2004.

\bibitem[OTZ24]{olesker2024geometric}
Sam Olesker-Taylor and Luca Zanetti.
\newblock Geometric bounds on the fastest mixing {M}arkov chain.
\newblock {\em Probab. Theory Relat. Fields}, 188(3):1017--1062, 2024.

\bibitem[Sin92]{sinclair1992improved}
Alistair Sinclair.
\newblock Improved bounds for mixing rates of {M}arkov chains and
  multicommodity flow.
\newblock {\em Combin. Probab. Comput.}, 1(4):351--370, 1992.

\bibitem[YWJ16]{yang2016computational}
Yun Yang, Martin~J. Wainwright, and Michael~I. Jordan.
\newblock On the computational complexity of high-dimensional {B}ayesian
  variable selection.
\newblock {\em Ann. Statist.}, 44(6):2497--2532, 2016.

\end{thebibliography}

\appendix

\pagebreak

\appendix

\section{A Generalization of Our Approach to Localization Schemes}\label{sec:localization-scheme}

%The mechanism in \Cref{thm:general-results} can be generalized to general localization schemes (a.k.a. noising or denoising process).
In the proof of \Cref{thm:general-results}, random pinnings are introduced to the distribution, progressively transforming it into a Dirac measure.
The concept of \emph{localization schemes}, introduced in \cite{chen2022localization}, generalizes random pinnings to more abstract localization processes.
This abstraction generalizes the local-to-global argument developed in \cite{anari2019logconcave, cryan2021modified}.
Inspired by this, we present a similar generalization of our local-to-global principle, as stated in \Cref{thm:general-results}, using localization schemes.

\paragraph{Noising and Denoising Process}
Let $(X_t)_{t\in [0,1]}$ be a Markov process that takes values in $\Omega$, with $X_0$ drawn from the target distribution $\mu$. The process starts at $X_0$ and progressively removes randomness until the law of $X_1$ becomes a Dirac distribution. We refer to this process $(X_t)_{t\in [0,1]}$ as a \emph{noising process}.
The \emph{denoising process} $(Y_t)_{t\in [0,1]}$ is then defined as the time-reversal of $(X_t)_{t\in [0,1]}$, i.e., $Y_t \triangleq X_{1-t}$.
It is straightforward to verify that $(Y_t)_{t\in [0,1]}$ is also a Markov process.
Since $Y_t \overset{\-d}{=} X_{1-t}$, we know that $Y_1 \sim \mu$ and the law of $Y_0$ is a Dirac distribution.
Next, let $(Q_t)_{t\in [0,1]}$ be a process of transition matrices (row stochastic matrices) in $\=R^{\Omega\times \Omega}$ that adapts to the same filtration as $(Y_t)_{t\in[0,1]}$.
In particular, the (random) stationary distribution of $Q_t$ is the conditional law of $Y_1$ given $Y_t$ (with $Q_0$ having stationary distribution $\mu$). 
We refer to $(Q_t)_{t\in [0,1]}$ as the associated \emph{chain process} with respect to $(Y_t)_{t\in [0,1]}$.

Typical examples of localization schemes include coordinate-by-coordinate localization and negative-field localization. We apply the coordinate-by-coordinate localization from \Cref{thm:general-results} in this framework.

\begin{example}[coordinate-by-coordinate localization]
    Let $\mu$ be a distribution supported over $[q]^n$, and let $\square$ be an auxiliary spin.
    The noising process $(X_t)_{t \in [0,1]}$ with initial distribution $X_0 \sim \mu$ is defined as follows. Initially, we select a total order $r_1 < r_2 < \ldots < r_n$ of all sites $[n]$ uniformly at random. For all $t \in [0,1]$, $X_t$ is obtained by replacing the spin of the first 
$\lfloor nt\rfloor$ sites, $r_1,r_2,\ldots,r_{\lfloor nt\rfloor}$ in $X_0$ with $\square$.
\end{example}
% {\color{red} todo, add examples here, discuss the name of $(Q_t)$}

\begin{example}[negative-field localization]
Let $\mu$ be a distribution supported over $2^{[n]}$. The noising process $(X_t)_{t \in [0,1]}$ with initial distribution $X_0 \sim \mu$ is defined as follows. Initially, we assign each site $i \in [n]$ with an independent random variable $r_i$ uniformly distributed on $[0,1]$. For all $t \in [0,1]$, $X_t$ is the set of sites $i \in X_0$ with $r_i \ge t$.
\end{example}

We can generalize the local functional inequalities defined in~\Cref{def:HS-route}.
\begin{definition}[local functional inequalities]
  Let $\mu$ be a distribution supported on $\Omega$ and let $(Y_t)_{t\in [0,1]}$ be a denoising process with $Y_1 \sim \mu$ and let $(Q_t)_{t\in [0,1]}$ be the chain process with respect to $(Y_t)_{t\in [0,1]}$. Let $\alpha:[0,1] \to \=R_{\geq 0}$ be a function.
  \begin{itemize}
  \item (local Poincar\'e inequality) We say that $((Y_t),(Q_t))$ satisfies the \emph{$\alpha$-local Poincar\'e inequality} if 
  \begin{align} \label{eq:def-local-poincare-general}
    \alpha(t) \cdot \lim_{h\to 0^+} \frac{1}{h} \Var{\E{f(Y_1) \mid Y_{t+h}} \mid Y_t} \leq  \+E_{Q_t}(f,f),\quad \forall f: \Omega(\mu) \to \mathbb{R}.
  \end{align}
  \item (local log-Sobolev inequality) We say $((Y_t),(Q_t))$ satisfies the \emph{$\alpha$-local log-Sobolev inequality} if 
  \begin{align} \label{eq:def-local-log-sobolev-general}
    \alpha(t) \cdot \lim_{h\to 0^+} \frac{1}{h} \Ent{\E{f^2(Y_1) \mid Y_{t+h}} \mid Y_t} \leq  \+E_{Q_t}(f,f),\quad \forall f: \Omega(\mu) \to \mathbb{R}.
  \end{align}
  \end{itemize}
\end{definition}
\begin{remark}
    For coordinate-by-coordinate localization, the variance $\Var{\E{f(Y_1) \mid Y_{t+h}} \mid Y_t}$ or entropy $\Ent{\E{f^2(Y_1) \mid Y_{t+h}} \mid Y_t}$ is not differentiable at $t = \frac{k}{n}$ for all $1 \le k \le n$. In this case, we use the finite difference instead, which is consistent with our previous definitions (\Cref{def:HS-route}).
\end{remark}

We can also generalize the notion of concave Dirichlet forms in terms of $(Q_t)_{t\in [0,1]}$.
\begin{definition}[concave Dirichlet forms]\label{def:concave-Dirichlet-forms-new}
  We say the process $(Q_t)_{t\in [0,1]}$ have has \emph{concave Dirichlet forms} if for any $t \in [0,1]$ and any function $f:\Omega \to \mathbb{R}$,
  \begin{align*}
    \E{\+E_{Q_t}(f,f)} \leq \+E_{Q_0}(f,f).
  \end{align*}
\end{definition}

The following presents a generalization of the ``local-to-global'' theorem (\Cref{thm:general-results}) for functional inequalities, extending it to the general context of localization schemes.
\begin{theorem}
  Let $\mu$ be a distribution over $\Omega$.
  Let $(Y_t)_{t\in [0,1]}$ be a denoising process for $\mu$ and let $(Q_t)_{t\in [0,1]}$ be the chain process with respect to $(Y_t)_{t\in [0,1]}$ such that $(Q_t)_{t\in [0,1]}$ has {concave Dirichlet forms}. Let $\alpha:[0,1] \to \=R_{\geq 0}$ be a function.
  \begin{itemize}
  \item If $((Y_t),(Q_t))$ satisfies the $\alpha$-local Poincar\'e inequality, then the Poincar\'e constant $\gamma(Q_0)$ satisfies
    \begin{align*}
      \gamma(Q_0) \geq \tp{\int_0^1 \alpha^{-1}(t) \;\-d t}^{-1}.
    \end{align*}
  \item If $((Y_t),(Q_t))$ satisfies the $\alpha$-local log-Sobolev inequality, then the log-Sobolev constant $\rho(Q_0)$ satisfies
    \begin{align*}
      \rho(Q_0) \geq \tp{\int_0^1 \alpha^{-1}(t) \;\-d t}^{-1}.
    \end{align*}
  \end{itemize}
\end{theorem}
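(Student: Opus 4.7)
The plan is to establish a differential inequality for the conditional variance $V(t) \triangleq \E{\Var{f(Y_1) \mid Y_t}}$ along the denoising process, and then integrate it over $[0,1]$. Since $Y_0$ has a Dirac law and $Y_1 \sim \mu$, we have $V(0) = \Var[\mu]{f}$ and $V(1) = 0$, so the desired Poincar\'e inequality reduces to bounding $V(0) - V(1)$ by $\+E_{Q_0}(f,f)\cdot \int_0^1 \alpha^{-1}(t)\,dt$.

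First I would apply the law of total variance to the nested $\sigma$-algebras $\sigma(Y_t) \subseteq \sigma(Y_{t+h})$, giving the identity
\begin{align*}
V(t) - V(t+h) = \E{\Var{\E{f(Y_1) \mid Y_{t+h}} \mid Y_t}}.
\end{align*}
Dividing by $h$, passing to the limit $h \to 0^+$, and interchanging the limit with $\E{\cdot}$, this yields $-V'_+(t) = \E{W_t}$, where $W_t \triangleq \lim_{h\to 0^+} h^{-1}\,\Var{\E{f(Y_1)\mid Y_{t+h}}\mid Y_t}$. The $\alpha$-local Poincar\'e inequality asserts $\alpha(t)\, W_t \le \+E_{Q_t}(f,f)$ pointwise; taking expectations and invoking the concavity of Dirichlet forms from \Cref{def:concave-Dirichlet-forms-new} to replace $\E{\+E_{Q_t}(f,f)}$ by $\+E_{Q_0}(f,f)$ gives the differential inequality
\begin{align*}
-V'_+(t) \;\le\; \frac{\+E_{Q_0}(f,f)}{\alpha(t)}.
\end{align*}

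Finally, since $V$ is non-increasing and hence of bounded variation on $[0,1]$, I would integrate this inequality on $[0,1]$ using the fundamental theorem of calculus to obtain
\begin{align*}
\Var[\mu]{f} = V(0) - V(1) \;\le\; \+E_{Q_0}(f,f)\int_0^1 \frac{dt}{\alpha(t)},
\end{align*}
which is exactly the required Poincar\'e bound. The log-Sobolev case follows by the identical template with $E(t) \triangleq \E{\Ent{f^2(Y_1) \mid Y_t}}$ replacing $V(t)$, invoking the law of total entropy in place of total variance and the $\alpha$-local log-Sobolev inequality in place of its Poincar\'e counterpart; concavity of Dirichlet forms enters identically, and the integration uses $E(0) = \Ent[\mu]{f^2}$ and $E(1) = 0$.

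The main technical obstacle I anticipate is measure-theoretic regularity: justifying the interchange of $\lim_{h \to 0^+}$ with $\E{\cdot}$ when extracting $-V'_+(t)$, and recovering $V(0) - V(1)$ from the one-sided Dini derivative (which strictly speaking requires $V$ to be absolutely continuous rather than merely of bounded variation). Both should follow from mild regularity assumptions on the denoising process $(Y_t)_{t \in [0,1]}$; in the discrete coordinate-by-coordinate localization this step reduces to the finite telescoping sum already used in the proof of \Cref{thm:general-results}, with $V$ piecewise constant between successive pinning times.
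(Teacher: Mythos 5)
Your proposal is correct and follows essentially the same route as the paper's proof: the same functional $\E{\Var{f(Y_1)\mid Y_t}}$, the same law-of-total-variance identity between times $t$ and $t+h$, the same application of the local inequality followed by concavity of the Dirichlet forms, and the same integration using the boundary values at $t=0$ and $t=1$, with the log-Sobolev case handled by the identical template. The measure-theoretic regularity issues you flag (interchanging the limit with the expectation, and recovering the increment from the one-sided derivative) are indeed passed over silently in the paper as well, so your proposal is if anything slightly more careful on that point.
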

\begin{proof}
  We give the proof for the Poincar\'e constant $\gamma(Q_0)$, the log-Sobolev constant follows by almost the same proof.
  Let $\oVar(t) \triangleq \E{\Var{f(Y_1) \mid Y_t}}$ as a function of $t \in [0,1]$.
  Hence, the derivative of this function can be bounded by
  \begin{align*}
    -\oVar(t)'
    &= \lim_{h\to 0^+} \frac{1}{h} \tp{\oVar(t) - \oVar(t+h)} \\
    &=  \lim_{h\to 0^+} \frac{1}{h} \E{ \Var{f(Y_1) \mid Y_t} - \E{\Var{f(Y_1) \mid Y_{t+h}} \mid Y_t} } \tag{law of total expectation} \\
    &=  \lim_{h\to 0^+} \frac{1}{h} \E{ \Var{\E{f(Y_1) \mid Y_{t+h}} \mid Y_t} } \tag{law of total variance} \\
    &=  \E{\lim_{h\to 0^+} \frac{1}{h} \Var{\E{f(Y_1) \mid Y_{t+h}} \mid Y_t} } \\
    &\leq  \alpha^{-1}(t) \E{\+E_{Q_t}(f,f)} \tag{$\alpha$-local Poincar\'e inequality} \\
    &\leq  \alpha^{-1}(t) \+E_{Q_0}(f,f) \tag{concave Dirichlet forms}.
  \end{align*}
  We finish the proof by taking an integration at both side and get
  \begin{align*}
    \Var{f(Y_1)} &= \oVar(0) - \oVar(1) = - \int_0^1 \oVar(t)' \-d t \leq \tp{\int_0^1 \alpha^{-1}(t) \-d t} \+E_{Q_0}(f,f). \qedhere
  \end{align*}
\end{proof}

\end{document}